\documentclass[11pt,a4paper]{article}
\usepackage{latexsym,amssymb,amsmath,amsthm,amsfonts,enumerate,verbatim,xspace,
exscale}

\input xy
\xyoption{all} \CompileMatrices \UseComputerModernTips

\parskip 1mm
\usepackage{graphicx,tabularx}
\usepackage{amsmath,amsbsy,amsfonts,amssymb}
\usepackage{color}
\usepackage{amsthm}
\usepackage{mathtools}
\usepackage{lineno}

\addtolength{\textwidth}{45mm}
\addtolength{\hoffset}{-25mm}

\title{Momentum map reduction for nonholonomic systems}

\date{}

\author{P. Balseiro  \footnotemark}
\author{{\sc{Paula Balseiro}\thanks{
         Universidade Federal Fluminense, Instituto de Matem\'atica e Estat\'\i stica, Rua Mario Santos Braga S/N, 24020-140,
        Niteroi, Rio de Janeiro, Brazil.}
        }  \ \
        {\sc{Maria Eugenia Garcia}\thanks{CMaLP - Centro de Matem\'atica de La Plata, Universidad Nacional de La Plata,  Calles 50 y 115, La Plata, Argentina.}
\thanks{Facultad de Ciencias Exactas, Universidad Nacional de La Plata, Calles 50 y 115, La Plata, Argentina. } } \ \ 
{\sc{Cora Tori}\footnotemark[2]
\thanks{Facultad de Ingenier\'ia, Universidad Nacional de La Plata, Calles 48 y 116, La Plata, Argentina. } } \ \ 
{\sc{Marcela Zuccalli}\footnotemark[2] \footnotemark[3]
 } 
}


\theoremstyle{plain}
\newtheorem{theorem}{Theorem}[section]
\newtheorem{lemma}[theorem]{Lemma}
\newtheorem{proposition}[theorem]{Proposition}

\newtheorem*{theorem*}{Theorem}
\newtheorem{remarkth}[theorem]{Remark}

\theoremstyle{definition}
\newtheorem{definition}[theorem]{Definition}

\newenvironment{remark}{\begin{remarkth}\upshape}{\hfill$\diamond$\end{remarkth}}

\newcommand{\g}{\mathfrak{g}}



\def\W{\mathcal{W}}
\def\M{\mathcal{M}}

\def\S{\mathcal{S}}
\def\C{\mathcal{C}}

\def\R{\mathbb{R}}

\def\red{{\mbox{\tiny{red}}}}
\def\nh{{\mbox{\tiny{nh}}}}

\def\B{{\mbox{\tiny{$B$}}}}
\def\BB{{\mbox{\tiny{${\bf B}$}}}}
\def\subGW{{\mbox{\tiny{$G_{\!W}$}}}}
\def\subW{{\mbox{\tiny{$W$}}}}

\def\subS{{\mbox{\tiny{$S$}}}}
\def\subM{{\mbox{\tiny{$\M$}}}}
\def\subQ{{\mbox{\tiny{$Q$}}}}

\def\vecOm{\boldsymbol{\Omega}}

\newcommand{\SO}{\mbox{$\textup{SO}$}}
\newcommand{\SE}{\mbox{$\textup{SE}$}}

\def\vecL{\boldsymbol{\lambda}}

\def\vecgamma{\boldsymbol{\gamma}}
\def\vecalpha{\boldsymbol{\alpha}}
\def\vecbeta{\boldsymbol{\beta}}

\def\vecs{\boldsymbol{s}}
\def\vecx{{\bf x}}

\begin{document}

\maketitle

\begin{abstract}

This paper presents a reduction procedure for nonholonomic systems admitting suitable types of symmetries and conserved quantities. The full procedure contains two steps. The first (simple) step results in a Chaplygin system, described by an almost symplectic structure, carrying additional symmetries.  The focus of this paper is on the second step, which consists of a Marsden-Weinstein--type reduction that generalizes constructions in  \cite{BalFern, Cortes:Book}. The almost symplectic manifolds obtained in the second step are proven to coincide with the leaves of the reduced nonholonomic brackets defined in \cite{BY}. We illustrate our construction with several classical examples.

\end{abstract}

\section{Introduction}

A basic fact in symplectic geometry, widely used in geometric mechanics, is that a symplectic manifold carrying a (free and proper) symplectic action gives rise to a Poisson bracket on the quotient manifold. 
Moreover, if the action is Hamiltonian then the Marsden-Weinstein reduced spaces of the symplectic manifold, at different values of the momentum map, coincide with unions of symplectic leaves of the quotient Poisson structure. Given an invariant Hamiltonian function, the dynamics in the quotient Poisson manifold restrict to leaves and hence can be studied by means of Marsden-Weinstein reduction. This paper presents some analogs of these results in the context of nonholonomic systems.

The study of nonholonomic systems with symmetries has a vast literature, see e.g. \cite{Bloch:Book, Cortes:Book, Book:CDS}. In our set-up, a nonholonomic system is geometrically described by an {\em almost} Poisson 
structure (the lack of integrability being a consequence of the constraints in velocities \cite{BS93,IdLMdD}) along with a Hamiltonian function. In the presence of symmetries, it is shown in \cite{BY,GNM} that, if the system admits suitable conserved quantities (called {\em horizontal gauge momenta} \cite{BGM,FGS-08}), then there is a modification of the almost Poisson bracket that still codifies the nonholonomic dynamics and has the following key property:  the corresponding reduced bracket on the quotient manifold, though generally not Poisson, gives rise to a foliation by almost symplectic leaves that are tangent to the reduced nonholonomic vector field. Our goal in this paper is to study, in this context, a Marsden-Weinstein--type reduction that produces these almost symplectic leaves.
This procedure extends the ones in  \cite{BalFern,BS93,Cortes:Book} in that we allow for more general conserved quantities as well as modifications of the almost symplectic structure (by  {\it dynamical gauge transformations}) prior to reduction.

We now explain the framework and results in this article in more detail. A nonholonomic system is determined by a configuration manifold $Q$, a Lagrangian $L$ and a non-integrable distribution $D$ on $Q$ describing the permitted velocities. The submanifold $\M \subset T^*Q$ given by the image of the Legendre transformation of $D$ has a natural almost Poisson bracket  \cite{IdLMdD,Marle,SchaftMaschke1994}, called the {\it nonholonomic bracket}, and a Hamiltonian function $H_\subM$ defined by $L$. The nonholonomic dynamics on $\M$ is determined by the ``hamiltonian'' vector field of $H_\subM$ with respect to the nonholonomic bracket, denoted by 
$X_\nh$. If the nonholonomic system has symmetries given by the (free and proper) action of a Lie group $G$, then the nonholonomic bracket and the dynamics can be reduced to the quotient manifold  $\M/G$.


In our set-up, we assume that $G$ admits a closed normal subgroup $G_\subW$ so that the nonholonomic system is $G_\subW$-Chaplygin \cite{Koiller}. A consequence of this fact is that, setting $\widetilde Q:=Q/G_\subW$,
the nonholonomic vector field descends to a vector field $\widetilde X_{\nh}$ on the cotangent bundle $T^*\widetilde Q$,  which is the hamiltonian vector field of the reduced hamiltonian function $\widetilde H$ with respect to a natural {\em almost} symplectic 2-form $\widetilde \Omega$ on $T^*\widetilde Q$ (coming from the nonholonomic bracket and defined in detail in Sec.~\ref{Ss:VerticalSymmetries}).  This is the first step in our reduction procedure.

 Our goal in this paper is to explain a further reduction of the nonholonomic system $(T^*\widetilde Q, \widetilde \Omega, \widetilde H)$ making use of the canonical momentum map for the action of the remaining Lie group $F:=G/G_\subW$ on the cotangent bundle $T^*\widetilde Q$. Here a first difficulty is that the vector field $\widetilde X_{\nh}$ is not tangent to the momentum level sets. We fix this problem by using conserved quantities of the system to find suitable $F$-invariant submanifolds that substitute the momentum level sets in the reduction procedure.  A second difficulty is that $\widetilde \Omega$ is not basic on these $F$-invariant submanifolds. This issue is resolved through a suitable modification of $\widetilde \Omega$ by a special 2-form. We elaborate on these two key points below.

 \smallskip 

 \noindent {\em The $F$-invariant submanifolds carrying the nonholonomic dynamics}. A central assumption in this work is the existence of the maximum possible amount of certain types of first integrals --horizontal gauge momenta-- defined by the evaluation of the canonical momentum map $\widetilde{J}$ on $T^*\widetilde Q$ on given {\em $\mathfrak{f}$-valued functions} $\eta_i$ on $\widetilde Q$, for $i=1,\ldots, k $, where $\mathfrak{f}$ is the Lie algebra of $F$.
We show that, for $\mathfrak{f}^*$-valued functions $\mu = \sum_i c_i \mu^i$, where $\mu^i$'s are dual to the $\eta_i$'s and $c_i \in \mathbb{R}$, we obtain $F$-invariant submanifolds 
$$
\widetilde J^{-1}(\mu):=\{ \alpha_x \in T^*\widetilde Q\,|\, \widetilde{J}(\alpha_x)=\mu(x) \} \subset T^*\widetilde Q
$$ 
which are $F$-invariant and foliate $T^*\widetilde Q$ in such a way that $\widetilde X_\nh$ is always tangent to them, see Sec.~\ref{Ss:Reduction}, Prop.~\ref{P:LevelSets}.

\smallskip

\noindent{\em The modification of $\widetilde \Omega$}. What is behind the fact that the pull-back of $\widetilde\Omega$ to $\widetilde J^{-1}(\mu)$ does not descend to  the quotient $\widetilde J^{-1}(\mu)/F$ is that the infinitesimal generator\footnote{The infinitesimal generator of a $\mathfrak{f}$-valued function $\eta$ is defined at $\alpha_x \in T^*_x\widetilde{Q}$ as the infinitesimal generator of $\eta(x)\in \mathfrak{f}$.} $(\eta_i)_{T^*\widetilde{Q}}$ of $\eta_i$  is not necessarily the ``hamiltonian'' vector field associated to the horizontal gauge momentum $\widetilde J_{\eta_i}$.  Following \cite{BY}, we define a 2-form $\widetilde{\bf B}$ on $T^*\widetilde Q$ that satisfies 
\begin{equation}\label{Eq:Intro}
{\bf i}_{(\eta_i)_{T^*\widetilde Q}} (\widetilde \Omega + \widetilde{\bf B}) = d\widetilde J_{\eta_i},
\end{equation}
as well as the {\it dynamical condition} ${\bf i}_{\widetilde X_{\nh}}\widetilde{\bf B} =0$. Note that, by this last condition,
our nonholonomic system is equivalently described by the triple $(T^*\widetilde Q,\widetilde \Omega + \widetilde{\bf B}, \widetilde H)$. We prove in Theorem \ref{T:MW} that the pull-back of the 2-form $\widetilde \Omega+\widetilde{\bf B}$ to the manifold $J^{-1}(\mu)$ is basic and hence defines an almost symplectic form $\omega^\BB_\mu$ on $\widetilde J^{-1}(\mu)/F$. 

\smallskip

Let us stress that an important point in our construction is that, in general, the $\mu$'s are suitable $\mathfrak{f}^*$-valued {\em functions},
not just fixed elements of $\mathfrak{f}^*$ as in the usual hamiltonian case. This is essential for the reduction to be compatible with the nonholonomic dynamics. Comparing with previous constructions, we
note that in \cite{BalFern,BS93} the conserved quantities are assumed to be defined by fixed elements in the Lie algebra, while in \cite{Cortes:Book} the reduction procedure considers $\mathfrak{f}$-valued function but, due to the lack of the 2-form $\widetilde{\bf B}$, it was not possible to define a reduced 2-form on the quotients $\widetilde J^{-1}(\mu)/F$. 

The 2-form $\widetilde{\bf B}$ was defined in \cite{BY,GNM} in the context of hamiltonization, and its explicit expression permits a better understanding of the resulting ``Marsden-Weinstein'' reduced spaces even in the specific cases studied in previous works. In particular, inspired by the hamiltonian case \cite{AbrMarsden,Marsdenetal} and  using  the {\it shift-trick}, we show that the  almost symplectic manifolds $(\widetilde J^{-1}(\mu)/F, \omega_\mu^\BB)$ are diffeomorphic to the  manifold $T^*(\widetilde Q/F)$ with its canonical symplectic 2-form modified by a term $\widehat{\mathcal{B}}_\mu$ that only depends on the 2-form $\widetilde{\bf B}$, see Theorem~\ref{T:mu-identification}.

In Sec.~\ref{S:NHBracket},  we relate the almost symplectic reduced spaces obtained in our construction
with an almost Poisson bracket on the $\M/G$ given by the reduction of a modification of the non-holonomic bracket on $\M$ considered in \cite{BY,GNM}.  As shown in these papers,  when a nonholonomic system  admits the maximum amount of horizontal gauge momenta, the {\it gauge transformation} of the nonholonomic bracket on $\M$ by a suitable 2-form ${\bf B}$ generates a new bracket  whose reduction by symmetries gives an almost Poisson bracket $\{\cdot, \cdot\}_{\red}^\BB$ on $\M/G$ that admits an almost symplectic foliation. We show in Theorem \ref{T:LeavesOfBracket} that its leaves agree with the connected components of the almost symplectic reduced spaces of Theorem.~\ref{T:MW}. Having a Marsden-Weinstein--type description of the almost symplectic foliation associated to the reduced bracket $\{\cdot, \cdot\}_\red^\BB$ is useful to study the dynamics restricted to leaves, to find conformal factors for the reduced brackets $\{\cdot, \cdot\}_{\red}^\BB$, as well as to study Routh reduction, integrability, Hamilton-Jacobi theory and even numerical methods (e.g. variational integrators), see \cite{Bol-Bor-Ma-11,Bol-Kil-Kaz-14,Cortes:Book,Cortes-Martinez-01,dL-MdD-SM-04,FGS-05,FTZ,MaRaSche,Oscar-11}.

Besides the Chaplygin ball (that was also treated in \cite{BalFern}), in Sec.~\ref{S:Examples} we study many other examples that could not be treated in \cite{BalFern,BS93,Cortes:Book}, starting from the simple example of the nonholonomic particle, the snakeboard \cite{BalSan, BKMM} and the more sophisticated one describing a solid of revolution rolling on plane \cite{Book:CDS, BalSolids,GNM}.

{\bf Acknowledgment:}  P.B. thanks to Facultad de Ciencias Exactas UNLP (Argentina) for the financial support and the hospitality during her visits.

\tableofcontents

\section{Nonholonomic systems and first step reduction}\label{S:Prelim}

In this section we will define the basic concepts around nonholonomic systems with symmetries and, in particular, the vertical symmetry condition which permits the reduction in two steps. 

\subsection{Nonholonomic systems with symmetries}\label{Ss:NHSys}

A nonholonomic system is a mechanical system on a manifold $Q$ with a lagrangian function $L:TQ\to \R$ and (linear) constraints in the velocities.  The permitted velocities define a (constant rank) nonintegrable distribution $D$ on $Q$.  
Throughout this paper we assume that the lagrangian $L$ is of mechanical type: $L= \frac{1}{2}\kappa - U$ where $\kappa$ is the kinetic energy metric and $U$ the potential.

Next, we write the nonholonomic equations of motion in the hamiltonian framework following \cite{BS93}.  The Legendre transformation $\kappa^\sharp :TQ\to T^*Q$ given, at $X,Y\in TQ$, by $\kappa^\sharp(X)(Y): = \kappa(X,Y)$, defines the submanifold $\M$ of $T^*Q$ by $\M:=\kappa^\sharp(D)$.  Since the Legendre transformation is linear on the fibers, then $\tau_\subM := \tau|_\M :\M\to Q$ is a vector subbundle of the canonical vector bundle $\tau: T^*Q\to Q$.
The nonintegrable distribution $D$ induces a (nonintegrable) distribution $\C$ on $\M$, with fiber at each $m\in \M$, given by 
\begin{equation}\label{Eq:C}
\C_m :=\{ v_m \in T_m\M \ : \ T\tau_\subM(v_m) \in D_q, \ \ \mbox{for}\ q = \tau_\subM(m) \in Q\}.
\end{equation}
Let $H: T^*Q\to \R$ be the hamiltonian function associated to the lagrangian $L$ and $\Omega_\subQ$ the canonical 2-form on $T^*Q$. Considering $\iota:\M\to T^*Q$ the natural inclusion, we denote by $\Omega_\subM:=\iota^*\Omega_\subQ$ and $H_\subM:=\iota^*H$ the pull backs of $\Omega_\subQ$ and $H$ to the submanifold $\M$, respectively. 
Following \cite{BS93}, the nonholonomic dynamics is described by the integral curves of the vector field $X_\nh$ --called the {\it nonholonomic vector field}-- defined on $\M$ given by 
$$
{\bf i}_{X_\nh} \Omega_\subM|_\C = dH_\subM|_\C,
$$
where $(\cdot)|_\C$ is the point-wise restriction to $\C$. During this paper, we will use the triple $(\M, \Omega_\subM|_\C, H_\subM)$ to define a nonholonomic system.

Consider a free and proper action of a Lie group $G$ on $Q$.  This action is a {\it symmetry} of the nonholonomic system if its tangent lift leaves the lagrangian $L$ and the distribution $D$ invariant, or equivalently, if the cotangent lift of the action leaves $\M$ and $H$ invariant. Therefore, there is a well defined $G$-action on the manifold $\M$ denoted by $\varPsi:G\times\M \to \M$.  

If the nonholonomic system admits a $G$-symmetry, the nonholonomic vector field $X_\nh$ is $G$-invariant as well: $T\varPsi_g(X_\nh(m)) = X_\nh(\varPsi_g(m))$ for all $m\in\M$, $g\in G$. Thus, the vector field $X_\nh$ descends to the quotient manifold $\M/G$, defining the \emph{reduced nonholonomic vector field} $X_\red$ given by 
$$
X_\red := T\rho(X_\nh),
$$ 
where $\rho:\M\to \M/G$ is the orbit projection.

In what follows, we will consider symmetries of the nonholonomic system satisfying the so-called {\it dimension assumption} \cite{BKMM}, that is, for each $q\in Q$,
$$
T_q Q= D_q + V_q,
$$
where $V_q$ is the tangent space to the $G$-orbit on $Q$ at $q$. As usual, we denote by $S$ the distribution on $Q$ defined, for each $q\in Q$, by $S_q:=D_q\cap V_q$. Since the $G$-action is free, then both distributions $S$ and $V$ have constant rank. 

Let $\g$ be the Lie algebra associated to the Lie group $G$ and consider the trivial bundle $Q\times\g\to Q$
whose sections can be thought as $\g$-valued functions,  that is, if $\xi\in \Gamma(Q\times\g)$, at each $q\in Q$, then $\xi_q = \xi(q)\in \g$.
Then, the distribution $S$ induces the subbundle $\g_S\to Q$ of $Q\times\g \to Q$ with fiber, at $q\in Q$, given by
$$
(\g_S)_q:=\{ \xi_q\in \g \ : \ (\xi_q)_\subQ (q) \in S_q\},
$$
where $(\xi_q)_\subQ(q)$ is the infinitesimal generator of the element $\xi_q\in \g$ at $q$ (see e.g. \cite{BKMM}).  For short, we may denote by $\xi_\subQ(q):= (\xi_q)_\subQ(q)$. The bundle $\g_S\to Q$ has the same rank as the distribution $S$, i.e., 
$$
k:=\textup{rank}(S) = \textup{rank}(\g_S).
$$
Equivalently, the dimension assumption can be written, for each $m\in \M$, as $T_m\M = \C_m + \mathcal{V}_m$ where $\mathcal{V}_m$ is the tangent to the orbit associated to the $G$-action on $\M$ at $m$. Analogously, the distribution $\mathcal{S}$ on $\M$ is defined, at each $m\in \M$, by $\mathcal{S}_m := \C_m \cap \mathcal{V}_m$. Therefore, if $\xi\in\Gamma(\g_S)$ then its infinitesimal generator on $\M$ satisfies that $\xi_\subM(m)\in \mathcal{S}_m$ (in this case, $\xi_\subM(m) := (\xi_q)_\subM (m)$, where $q=\tau_\subM(m)$).

It is well known that, even in the presence of symmetries, the canonical momentum map does not generate conserved functions because it does not take into account the constraints. In order to consider the constraints, the {\it nonholonomic momentum map} was defined in \cite{BKMM} as the bundle map $J^\nh:\M\to \g_S^*$ given, at each $m\in\M$ and $\xi\in \Gamma(\g_S)$ by 
$$
\langle J^\nh(m), \xi(q)\rangle := {\bf i}_{\xi_\subM} \Theta_\subM(m),
$$
where $q=\tau_\subM(m)$ and $\Theta_\subM :=\iota^*\Theta_\subQ$, recalling that $\iota:\M\to T^*Q$ is the natural inclusion and $\Theta_\subQ$ is the Liuoville 1-form on $T^*Q$. Observe also that $J^\nh$ is the (pull back to $\M$ of the) canonical momentum map on $(T^*Q,\Omega_\subQ)$ but evaluated on $\g_S$-valued functions on $Q$. It was also studied in \cite{BKMM} a momentum equation, involving a PDE, encoding the conservation of functions of the type $J_\xi := \langle J^\nh, \xi\rangle$, where $\langle J^\nh, \xi\rangle(m) = \langle J^\nh(m), \xi(q)\rangle$.  In fact, when such a function $J_\xi$ is conserved by the nonholonomic dynamics, i.e., $X_\nh(J_\xi)=0$, is called {\it horizontal gauge momentum} \cite{BGM} and the associated element $\xi\in \Gamma(\g_S)$ is a {\it horizontal gauge symmetry}. 

The original definition of {\it horizontal gauge momenta} was done in local coordinates and independently of the nonholonomic momentum map. 

\begin{remark}
The general existence of horizontal gauge momenta is still an open problem and what is usually done is to assume their existence when it is needed (for more details about their properties and existence see \cite{BalSan,FGS-05,FGS-08}). 
\end{remark}

\begin{lemma} \label{L:J_xi-Invariant}
The function $J_\xi = \langle J^{\emph\nh} , \xi\rangle$ is $G$-invariant on $\M$ if and only if the section $\xi$ on $\g_\subS\to Q$ is $Ad$-invariant: that is for $q\in Q$ and $g\in G$,    $Ad_g \, ( \xi(\Psi_{g^{-1}}(q))) = \xi(q)$, where $\Psi_g:Q\to Q$ is the $G$-action on $Q$. 
\end{lemma}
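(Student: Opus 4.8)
The plan is to reduce the statement to the equivariance of the canonical momentum map on $T^*Q$ together with the nondegeneracy of the kinetic metric $\kappa$ on $D$. Recall first that, by the definitions of $J^{\nh}$ and of the Liouville form, for $m\in\M$ with $q=\tau_\subM(m)$ and any $\eta\in(\g_S)_q$ one has
\[
\langle J^{\nh}(m),\eta\rangle \;=\; {\bf i}_{\eta_\subM}\Theta_\subM(m) \;=\; \langle m,\eta_\subQ(q)\rangle ,
\]
since $\Theta_\subM=\iota^*\Theta_\subQ$ and $T\tau_\subM(\eta_\subM(m))=\eta_\subQ(q)$; in other words $J^{\nh}$ is the restriction to $\M$, and to $\g_S$-valued functions, of the canonical momentum map $J_\subQ\colon T^*Q\to\g^*$, $\langle J_\subQ(\alpha_q),\eta\rangle=\langle\alpha_q,\eta_\subQ(q)\rangle$, which is $G$-equivariant, $\langle J_\subQ(\Phi_g(\alpha_q)),\eta\rangle=\langle J_\subQ(\alpha_q),Ad_{g^{-1}}\eta\rangle$ for all $\eta\in\g$, where $\Phi_g$ denotes the cotangent lift of $\Psi_g$ (so that $\Phi_g|_\M=\varPsi_g$). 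I would also record that, since $D$ and $V$ are $G$-invariant, so is $S=D\cap V$, and hence $\g_S$ is $Ad$-equivariant in the sense that $Ad_{g^{-1}}\big((\g_S)_{\Psi_g(q)}\big)=(\g_S)_q$.

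The first step is then to compute, for $m\in\M$ with $q=\tau_\subM(m)$ and $g\in G$, using that $\tau_\subM$ is $G$-equivariant (so $\tau_\subM(\varPsi_g(m))=\Psi_g(q)$) and the equivariance of $J_\subQ$:
\[
J_\xi(\varPsi_g(m)) \;=\; \langle J^{\nh}(\varPsi_g(m)),\,\xi(\Psi_g(q))\rangle \;=\; \big\langle J^{\nh}(m),\, Ad_{g^{-1}}\big(\xi(\Psi_g(q))\big)\big\rangle ,
\]
the last pairing being well defined because $Ad_{g^{-1}}\big(\xi(\Psi_g(q))\big)\in(\g_S)_q$. Consequently $J_\xi$ is $G$-invariant if and only if, for all $g\in G$ and $m\in\M$ with $q=\tau_\subM(m)$,
\[
\big\langle J^{\nh}(m),\; Ad_{g^{-1}}\big(\xi(\Psi_g(q))\big)-\xi(q)\big\rangle=0.
\]

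If $\xi$ is $Ad$-invariant then, replacing $g$ by $g^{-1}$ in the defining identity $Ad_g(\xi(\Psi_{g^{-1}}(q)))=\xi(q)$, we get $Ad_{g^{-1}}(\xi(\Psi_g(q)))=\xi(q)$ for all $g$ and $q$, so the displayed condition holds and $J_\xi$ is $G$-invariant. For the converse I would fix $q\in Q$ and use that the fibre $\tau_\subM^{-1}(q)$ equals $\kappa^\sharp(D_q)$: writing $m=\kappa^\sharp(X)$ with $X\in D_q$, the identity of the first paragraph yields $\langle J^{\nh}(m),\zeta\rangle=\kappa(X,\zeta_\subQ(q))$ for every $\zeta\in(\g_S)_q$. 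Since both $\xi(q)$ and $Ad_{g^{-1}}(\xi(\Psi_g(q)))$ belong to $(\g_S)_q$, their infinitesimal generators at $q$ lie in $S_q\subseteq D_q$, so $G$-invariance of $J_\xi$ forces $\kappa\big(X,\,(\xi(q))_\subQ(q)-(Ad_{g^{-1}}(\xi(\Psi_g(q))))_\subQ(q)\big)=0$ for all $X\in D_q$; as $\kappa$ is positive definite and the vector in the second slot lies in $D_q$, it must vanish, and freeness of the action --- which makes $\eta\mapsto\eta_\subQ(q)$ injective --- then gives $\xi(q)=Ad_{g^{-1}}(\xi(\Psi_g(q)))$, i.e.\ the asserted $Ad$-invariance after replacing $g$ by $g^{-1}$. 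I expect this converse direction to be the only real subtlety: it hinges on $J^{\nh}$ restricted to a single fibre already detecting all of $D_q$ (through $\kappa^\sharp$), combined with positive-definiteness of $\kappa$ on $D_q$ and freeness of the $G$-action, the auxiliary $Ad$-equivariance of $\g_S$ being exactly what places the relevant generators inside $D_q$.
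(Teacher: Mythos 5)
Your proof is correct and follows essentially the same route as the paper's: both reduce the $G$-invariance of $J_\xi$ to the identity $\langle m_q,(\xi(q))_\subQ(q)\rangle=\langle m_q,(Ad_{g^{-1}}(\xi(\Psi_g(q))))_\subQ(q)\rangle$ for all $m_q\in\M_q$, obtained from the cotangent lift and the standard relation $T\Psi_{g^{-1}}\big(\eta_\subQ(\Psi_g(q))\big)=(Ad_{g^{-1}}\eta)_\subQ(q)$. The only place you go beyond the paper is the converse direction, where the paper simply asserts the conclusion, while you correctly justify the passage from equality of the pairings over $\M_q=\kappa^\sharp(D_q)$ to equality of the Lie algebra elements, using that the difference of generators lies in $S_q\subset D_q$, the positive-definiteness of $\kappa$ on $D_q$, and the freeness of the action.
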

\begin{proof}
The function $J_\xi= \langle J^\nh,\xi\rangle$ is $G$-invariant if and only if $J_\xi(m) = J_\xi(\varPsi_g(m))$ which means that, for $m_q\in \M_q\subset T_q^*Q$,  $\langle m_q, (\xi_q)_\subQ\rangle = \langle \Psi^*_{g^{-1}} (m_q) , (\xi(\Psi_g(q)))_\subQ(\Psi_g(q)\rangle = \langle m_q, T\Psi_{g^{-1}}( (\xi(\Psi_g(q)))_\subQ (\Psi_g(q)) )\rangle = \langle m_q, (Ad_{g^{-1}} (\xi(\Psi_g(q))) )_\subQ (q)\rangle$. Therefore, $\xi(q) = Ad_{g^{-1}} ( \xi(\Psi_g(q)) )$. 
  
\end{proof}

\subsection{The vertical symmetry condition} \label{Ss:VerticalSymmetries}

Let $(\M, \Omega_\subM|_\C, H_\subM)$ be a nonholonomic system with a $G$-symmetry satisfying the dimension assumption.
We say that a distribution $W$ is a {\it vertical complement of the constraints} $D$ if
\begin{equation}\label{Eq:TQ=D+W}
TQ = D\oplus W \qquad \mbox{and} \qquad W \subset V.
\end{equation}
Due to the dimension assumption, a vertical complement of the constraints always exists but is not uniquely defined. 
The choice of a vertical complement $W$ induces a splitting of the vertical space 
$$
V = S\oplus W,
$$
and, consequently,  a splitting of the bundle $Q\times\g\to Q$ so that $Q\times\g = \g_\subS \oplus \g_\subW$, where $\g_\subW\to Q$ is the subbundle of $Q\times\g \to Q$ with fibers
$$
(\g_\subW)_q :=\{ \xi_q\in \g \ : \ (\xi_q)_\subQ (q) \in W_q\}.
$$

During this paper, we will assume that the Lie group $G$ admits a closed normal subgroup $G_\subW$  so that the system is Chaplygin with respect to the $G_\subW$-action. Hence, the system can be reduced in two steps: first by $G_\subW$ and subsequently by the Lie group $F=G/G_\subW$.  More precisely,

\begin{proposition}\label{P:VertSym-Equivalencies}\emph{\cite{DuistermaatKolk}}
Let $W$ be a vertical complement of the constraints $D$ so that $\g_\subW \simeq Q\times\mathfrak{w}$ for $\mathfrak{w}$ a Lie subalgebra of $\g$.  Then the following assertions are equivalent:
\begin{enumerate} \setlength\itemsep{-0.3em}
 \item[$(i)$] $W$ is $G$-invariant (or equivalently the Lie algebra $\mathfrak{w}$ is Ad-invariant),
 \item[$(ii)$] $\mathfrak{w}$ is an ideal of $\g$,
 \item[$(iii)$] there exist a normal subgroup $G_\subW$ of $G$ whose Lie algebra is $\mathfrak{w}$,
 \item[$(iv)$] for each $q\in Q$,  $W_q = T_q (\textup{Orb}_{G_W}(q))$.
\end{enumerate}
\end{proposition}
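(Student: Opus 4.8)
The plan is to prove the four assertions equivalent by establishing the biconditionals $(i)\Leftrightarrow(ii)$, $(ii)\Leftrightarrow(iii)$, and $(iii)\Leftrightarrow(iv)$, after first recording the dictionary between distributions on $Q$ and subspaces of $\g$ that the hypotheses provide. Since the $G$-action on $Q$ is free, for each $q\in Q$ the infinitesimal generator map $\g\to V_q$, $\xi\mapsto\xi_\subQ(q)$, is a linear isomorphism onto the vertical space $V_q$; because $\g_\subW\simeq Q\times\mathfrak{w}$ with $\mathfrak{w}\subseteq\g$, this restricts to an isomorphism $\mathfrak{w}\to W_q$ for every $q$, and $\xi\in\mathfrak{w}$ iff $\xi_\subQ(q)\in W_q$. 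Along with this identification I will repeatedly use the equivariance identity $T\Psi_g\big(\xi_\subQ(q)\big)=(Ad_g\,\xi)_\subQ(\Psi_g(q))$, already exploited in the proof of Lemma~\ref{L:J_xi-Invariant}.

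For $(i)\Leftrightarrow(ii)$: by the equivariance identity, $T\Psi_g(W_q)=\{(Ad_g\,\xi)_\subQ(\Psi_g(q)):\xi\in\mathfrak{w}\}$, while $W_{\Psi_g(q)}=\{\zeta_\subQ(\Psi_g(q)):\zeta\in\mathfrak{w}\}$; by freeness these coincide precisely when $Ad_g\,\mathfrak{w}=\mathfrak{w}$. Hence $W$ is $G$-invariant if and only if $\mathfrak{w}$ is $Ad$-invariant, which is the parenthetical equivalence inside $(i)$. Differentiating $Ad$-invariance at $g=e$ yields $[\g,\mathfrak{w}]\subseteq\mathfrak{w}$, i.e.\ $\mathfrak{w}$ is an ideal; conversely an ideal is $Ad$-invariant by integrating $ad$-invariance along one-parameter subgroups. (If $G$ is disconnected this last step gives invariance only under the identity component $G^{\circ}$, but $W$, $V$ and $S$ are determined by $G^{\circ}$, so this suffices; alternatively one simply assumes $G$ connected.)

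For $(ii)\Leftrightarrow(iii)$ I would invoke the standard correspondence between ideals and normal connected Lie subgroups. If $\mathfrak{w}$ is an ideal, let $G_\subW$ be the connected subgroup of $G$ integrating $\mathfrak{w}$; it is normal because $g\exp(\xi)g^{-1}=\exp(Ad_g\,\xi)\in G_\subW$ for $\xi\in\mathfrak{w}$, so the subgroup generated by $\exp(\mathfrak{w})$ is conjugation-invariant. The genuinely delicate point — and the step I expect to be the main obstacle — is the \emph{closedness} of $G_\subW$, since a connected subgroup integrating an ideal is in general only immersed. Here one uses that $W$ is an embedded, constant-rank distribution and that the $G$-action is proper: the $G_\subW$-orbits are the integral manifolds of $W$ by construction, hence embedded submanifolds, which forces $G_\subW$ to be a closed subgroup. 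This is the part drawn from \cite{DuistermaatKolk}. Conversely, $(iii)\Rightarrow(ii)$ is immediate: differentiating the conjugation action of $G$ on a normal $G_\subW$ shows $\mathfrak{w}=\mathrm{Lie}(G_\subW)$ is $Ad$-invariant, hence an ideal.

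Finally, for $(iii)\Leftrightarrow(iv)$, with $G_\subW$ the normal subgroup furnished by $(iii)$, the tangent space to its orbit is $T_q(\textup{Orb}_{G_\subW}(q))=\{\xi_\subQ(q):\xi\in\mathfrak{w}\}$, which under the isomorphism $\g_\subW\simeq Q\times\mathfrak{w}$ equals $W_q$; this gives $(iii)\Rightarrow(iv)$. For the converse, from $W_q=T_q(\textup{Orb}_{G_\subW}(q))$ for all $q$ one reads off $\mathrm{Lie}(G_\subW)=\mathfrak{w}$ — comparing, through the isomorphism above, the two constant-rank distributions $W$ and the $G_\subW$-orbit distribution, which have equal rank — and thereby recovers $(iii)$. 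Tracing the three biconditionals shows all four statements equivalent, completing the proof.
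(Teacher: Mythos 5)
The paper offers no proof of this proposition at all --- it is stated with a bare citation to \cite{DuistermaatKolk} --- so there is nothing to compare against step by step. Judged on its own, your argument for the four equivalences as literally stated is correct and is the expected one: freeness of the action identifies $\mathfrak{w}$ with $W_q$ fibrewise via $\xi\mapsto\xi_\subQ(q)$, the equivariance identity $T\Psi_g(\xi_\subQ(q))=(Ad_g\,\xi)_\subQ(\Psi_g(q))$ translates $G$-invariance of $W$ into $Ad$-invariance of $\mathfrak{w}$, and the remaining links are the ideal/normal-subgroup correspondence together with $T_q(\textup{Orb}_{G_\subW}(q))=\{\xi_\subQ(q):\xi\in\mathfrak{w}\}$. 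The connectedness caveat you flag in $(ii)\Rightarrow(i)$ is real and you handle it acceptably.

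One step, however, is unsound as written: in the closedness discussion you claim that the $G_\subW$-orbits, being integral manifolds of the embedded constant-rank distribution $W$, are therefore embedded submanifolds, whence $G_\subW$ is closed. The second implication (embedded orbits of a free proper action force a closed subgroup) is fine, but the first is false: leaves of a foliation tangent to a perfectly smooth embedded distribution need not be embedded. Take $Q=G=T^2$ acting on itself by translations, $\mathfrak{w}\subset\g=\R^2$ an irrational line and $W$ the corresponding line field (with $D$ any complementary line field): every hypothesis of Proposition~\ref{P:VertSym-Equivalencies} holds, all four items hold with $G_\subW$ the dense one-parameter subgroup, and no closed subgroup with Lie algebra $\mathfrak{w}$ exists. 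This does not invalidate your proof of the proposition, since item $(iii)$ asserts only a normal subgroup, not a closed one; but it does mean your argument cannot be used to upgrade $(iii)$ to the closedness required in Definition~\ref{Def:VerticalSymmetries}, which in the paper is a genuinely additional hypothesis rather than a consequence of $(i)$--$(iv)$.
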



\begin{definition}\label{Def:VerticalSymmetries}\cite{BJac}
A vertical complement $W$ of the constraints $D$ satisfies the {\it vertical symmetry condition} if there  exists a closed normal subgroup $G_\subW$ of $G$ so that for each $q\in Q$,  $W_q = T_q (\textup{Orb}_{G_W}(q))$. 
\end{definition}

The vertical symmetry condition implies that the bundle $\g_\subW$ is trivial and moreover $\g_\subW \simeq Q\times \mathfrak{w}$ where $\mathfrak{w} \subset \g$ is the Lie algebra of the Lie group $G_\subW$.  

Asking a vertical complement $W$ to satisfy the vertical symmetry condition is a restrictive requisite, however there are many examples admitting such type of complement, for instance the nonholonomic oscillator, the snakeboard, the Chaplygin ball and the solids of revolution (see Sec.~\ref{S:Examples}).

Let us consider a nonholonomic system $(\M,\Omega_\subM|_\C, H_\subM)$ with a $G$-symmetry satisfying the dimension assumption. 
If the vertical complement $W$ satisfies the vertical symmetry condition, then the Lie group $G_\subW$ acts freely and properly on $Q$ and its tangent lift leaves the lagrangian $L$ and the constraint distribution $D$ invariant. Therefore the nonholonomic system has also symmetries given by the action of the Lie group $G_\subW$ and thus the nonholonomic vector field $X_\nh$ on $\M$ descends to a vector field $\widetilde X_\nh$ on the quotient manifold $\widetilde \M := \M/G_\subW$ so that 
\begin{equation}\label{Eq:TildeDyn}
\widetilde X_\nh := T\rho_{\mbox{\tiny{$G_{\!W}$}}} (X_\nh),
\end{equation}
where $\rho_{\mbox{\tiny{$G_{\!W}$}}} : \M \to \widetilde \M$ is the $G_\subW$-orbit projection.
Moreover, since $TQ = D\oplus W$, where $W$ is the tangent to the $G_\subW$-orbit on $Q$ (see Prop.~\ref{P:VertSym-Equivalencies}), then we see that the nonholonomic system is a {\it $G_\subW$-Chaplygin system}, see e.g. \cite{Koiller}. 
In fact, in \cite{Koiller}, it is proven that $\widetilde \M$ is diffeomorphic to the cotangent manifold $T^*\widetilde Q$, for $\widetilde Q := Q/G_\subW$, and that the partially reduced dynamics $\widetilde X_\nh$ is hamiltonian with respect to an almost symplectic 2-form.  More precisely, if $\widetilde H$ is the (partially) reduced  hamiltonian function  on $T^*\widetilde Q$ such that $\rho_{\mbox{\tiny{$G_{\!W}$}}}^*\widetilde H = H_\subM$ and  $\Omega_{\mbox{\tiny{$\widetilde Q$}}}$ is the canonical symplectic 2-form on $T^*\widetilde Q$, then
 \begin{equation} \label{Eq:DynChaplygin}
{\bf i}_{\widetilde X_{\nh}} \widetilde \Omega = d\widetilde H, \qquad \mbox{with} \qquad 
\widetilde \Omega:= \Omega_{\mbox{\tiny{$\widetilde Q$}}} -B_{\mbox{\tiny{$\!\langle\! J\mathcal{K}\!\rangle$}}},\end{equation}
where $B_{\mbox{\tiny{$\!\langle\! J\mathcal{K}\!\rangle$}}}$ is the 2-form on $T^*\widetilde Q$ defined as follows: the splitting \eqref{Eq:TQ=D+W} and the vertical symmetry condition, induces a principal connection $A_\subW :TQ\to \mathfrak{w}$ given, at each $v_q\in T_qQ$ by 
$A_\subW(v_q) = \eta_q,$ if and only if $P_\subW(v_q) = (\eta_q)_\subQ$, where $P_\subW:TQ\to W$ is the projection to the second  factor of \eqref{Eq:TQ=D+W}.
We define the $\mathfrak{w}$-valued 2-form $K_\subW$ on $Q$ given by $K_\subW := d^DA_\subW$, that is, for $X,Y\in TQ$, 
\begin{equation}\label{Eq:difD}
K_\subW(X,Y) = d^DA_\subW(X,Y) := dA_\subW(P_D(X),P_D(Y)),
\end{equation}
where $P_D:TQ\to D$ is the projection to the first factor of \eqref{Eq:TQ=D+W}.  Since $\mathfrak{w}\subset \g$, we can see $K_\subW$ as a $\g$-valued 2-form on $Q$ and consider its pull back to $\M$, i.e., $\mathcal{K}_\subW := \tau_\subM^*K_\subW$.  Then, following \cite{Koiller}, we define the 2-form $\langle J, \mathcal{K}_\subW\rangle$ as the natural paring of the canonical momentum map $J:\M\to \g^*$ with the $\g$-valued 2-form $\mathcal{K}_\subW$ (where $\langle \cdot, \cdot \rangle$ denotes the pairing between $\g^*$ and $\g$).  
The 2-form $\langle J, \mathcal{K}_\subW\rangle$ was proven to be basic with respect to the principal bundle $\rho_{\mbox{\tiny{$G_{\!W}$}}} : \M \to  T^*\widetilde Q$ and therefore $B_{\mbox{\tiny{$\!\langle\! J\mathcal{K}\!\rangle$}}}$ is the 2-form on $T^*\widetilde Q$ such that $\rho_{\mbox{\tiny{$G_{\!W}$}}}^*B_{\mbox{\tiny{$\!\langle\! J\mathcal{K}\!\rangle$}}} = \langle J, \mathcal{K}_\subW\rangle$.

\subsection{Chaplygin systems with an extra symmetry}

From Section~\ref{Ss:VerticalSymmetries} and in particular from \eqref{Eq:DynChaplygin}, we consider the partially reduced nonholonomic system $(T^*\widetilde Q, \widetilde \Omega, \widetilde H)$.  

Since the Lie group $G_\subW$ is a closed normal subgroup of $G$, the quotient $F := G/G_\subW$ is a Lie group with Lie algebra $\mathfrak{f} := \mathfrak{g}/\mathfrak{w}$ where $k=\textup{dim}(\mathfrak{f})$. In this section we study the $F$-symmetry on the manifold $(T^*\widetilde Q, \widetilde\Omega)$.     

Let us denote by $\varrho_{\mbox{\tiny{$G$}}}:G\to F$ the projection to the quotient Lie group and by $\varrho_{\mbox{\tiny{$\g$}}} : \g \to \mathfrak{f}$ the projection to the quotient Lie algebra.  
If $\Psi:G\times Q\to Q$ denotes the $G$-action on $Q$, then the (partially) reduced manifold $\widetilde Q$ inherits a well defined action of the Lie group $F$ denoted by $\widetilde \Psi : F \times \widetilde Q \to \widetilde Q$, given, at each $h\in F$ and $x\in\widetilde Q$, by
\begin{equation}\label{Eq:F-action}
\widetilde \Psi_h (x) := \rho_{\widetilde\subQ} (\Psi_g(q)), 
\end{equation}
where $g\in G$ and $q\in Q$ satisfy that $\varrho_{\mbox{\tiny{$G$}}}(g) = h$ and $\rho_{\widetilde\subQ} (q) = x$, respectively,  for $\rho_{\widetilde\subQ}:Q \to \widetilde Q$ the orbit projection.
Therefore, the Lie group $F$ acts (freely and properly) also on the manifold $T^*\widetilde Q$ leaving the 2-form $\widetilde \Omega$ and the hamiltonian $\widetilde H$ invariant.

The following Lemma will be useful to study the horizontal gauge momenta on the partially reduced manifold $T^*\widetilde Q$.  Recall that a section $\xi$ of $\g_\subS\to Q$ is  $Ad$-invariant if, as seen as $\g$-valued functions on $Q$, we have that for $q\in Q$ and $g\in G$,    $Ad_g \, ( \xi(\Psi_{g^{-1}}(q))) = \xi(q)$.

\begin{lemma}\label{L:lemma1-1} Consider the bundles $\g_\subS\to Q$ and $\widetilde Q\times \mathfrak{f}\to \widetilde Q$, and the projection to the orbits $\rho_{\widetilde\subQ}: Q\to \widetilde Q$.  
 \begin{enumerate}   \item[$(i)$] There is a one-to-one correspondence between the $Ad$-invariant sections on $\g_\subS\to Q$ and sections of $\widetilde Q\times \mathfrak{f}\to \widetilde Q$ so that if $\xi\in \Gamma(\g_S)$ then there is a unique $\eta\in \Gamma(\widetilde Q\times \mathfrak{f})$ such that
 $T\rho_{\widetilde\subQ}(\xi_\subQ(q)) = \eta_{\mbox{\tiny{$\widetilde Q$}}}(x)$.
  
   \item[$(ii)$] The choice of a horizontal $G$-invariant distribution $H\!\emph{\mbox{\tiny{or}}} \subset D$ such that $TQ=H\!\emph{\mbox{\tiny{or}}} \oplus V$, induces a $F$-invariant splitting on $T\widetilde Q=\widetilde{H\!\emph{\mbox{\tiny{or}}}}  \oplus \widetilde V_{\mbox{\tiny{F}}}$ where $\widetilde V_{\mbox{\tiny{F}}}$ is the tangent to the $F$-orbit on $\widetilde Q$ and $\widetilde {H\!\emph{\mbox{\tiny{or}}}}:=T\rho_{\mbox{\tiny{$\widetilde Q$}}}(H\!\emph{\mbox{\tiny{or}}})$.  In other words, an equivariant principal connection $A:TQ\to \g$, induces an equivariant principal connection $\widetilde{A}:T\widetilde Q \to \mathfrak{f}$ on $\widetilde Q$ so that the following diagram commutes:
 $$
 \xymatrix{TQ \ar[r]^A \ar[d]_{T\rho_{\mbox{\tiny{$\widetilde Q$}}}}  & \g \ar[d]^{\varrho_\g} \\
 T \widetilde Q \ar[r]^{\widetilde A} & \mathfrak{f} }
 $$

 \end{enumerate}

\end{lemma}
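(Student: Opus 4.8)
The entire argument reduces to one elementary observation, which I would isolate at the outset: since $\varrho_\g=d\varrho_{\mbox{\tiny{$G$}}}|_e$ intertwines the exponential maps, differentiating the identity $\widetilde\Psi_{\varrho_{\mbox{\tiny{$G$}}}(g)}\circ\rho_{\widetilde\subQ}=\rho_{\widetilde\subQ}\circ\Psi_g$ at $g=e$ gives
$$
T\rho_{\widetilde\subQ}\big(\zeta_\subQ(q)\big)=\big(\varrho_\g(\zeta)\big)_{\mbox{\tiny{$\widetilde Q$}}}(x)\qquad\text{for all }\zeta\in\g,\ q\in Q,\ x=\rho_{\widetilde\subQ}(q),
$$
i.e.\ $T\rho_{\widetilde\subQ}$ carries the infinitesimal generator of $\zeta$ to that of $\varrho_\g(\zeta)$. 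Both items then follow almost formally from this.

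\emph{Item $(i)$.} The first point is that, under the vertical symmetry condition, $(\g_\subS)_q$ is a linear complement of $\mathfrak w$ in $\g$ for each $q$: the isomorphism $\g\to V_q$, $\zeta\mapsto\zeta_\subQ(q)$ (free action) sends $(\g_\subS)_q$ onto $S_q$ and $\mathfrak w=(\g_\subW)_q$ onto $W_q$, while $V_q=S_q\oplus W_q$; hence $\varrho_\g$ restricts, smoothly in $q$, to a linear isomorphism $(\g_\subS)_q\to\mathfrak f$. Given an $Ad$-invariant $\xi\in\Gamma(\g_\subS)$, the map $q\mapsto\varrho_\g(\xi(q))$ is $G_\subW$-invariant — for $g\in G_\subW$ one has $\varrho_{\mbox{\tiny{$G$}}}(g)=e$, so $\varrho_\g(\xi(\Psi_g(q)))=\varrho_\g(Ad_g\xi(q))=Ad_{\varrho_{\mbox{\tiny{$G$}}}(g)}\varrho_\g(\xi(q))=\varrho_\g(\xi(q))$ — and since the $G_\subW$-orbits are the fibres of $\rho_{\widetilde\subQ}$, it descends to $\eta\in\Gamma(\widetilde Q\times\mathfrak f)$, which by the displayed formula is the unique section with $T\rho_{\widetilde\subQ}(\xi_\subQ(q))=\eta_{\mbox{\tiny{$\widetilde Q$}}}(x)$. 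Conversely, given $\eta$, define $\xi(q):=(\varrho_\g|_{(\g_\subS)_q})^{-1}(\eta(\rho_{\widetilde\subQ}(q)))\in(\g_\subS)_q$; using the $G$-invariance of $S=D\cap V$ and $\varrho_\g\circ Ad_g=Ad_{\varrho_{\mbox{\tiny{$G$}}}(g)}\circ\varrho_\g$ one checks that $\xi$ is $Ad$-invariant precisely when $\eta$ is $Ad$-invariant for the $F$-action, and that the two assignments are mutually inverse.

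\emph{Item $(ii)$.} Write $\mathrm{Hor}=\ker A\subset D$, so $TQ=\mathrm{Hor}\oplus V$ with $\mathrm{Hor}$ $G$-invariant, hence $G_\subW$-invariant, so that $\widetilde{\mathrm{Hor}}:=T\rho_{\widetilde\subQ}(\mathrm{Hor})$ is a well-defined distribution on $\widetilde Q$. Since $\ker T_q\rho_{\widetilde\subQ}=W_q\subset V_q$ meets $\mathrm{Hor}_q$ only in $0$, $T\rho_{\widetilde\subQ}$ is fibrewise injective on $\mathrm{Hor}$; moreover $T\rho_{\widetilde\subQ}(V_q)=(\widetilde V_{\mbox{\tiny{F}}})_x$ by the displayed formula, and from $\textup{rank}(\mathrm{Hor})+k=(\dim Q-\dim V)+(\dim V-\dim W)=\dim\widetilde Q$ together with $\widetilde{\mathrm{Hor}}_x+(\widetilde V_{\mbox{\tiny{F}}})_x=T\rho_{\widetilde\subQ}(T_qQ)=T_x\widetilde Q$ we obtain $T\widetilde Q=\widetilde{\mathrm{Hor}}\oplus\widetilde V_{\mbox{\tiny{F}}}$. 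The $F$-invariance of $\widetilde{\mathrm{Hor}}$ follows from that of $\mathrm{Hor}$ via $T\widetilde\Psi_h\circ T\rho_{\widetilde\subQ}=T\rho_{\widetilde\subQ}\circ T\Psi_g$ for any $g$ with $\varrho_{\mbox{\tiny{$G$}}}(g)=h$, so $\widetilde{\mathrm{Hor}}$ is the horizontal distribution of an equivariant principal connection $\widetilde A:T\widetilde Q\to\mathfrak f$. Commutativity of the diagram is then immediate: writing $v_q=h_q+\zeta_\subQ(q)$ with $h_q\in\mathrm{Hor}_q$ and $\zeta=A(v_q)$, the displayed formula gives $T\rho_{\widetilde\subQ}(v_q)=T\rho_{\widetilde\subQ}(h_q)+(\varrho_\g(\zeta))_{\mbox{\tiny{$\widetilde Q$}}}(x)$, a $\widetilde{\mathrm{Hor}}$-vector plus a $\widetilde V_{\mbox{\tiny{F}}}$-vector, so $\widetilde A(T\rho_{\widetilde\subQ}(v_q))=\varrho_\g(\zeta)=\varrho_\g(A(v_q))$.

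\emph{Where the work is.} There is no real difficulty; the content is careful bookkeeping with the three groups $G,\,G_\subW,\,F$, their actions, and their Lie algebras. Once the two structural facts are in place — the intertwining of infinitesimal generators in the displayed formula and the fibrewise isomorphism $\varrho_\g|_{(\g_\subS)_q}:(\g_\subS)_q\to\mathfrak f$ furnished by the vertical symmetry condition — everything else is routine. The only point worth flagging is that the $\eta$'s obtained in $(i)$ are automatically $Ad$-invariant for the $F$-action, which is exactly what the hypothesis on $\xi$ forces and what is needed for the later reduction.
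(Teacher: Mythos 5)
Your proof is correct. For item $(ii)$ it is essentially the paper's argument (a rank count combined with surjectivity of $T\rho_{\widetilde\subQ}$), with the bonus that you actually verify the commutativity of the diagram, which the paper only asserts. For item $(i)$ you take a genuinely different route: the paper works at the level of vector fields, pushing the invariant field $\xi_\subQ$ down to $\widetilde Q$ and showing the image is tangent to the $F$-orbits, and for the converse taking the $D$-horizontal lift of $\eta_{\mbox{\tiny{$\widetilde Q$}}}$ and recognizing it as the generator of a section of $\g_\subS$; you instead stay at the level of the Lie-algebra bundles, noting that the splitting $V=S\oplus W$ makes $\varrho_\g|_{(\g_\subS)_q}:(\g_\subS)_q\to\mathfrak f$ a fibrewise isomorphism and that $T\rho_{\widetilde\subQ}$ intertwines infinitesimal generators with $\varrho_\g$. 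Your version buys an explicit inverse map and makes the role of the vertical symmetry condition transparent; the paper's version is closer to how the correspondence is actually used (via the vector fields $Y_i=(\zeta_i)_\subQ$). One further point in your favour: you state that, in the converse direction, $\xi$ is $G$-$Ad$-invariant precisely when $\eta$ is $Ad$-equivariant for the $F$-action. Taken literally the lemma claims a bijection with \emph{all} sections of $\widetilde Q\times\mathfrak f$, and the paper's converse has the same imprecision (the $D$-horizontal lift of $\eta_{\mbox{\tiny{$\widetilde Q$}}}$ is only $G_\subW$-invariant for a general $\eta$); your sharper formulation is the one that is true and is what is needed later, e.g.\ for the equivariance used in Lemma~\ref{L:shift}.
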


\begin{proof} $(i)$ First, let $\xi\in \Gamma(\g_S)$ $Ad$-invariant.  Then, the vector field $\xi_\subQ$ is invariant and we can define the vector field $X$ on $\widetilde Q$ such that, for each $x=\rho_{\widetilde\subQ}(q) \in \widetilde Q$,  $X(x) = T\rho_{\widetilde\subQ}(\xi_\subQ(q))$.  Now we claim that $X(x)\in V_{\mbox{\tiny{$F$}}}$ where $V_{\mbox{\tiny{$F$}}}$ is the tangent to the $F$-orbit on $\widetilde Q$.  In fact, using that $\rho = \rho_{\mbox{\tiny{$\widetilde\subQ/F$}}} \circ \rho_{\widetilde\subQ}$ for $\rho_{\mbox{\tiny{$\widetilde\subQ/F$}}} : \widetilde Q\to \widetilde Q/F$ the orbit projection, we have that $0 = T\rho(\xi_\subQ(q)) = T\rho_{\mbox{\tiny{$\widetilde\subQ/F$}}}(X(x))$ and hence $X(x)\in (V_{\mbox{\tiny{$F$}}})_x$.  Therefore, for each $x\in \widetilde Q$ there is $\eta(x)\in \mathfrak{f}$ such that $ T\rho_{\widetilde\subQ}(\xi_\subQ(q)) = X(x) =  \eta_{\widetilde\subQ}(x)$.

Conversely, if $\eta\in \Gamma(\widetilde Q\times \mathfrak{f})$, then $\eta_{\widetilde\subQ}(x)\in T_x\widetilde Q$. Therefore there exists an (unique) invariant vector field $Y$ on $Q$ such that $T\rho_{\widetilde\subQ}(Y(q))= \eta_{\widetilde\subQ}(x)$ and $Y(q)\in D_q$ (recall that $TQ= D\oplus W$ and $\textup{Ker}T\rho_{\widetilde\subQ} = W$).  Then we claim that, at each $q\in Q$, $Y(q)\in S_q$ since, using again that $\rho = \rho_{\mbox{\tiny{$\widetilde\subQ/F$}}} \circ \rho_{\widetilde\subQ}$, we have that 
$$
T\rho(Y(q)) = T\rho_{\mbox{\tiny{$\widetilde\subQ/F$}}} (T\rho_{\widetilde\subQ}(Y(q)) = T\rho_{\mbox{\tiny{$\widetilde\subQ/F$}}} (\eta_{\widetilde\subQ}(x)) = 0
$$
Therefore, $Y(q) \in V_q$ and hence $Y(q)\in S_q$.  Then, for each $q\in Q$, there is an element $\xi(q)\in \g_\subS|_q$ such that $Y(q) = \xi_\subQ(q)$. Using that the vector field $Y(q)$ is invariant, we obtain that $\xi\in \Gamma(\g_\subS)$ is $Ad$-invariant. 

 $(ii)$  Item $(i)$ asserts that $\textup{rank}(S) = \textup{rank}(\widetilde V_{\mbox{\tiny{$F$}}})$ and moreover, since $\textup{Ker}\, T\rho_{\mbox{\tiny{$\widetilde Q$}}} = W$, we have that $\textup{rank}(H\!\mbox{\tiny{or}})=\textup{rank}(\widetilde {H\!\mbox{\tiny{or}}})$ and hence we conclude that $T\widetilde Q = \widetilde {H\!\mbox{\tiny{or}}}\oplus \widetilde V_{\mbox{\tiny{$F$}}}$. It is straightforward to see that $\widetilde A \circ T\rho_{\mbox{\tiny{$\widetilde Q$}}} = \varrho_\g \circ A$. 

\end{proof}

\subsubsection*{The conserved quantity assumption}

Consider a nonholonomic system $(\M, \Omega_\subM|_\C, H_\subM)$ with a $G$-symmetry and recall that $S=D\cap V$. 
Next we will make a fundamental assumption that will be used the rest of the paper:
{\it the nonholonomic system $(\M, \Omega_\subM|_\C, H_\subM)$ admits $k=\textup{rank}(S)$ $G$-invariant (functionally independent) horizontal gauge momenta $\{J_1,...,J_k\}$}. 
Since the corresponding horizontal gauge symmetries $\zeta_i\in \Gamma(\g_\subS)$, such that $J_i := \langle J^\nh , \zeta_i\rangle$, are linearly independent and globally defined,  they define a global basis of ($Ad$-invariant) sections of $\g_\subS\to Q$ denoted by 
\begin{equation}\label{Eq:BasisHGS}
\mathfrak{B}_{\mbox{\tiny{HGS}}}= \{\zeta_1, ...,\zeta_k\}.
\end{equation}
As a consequence of Lemma \ref{L:lemma1-1}, the global basis $\mathfrak{B}_{\mbox{\tiny{HGS}}}$ induces a corresponding basis of global sections 
\begin{equation}\label{Eq:RedBasisHGS}
\widetilde{\mathfrak{B}}_{\mbox{\tiny{HGS}}}  = \{\eta_1,...,\eta_k\},
\end{equation}
of the bundle $\widetilde{Q}\times \mathfrak{f}\to \widetilde{Q}$ where, for each $i = 1,...,k$, $(\eta_i)(x):= \varrho_{\mbox{\tiny{$\g$}}}( (\xi_i)(q))$ for $q\in Q$ and $x = \rho_{\widetilde\subQ}(q)\in \widetilde Q$.  We will often see the elements $\eta_i\in  \widetilde{\mathfrak{B}}_{\mbox{\tiny{HGS}}}$ as $\mathfrak{f}$-valued functions on $\widetilde Q$.
Associated to the basis $\widetilde{\mathfrak{B}}_{\mbox{\tiny{HGS}}}$ we can define the functions $\{\widetilde J_1,..., \widetilde J_k\}$ on $T^*\widetilde Q$, given by
\begin{equation}\label{Eq:TildeJi}
\widetilde J_i:= {\bf i}_{(\eta_i)_{T^*\widetilde{Q}}} \Theta_{\mbox{\tiny{$\widetilde Q$}}},
\end{equation}
where $\Theta_{\widetilde\subQ}$ is the canonical 1-form on $T^*\widetilde Q$.

\begin{proposition}\label{Prop:tildeJ_i} Recalling that $\rho_\subGW : \M\to T^*\widetilde Q$ is the orbit projection, we have that
\begin{enumerate}
 \item[$(i)$] the functions $\widetilde J_i$ on $T^*\widetilde Q$ are functionally independent and $\rho_\subGW^*  {\widetilde J}_i = J_i$,
 \item[$(ii)$] the functions $\widetilde J_i$ are conserved by the partially reduced dynamics $\widetilde X_{\emph\nh}$.
\end{enumerate}
\end{proposition}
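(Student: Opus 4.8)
The plan is to prove both items by relating the objects on $T^*\widetilde Q$ to their counterparts on $\M$ via the orbit projection $\rho_\subGW : \M \to T^*\widetilde Q$, and then invoking facts already established on $\M$ (conservation of the $J_i$) and the Chaplygin structure \eqref{Eq:DynChaplygin}.

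For item $(i)$, the key identity is $\rho_\subGW^* \widetilde J_i = J_i$. To see this, I would unwind the definitions: $\widetilde J_i = {\bf i}_{(\eta_i)_{T^*\widetilde Q}} \Theta_{\widetilde\subQ}$ by \eqref{Eq:TildeJi}, while $J_i = \langle J^\nh, \zeta_i\rangle = {\bf i}_{(\zeta_i)_\subM} \Theta_\subM$ by the definition of the nonholonomic momentum map, with $\Theta_\subM = \iota^* \Theta_\subQ$. Since $\rho_\subGW^* \Theta_{\widetilde\subQ}$ and $\Theta_\subM$ agree on the horizontal part of $T\M$ (they differ only by terms involving $\mathcal{K}_\subW$, contracted against vertical directions, which vanish when paired against the Liouville forms pulled back appropriately — this is precisely the content of the statement in Sec.~\ref{Ss:VerticalSymmetries} that $\langle J,\mathcal{K}_\subW\rangle$ is basic), and since $T\rho_\subGW((\zeta_i)_\subM(m)) = (\eta_i)_{T^*\widetilde Q}(\rho_\subGW(m))$ by the defining property of $\eta_i$ in Lemma~\ref{L:lemma1-1}$(i)$ (lifted from $Q$ to $\M$ and then to $T^*\widetilde Q$), one gets $\rho_\subGW^*\widetilde J_i = \langle J^\nh, \zeta_i\rangle = J_i$. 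Functional independence of the $\widetilde J_i$ then follows from functional independence of the $J_i$ (assumed in the conserved quantity assumption) together with surjectivity of $T\rho_\subGW$: if a nontrivial linear combination $\sum_i a_i(x)\, d\widetilde J_i$ vanished on an open set, pulling back by $\rho_\subGW$ would give $\sum_i (a_i\circ\rho_\subGW)\, dJ_i = 0$, contradicting independence.

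For item $(ii)$, I would compute $\widetilde X_\nh(\widetilde J_i)$ directly using $\widetilde X_\nh = T\rho_\subGW(X_\nh)$ from \eqref{Eq:TildeDyn}. For any $m\in\M$,
$$
\widetilde X_\nh(\widetilde J_i)(\rho_\subGW(m)) = d\widetilde J_i\big(T\rho_\subGW(X_\nh(m))\big) = (\rho_\subGW^* d\widetilde J_i)(X_\nh(m)) = d(\rho_\subGW^*\widetilde J_i)(X_\nh(m)) = dJ_i(X_\nh(m)) = X_\nh(J_i)(m),
$$
using item $(i)$ at the penultimate step. Since the $J_i$ are horizontal gauge momenta, $X_\nh(J_i) = 0$, so $\widetilde X_\nh(\widetilde J_i) = 0$. (Alternatively, one could argue intrinsically on $T^*\widetilde Q$ using \eqref{Eq:DynChaplygin} and the relation \eqref{Eq:Intro} between $(\eta_i)_{T^*\widetilde Q}$, $\widetilde\Omega + \widetilde{\bf B}$, $d\widetilde J_i$, together with the dynamical condition ${\bf i}_{\widetilde X_\nh}\widetilde{\bf B} = 0$; but that requires $\widetilde{\bf B}$, which is introduced later, so the pull-back argument via $\M$ is cleaner here.)

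The main obstacle is establishing $\rho_\subGW^*\widetilde J_i = J_i$ rigorously, i.e., carefully tracking how the Liouville $1$-form behaves under the (non-symplectic) identification $\widetilde\M \cong T^*\widetilde Q$ and checking that the relevant contractions land in the part where $\Theta_\subM$ and $\rho_\subGW^*\Theta_{\widetilde\subQ}$ coincide — equivalently, that the infinitesimal generators $(\zeta_i)_\subM$ project correctly and that pairing against them kills the $\mathcal{K}_\subW$-correction. Once this identity is in hand, both functional independence and conservation are formal consequences, so essentially all the work is in this single compatibility check, which relies on the basic-ness of $\langle J, \mathcal{K}_\subW\rangle$ and on $\zeta_i \in \Gamma(\g_\subS)$ (so that $(\zeta_i)_\subM \in \mathcal{S} \subset \mathcal{C}$, the horizontal distribution relevant to the Chaplygin reduction).
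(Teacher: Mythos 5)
Your proposal is correct and follows essentially the same route as the paper: the paper's proof of $(i)$ is exactly the computation $\rho^*_\subGW \widetilde J_i = {\bf i}_{(\zeta_i)_\subM}\,\rho_\subGW^*\Theta_{\widetilde\subQ} = {\bf i}_{(\zeta_i)_\subM}\Theta_\subM = J_i$, resting on the identity $(\rho^*_\subGW \Theta_{\widetilde\subQ} - \Theta_\subM)|_\C = 0$ together with $(\zeta_i)_\subM \in \mathcal{S}\subset\C$, and $(ii)$ is deduced from $(i)$ exactly as you do. One small caveat: your justification of that identity via the basic-ness of $\langle J, \mathcal{K}_\subW\rangle$ is misattributed — that 2-form measures the discrepancy between $\rho_\subGW^*\Omega_{\widetilde\subQ}$ and $\Omega_\subM$ on $\C$, whereas the 1-form identity $(\rho^*_\subGW \Theta_{\widetilde\subQ} - \Theta_\subM)|_\C = 0$ is a separate (standard) fact about the Chaplygin identification $\M/G_\subW\cong T^*\widetilde Q$, which the paper simply invokes.
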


\begin{proof}
 $(i)$ Let $\xi_i$ be the $Ad$-invariant horizontal gauge symmetry in \eqref{Eq:BasisHGS} and $\eta_i$ be the corresponding $\mathfrak{f}$-valued functions on $\widetilde Q$ defined in \eqref{Eq:RedBasisHGS}.  Using \eqref{Eq:TildeJi} and that $(\rho^*_\subGW \Theta_{\mbox{\tiny{$\widetilde Q$}}} - \Theta_\subM)|_\C = 0$ we obtain that 
 $$
 \rho^*_\subGW (\widetilde J_i) = \rho_\subGW^* ({\bf i}_{(\eta_i)_{T^*\!\widetilde Q}} \Theta_{\mbox{\tiny{$\widetilde Q$}}}) = {\bf i}_{(\xi_i)_\subM}\, \rho_\subGW^*\Theta_{\mbox{\tiny{$\widetilde Q$}}} = {\bf i}_{(\xi_i)_\subM}\Theta_\subM = J_i.
 $$
 
 $(ii)$ It is a consequence of item $(i)$. 
\end{proof}

\begin{definition} \label{Def:RedHorGauge}
The functions $\widetilde J_i={\bf i}_{(\eta_i)_{T^*\widetilde Q}} \Theta_{\widetilde\subQ}$ for $\eta_i\in \widetilde{\mathfrak{B}}_{\mbox{\tiny{HGS}}}$,
are called the {\it partially reduced horizontal gauge momenta} and the corresponding $\mathfrak{f}$-valued functions $\eta_i$ are the {\it partially reduced horizontal gauge symmetries}.
\end{definition}

We conclude that, if the nonholonomic system $(\M, \Omega_\subM|_\C , H_\subM)$ admits $k$ (functionally independent) $G$-invariant horizontal gauge momenta, then the partially reduced system $(T^*\widetilde Q, \Omega_{\mbox{\tiny{$\widetilde Q$}}}, \widetilde H)$ inherits $k=\textup{dim}(\mathfrak{f})$ partially reduced horizontal gauge momenta.  

\medskip

From Lemma \ref{L:lemma1-1}$(ii)$ the vector fields $(\eta_i)_{\widetilde\subQ}$ generate the vertical space $\widetilde{V}_{\mbox{\tiny{$F$}}}$ and hence the connection $\widetilde A: T\widetilde Q\to \mathfrak{f}$ can be written as 
 \begin{equation}\label{Eq:TildeA} 
\widetilde A  = \widetilde Y^i \otimes \eta_i,  
 \end{equation}
 for $\eta_i\in \widetilde{\mathfrak{B}}_{\mbox{\tiny{HGS}}}$ and $\widetilde Y^i$ the 1-forms on $\widetilde Q$ so that $\widetilde Y^i((\eta_j)_{\mbox{\tiny{$\widetilde Q$}}}) = \delta_{ij}$ and $\widetilde Y^i|_{\widetilde {H\!{\mbox{\tiny{or}}}}} = 0$.

\section{Momentum map reduction} \label{S:Reduction}

In this section, we will work with the partially reduced system $(T^*\widetilde Q, \widetilde \Omega, \widetilde H)$ defined in \eqref{Eq:DynChaplygin} and the corresponding symmetry group $F$. We assume the existence of $k = \textup{dim}(\mathfrak{f})$ partially reduced horizontal gauge momenta $\{\widetilde J_1, ...,\widetilde J_k\}$ (Def.~\ref{Def:RedHorGauge}) with corresponding partially reduced horizontal gauge symmetries given by $\widetilde{\mathfrak{B}}_{\mbox{\tiny{HGS}}}  = \{\eta_1,...\eta_k\}$ as in \eqref{Eq:RedBasisHGS}. With these ingredients we will define an almost symplectic foliation --where the reduced nonholonomic dynamics lives-- trough a Marsden-Weinstein-type reduction. 

\subsection{The canonical momentum map}\label{Ss:CanMomentMap}

The canonical momentum map $\widetilde J : T^*\widetilde Q\to \mathfrak{f}^*$ on  $(T^*\widetilde Q,\Omega_{\widetilde\subQ} )$ is defined, as usual, by
\begin{equation}\label{Eq:MomMapTilde}
\langle  \widetilde J (\alpha_x), \nu \rangle = {\bf i}_{\nu_{T^*\!\widetilde Q}} \Theta_{\mbox{\tiny{$\widetilde Q$}}}{\mbox{\tiny{$(\alpha_x)$}}},
\end{equation}
  for $\alpha_x \in T_x^*\widetilde{Q}$ and  $\nu  \in \mathfrak{f}$. Then, for each $\nu\in\mathfrak{f}$, the function $\widetilde J_\nu\in C^\infty (T^*\widetilde Q)$ is given, at each $\alpha_x\in T^*_x\widetilde Q$, by $\widetilde J_\nu (\alpha_x) = \langle  \widetilde J , \nu \rangle(\alpha_x) :=\langle  \widetilde J (\alpha_x), \nu \rangle$.  However, these functions do not see the partially reduced horizontal gauge momenta $\widetilde J_i$ since the associated $\mathfrak{f}$-valued functions $\eta_i \in \widetilde{\mathfrak{B}}_{\mbox{\tiny{HGS}}}$  are not necessarily constant (c.f.\cite{BalFern}).

In order to encode the partially reduced horizontal gauge momenta using the canonical momentum map, for each $\mathfrak{f}$-valued function $\eta$, we define the function $\widetilde J_\eta$ on $T^*\widetilde Q$ by
\begin{equation}\label{Eq:J-functions}
\widetilde J_\eta(\alpha_x) = \langle  \widetilde J , \eta \rangle(\alpha_x) :=\langle  \widetilde J (\alpha_x), \eta(x) \rangle,
\end{equation}
for $\alpha_x\in T^*_x\widetilde Q$.
Therefore,  the partially reduced horizontal gauge momenta $\{ \widetilde J_1,...,\widetilde J_k\}$ given in Def.~\ref{Def:RedHorGauge} associated to the basis $\widetilde{\mathfrak{B}}_{\mbox{\tiny{HGS}}} = \{\eta_1,...,\eta_k\}$ given in \eqref{Eq:RedBasisHGS},  are described by the canonical momentum map using \eqref{Eq:J-functions} 
$$
\widetilde J_i := \widetilde J_{\eta_i} = \langle \widetilde J, \eta_i\rangle, \quad \mbox{for } i=1,...,k.
$$

We denote by 
\begin{equation}\label{Eq:RedDualBasisHGS}
\widetilde{\mathfrak{B}}^*_{\mbox{\tiny{HGS}}} = \{\mu^1,...,\mu^k\},
\end{equation}
the dual basis of $\mathfrak{f}^*$-valued functions on $\widetilde Q$ associated to $\widetilde{\mathfrak{B}}_{\mbox{\tiny{HGS}}}$, that is, for each $i=1,...,k$ the $\mu^i$ are $\mathfrak{f}^*$-valued functions on $\widetilde Q$  (or  sections of the bundle $\widetilde{Q}\times \mathfrak{f}^*\to \widetilde{Q}$), such that, at each $x\in\widetilde Q$, $(\mu^i)(x)\in \mathfrak{f}^*$ and $\langle \mu^i(x) , \eta_j(x) \rangle = \delta_{ij}$ for $\langle \cdot, \cdot\rangle$ the natural pairing between $\mathfrak{f}$ and $\mathfrak{f}^*$. 

\begin{remark}
 The canonical momentum map can also be written as $\widetilde J = \widetilde J_1 \mu^1 + ... + \widetilde J_k\mu^k$, where at each $\alpha_x\in T^*\widetilde Q$, $\widetilde J(\alpha_x) = \widetilde J_1(\alpha_x). \mu^1(x) + ... + \widetilde J_k(\alpha_x). \mu^k(x) \in \mathfrak{f}^*$. 
\end{remark}

Now, let us consider a $\mathfrak{f}^*$-valued function on $\widetilde Q$ given by  $\mu = c_i\mu^i$ for $c_i$ constants in $\R$ and define the level set
$$
\widetilde J^{-1}(\mu) := \{\alpha_x\in T^*\widetilde Q \ : \ \widetilde J(\alpha_x) = \mu(x)\}. 
$$

\begin{proposition}\label{P:LevelSets} Let $\mu = c_i\mu^i$ be a $\mathfrak{f}^*$-valued function for $c_i$ constants in $\R$ and $\mu^i\in \widetilde{\mathfrak{B}}^*_{\emph{\mbox{\tiny{HGS}}}}$.
Then the inverse image $\widetilde{J}^{-1}(\mu)\subset T^*\widetilde Q$ coincides with the common level sets of the (partially reduced) horizontal gauge momenta $\widetilde{J}_1,...,\widetilde{J}_k$ at $c_1,...,c_k$ respectively, i.e., 
	$$
	\widetilde{J}^{-1}(\mu) = \bigcap_i \widetilde{J}_i^{-1}(c_i),
	$$
	and hence $\widetilde J^{-1}(\mu) $ is a $F$-invariant submanifold of $T^*\widetilde Q$. Moreover, the collection  (of connected components) of the manifolds  $\widetilde J^{-1}(\mu)$ for $\mu \in \textup{span}_{\R} \{\mu^i\}$ defines a foliation of the manifold $T^*\widetilde Q$. 
	
\end{proposition}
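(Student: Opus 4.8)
The plan is to unravel the definition of $\widetilde J^{-1}(\mu)$ in terms of the dual basis $\widetilde{\mathfrak B}^*_{\mbox{\tiny{HGS}}} = \{\mu^1,\dots,\mu^k\}$ and then invoke Proposition~\ref{Prop:tildeJ_i} together with the equivariance of $\widetilde J$. First I would fix $\mu = c_i\mu^i$ and an arbitrary $\alpha_x \in T^*_x\widetilde Q$. By \eqref{Eq:J-functions}, pairing the condition $\widetilde J(\alpha_x) = \mu(x)$ against each $\eta_j(x) \in \mathfrak f$ gives $\langle \widetilde J(\alpha_x), \eta_j(x)\rangle = \langle \mu(x), \eta_j(x)\rangle = c_i\langle \mu^i(x), \eta_j(x)\rangle = c_j$, where the last equality uses the duality $\langle \mu^i(x),\eta_j(x)\rangle = \delta_{ij}$ from \eqref{Eq:RedDualBasisHGS}. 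The left-hand side is exactly $\widetilde J_{\eta_j}(\alpha_x) = \widetilde J_j(\alpha_x)$ by the definition right after \eqref{Eq:RedDualBasisHGS}. Conversely, since $\{\eta_1(x),\dots,\eta_k(x)\}$ is a basis of $\mathfrak f$ at every $x$, an element of $\mathfrak f^*$ is determined by its pairings against the $\eta_j(x)$, so $\widetilde J_j(\alpha_x) = c_j$ for all $j$ forces $\widetilde J(\alpha_x) = \sum_j c_j\mu^j(x) = \mu(x)$ (using the expansion $\widetilde J = \sum_j \widetilde J_j\mu^j$ recorded in the remark before Prop.~\ref{P:LevelSets}). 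This proves the set equality $\widetilde J^{-1}(\mu) = \bigcap_i \widetilde J_i^{-1}(c_i)$.

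For the submanifold claim, I would note that the $\widetilde J_i$ are functionally independent by Proposition~\ref{Prop:tildeJ_i}$(i)$, so $(\widetilde J_1,\dots,\widetilde J_k): T^*\widetilde Q \to \R^k$ is a submersion and each common level set $\bigcap_i \widetilde J_i^{-1}(c_i)$ is an embedded submanifold of codimension $k$; moreover these level sets, as $c = (c_1,\dots,c_k)$ ranges over $\R^k$, are exactly the fibers of this submersion and hence (their connected components) foliate $T^*\widetilde Q$. Re-expressing $c \mapsto \mu = c_i\mu^i$ as a linear bijection from $\R^k$ onto $\mathrm{span}_{\R}\{\mu^i\}$, this is the stated foliation by the $\widetilde J^{-1}(\mu)$. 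Finally, $F$-invariance of $\widetilde J^{-1}(\mu)$ follows because each $\widetilde J_i = \rho_\subGW$-reduction of the $G$-invariant horizontal gauge momentum $J_i$ and is therefore $F$-invariant on $T^*\widetilde Q$ (this invariance is already implicit in Prop.~\ref{Prop:tildeJ_i}, being inherited from the $G$-invariance assumed in the conserved quantity assumption); hence the common level set $\bigcap_i \widetilde J_i^{-1}(c_i)$ is $F$-invariant.

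The only genuine subtlety — and the step I would be most careful about — is the passage between "the $\mathfrak f^*$-valued equation $\widetilde J(\alpha_x) = \mu(x)$" and "the $k$ scalar equations $\widetilde J_i(\alpha_x) = c_i$": this is clean precisely because $\{\eta_i(x)\}$ is a frame (not merely $k$ vectors) at every $x \in \widetilde Q$, which is guaranteed by the conserved quantity assumption (the horizontal gauge symmetries are globally defined and functionally/pointwise independent) together with Lemma~\ref{L:lemma1-1}$(ii)$, where the $(\eta_i)_{\widetilde\subQ}$ are shown to span the full vertical space $\widetilde V_{\mbox{\tiny{F}}}$. I would make this pointwise-basis remark explicit before doing the pairing computation, since without it the dual basis $\{\mu^i\}$ and the expansion $\widetilde J = \sum_i \widetilde J_i\mu^i$ would not even be well defined. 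The rest is routine: functional independence gives the submersion, and a submersion's fibers give the foliation.
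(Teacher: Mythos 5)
Your proposal is correct and follows essentially the same route as the paper: unwind $\widetilde J(\alpha_x)=\mu(x)$ via the duality $\langle\mu^i,\eta_j\rangle=\delta_{ij}$ to get the equivalence with $\widetilde J_i(\alpha_x)=c_i$, then use that $\mathcal{J}=(\widetilde J_1,\dots,\widetilde J_k)$ is an $F$-invariant submersion (by the functional independence from Prop.~\ref{Prop:tildeJ_i}) so its fibers are $F$-invariant submanifolds whose connected components foliate $T^*\widetilde Q$. You merely spell out two points the paper leaves implicit — the pointwise-frame property of $\{\eta_i(x)\}$ needed for the converse direction, and the source of the $F$-invariance of the $\widetilde J_i$ — which is a reasonable amount of extra care.
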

\begin{proof} If $\alpha_x \in T^*\widetilde Q$ then $\widetilde J(\alpha_x) = \mu(x)$, is equivalent to
$\widetilde J_i(\alpha_x) = \langle \widetilde J(\alpha_x), \eta_i(x) \rangle = c_i$ for all $i=1,...,k$ and hence $\widetilde{J}^{-1}(\mu) = \cap_i \widetilde{J}_i^{-1}(c_i)$.

Let us consider the $F$-invariant submersion $\mathcal{J} := (\widetilde J_1,...,\widetilde J_k) : T^*\widetilde Q \to \R^k$.  Since for $c=(c_1,...,c_k)\in \R^k$, $\mathcal{J}^{-1}(c) = \widetilde J^{-1}(\mu)$, we conclude that, for each $\mu \in \textup{span}_{\R} \{\mu^i\}$, $\widetilde J^{-1}(\mu)$ is a $F$-invariant manifold  and the collection of connected components of $\widetilde J^{-1}(\mu)$ defines a foliation of $T^*\widetilde Q$. 

\end{proof}

As a consequence of the previous proposition, we conclude that the (partially reduced) nonholonomic vector field $\widetilde X_\nh$ is tangent to the manifolds $\widetilde J^{-1}(\mu)$ for $\mu= c_i\mu^i$.  

However, it is important to note that the map $\widetilde J$ does not behave as a momentum map on $(T^*\widetilde Q, \widetilde\Omega)$, not even in the coordinates given by the horizontal gauge symmetries, in the sense that the vector fields $(\eta_i)_{T^*Q}$ might not be hamiltonian vector fields associated to the functions $\widetilde J_i$ (i.e., ${\bf i}_{(\eta_i)_{T^*Q}}\widetilde \Omega$ can be different from $d\widetilde J_i$).  This observation has a fundamental reflect when we want to study a ``Marsden-Weinstein reduction'':  the pull back of $\widetilde \Omega$ to the manifold $\widetilde J^{-1}(\mu)$ is not basic with respect to the bundle $\widetilde J^{-1}(\mu)\to \widetilde J^{-1}(\mu)/F$.  To solve this problem, in the next section we consider a {\it gauge transformation by a 2-form} $B$ (as it was done in \cite{BGN,BY,GN2008}) so that we have the desired relation between $(\eta_i)_{\mbox{\tiny{$T^*\widetilde Q$}}}$ and the functions $\widetilde J_i$.

\subsection{The suitable dynamical gauge transformation ${\bf B}$} \label{Ss:B}

Consider the partially reduced nonholonomic system $(T^*\widetilde Q, \widetilde\Omega, \widetilde H)$ and a 2-form $\widetilde B$ on $T^*\widetilde Q$.  
Following \cite{SeveraWeinstein} a {\it gauge transformation by a 2-form $\widetilde B$} of the 2-form $\widetilde \Omega$  is just considering the 2-form $\widetilde \Omega + \widetilde B$ on $T^*\widetilde Q$. If the 2-form $\widetilde B$ is semi-basic with respect to the bundle $T^*\widetilde Q\to \widetilde Q$, then the manifold $(T^*\widetilde Q, \widetilde \Omega + \widetilde B)$ is almost symplectic. 

\begin{definition} \cite{BGN}  \label{Def:DynGauge}
A 2-form $\widetilde B$ on $T^*\widetilde Q$ induces a {\it dynamical gauge transformation} of $\widetilde \Omega$ if $\widetilde B$ is semi-basic with respect to the bundle $T^*\widetilde Q\to \widetilde Q$ and ${\bf i}_{\widetilde X_\nh}\widetilde B=0$.  
\end{definition}

\begin{remark}
 The original definition of gauge transformation was done on Dirac structures in \cite{SeveraWeinstein} and the {\it dynamical gauge transformation by a 2-form} was defined on almost Poisson structures in \cite{BGN}. In this section we work with the particular case of {\it dynamical gauge transformation by 2-forms} of 2-forms and in Sec.~\ref{S:NHBracket} we will see the relation with the corresponding almost Poisson brackets. 
\end{remark}

Definition \ref{Def:DynGauge} guarantees that the (partially reduced) dynamics $\widetilde X_\nh$ on $T^*\widetilde Q$ is also defined by 
$$
{\bf i}_{\widetilde X_\nh}(\widetilde \Omega+ \widetilde B)= d\widetilde H.
$$
The goal of considering this extra term given by a 2-form $\widetilde B$ is that there is a special choice of $\widetilde B$ such that the behaviour of the almost symplectic manifolds $(T^*\widetilde Q, \widetilde \Omega)$ and $(T^*\widetilde Q, \widetilde \Omega + B)$ are different regarding the (partially reduced) horizontal gauge momenta. That is, there is a 2-form $\widetilde{\bf B}$ on $T^*\widetilde Q$ that makes $(\eta_i)_{T^*\widetilde{Q}}$ the hamiltonian vector field of the functions $\widetilde J_i$ with respect to the 2-form $\widetilde \Omega + \widetilde{\bf B}$. 
Next, we write the explicit expression of such a 2-form $\widetilde{\bf B}$, which comes from a 2-form ${\bf B}$ defined on $\M$ presented in \cite{BY}.

Let $(\M, \Omega_\subM|_\C, H_\subM)$ be a nonholonomic system with a $G$-symmetry satisfying the dimension assumption.  Let $W$ be a vertical complement of the constraints \eqref{Eq:TQ=D+W} and choose a {\it horizontal} space $H\!\mbox{\tiny{or}} \subset D$ so that 
\begin{equation}\label{Eq:TQ=H+S+W} 
TQ = H\!\mbox{\tiny{or}}\oplus V = H\!\mbox{\tiny{or}}\oplus S \oplus W, 
\end{equation}
(observe that $D=H\!\mbox{\tiny{or}}\oplus S$).

We assume that the nonholonomic system $(\M, \Omega_\subM|_\C, H_\subM)$ admits $k=\textup{rank}(S)$ $G$-invariant (functionally independent) horizontal gauge momenta $\{J_1,...,J_k\}$. The corresponding horizontal gauge symmetries $\{\zeta_1,...,\zeta_k\}$ define the vector fields 
 $\{Y_1,...,Y_k\}$  on $Q$ given by $Y_i:=(\zeta_i)_\subQ$ and then we have the globally defined 1-forms $Y^i$ on $Q$ so that $Y^i|_{H\!\mbox{\tiny{or}}} = Y^i|_W = 0$ and $Y^i(Y_j) = \delta_{ij}$. 
Following \cite{BY}, we define the 2-form $B_1$ on $\M$ to be
\begin{equation}\label{Eq:B_1}
B_1 := \langle J, \mathcal{K}_\subW\rangle + J_i \, d^\C \mathcal{Y}^i
\end{equation} 
where, for  $i=1,...,k$, $\mathcal{Y}^i = \tau_\subM^*Y^i$ and $d^\C \mathcal{Y}^i = \tau_\subM^* d^DY^i$ with $d^DY^i (X,Y)= d{Y}^i (P_D(X),P_D(Y))$ for $X,Y\in TQ$ as in \eqref{Eq:difD}.

The splitting \eqref{Eq:TQ=H+S+W} induces a splitting on $T\M$ so that 
$$
T\M = \mathcal{H}\mbox{\tiny{or}} \oplus \mathcal{V} = \mathcal{H}\mbox{\tiny{or}} \oplus \S \oplus \W, 
$$
where $\mathcal{S}$ is defined in Sec.~\ref{Ss:NHSys}, $(\mathcal{H}\mbox{\tiny{or}})_m =\{v_m\in T_m\M \ : \ T\tau_\subM(v_m)\in (H\!\mbox{\tiny{or}})_q\}$ and $\mathcal{W}_m =\{v_m\in \mathcal{V}_m \ : \ T\tau_\subM(v_m)\in W_q\}$ at each $m\in \M$, $q=\tau_\subM(m)$.
Let us denote by $\mathcal{A} : T\M\to \g$ the principal connection with corresponding horizontal space $\mathcal{H}\mbox{\tiny{or}}$ and denote by $P_{\mbox{\tiny{$\mathcal{V}$}}}:\mathcal{H}\mbox{\tiny{or}}\oplus \mathcal{V} \to \mathcal{V}$ the projection to the second factor. Finally we also define the 2-form $\mathcal{B}$ on $\M$ as it was also done in \cite{BY}, 
\begin{equation}\label{Eq:CalB}
\mathcal{B} :=\langle J, \mathcal{K}_{\mbox{\tiny{$\mathcal{V}$}}} \rangle - \frac{1}{2}(\kappa_\g \wedge {\bf i}_{P_{\mbox{\tiny{$\mathcal{V}$}}}(X_\nh)}[\mathcal{K}_\subW +d^\C\mathcal{Y}^i\otimes \zeta_ i])|_{\mathcal{H}\mbox{\tiny{or}}},
\end{equation}
where $\mathcal{K}_{\mbox{\tiny{$\mathcal{V}$}}}$ is the curvature of $\mathcal{A}$,   and  $\kappa_\g$ is the $\g^*$-valued 1-form on $\M$ given, at each $\mathcal{X}\in T\M$, by $\kappa_\g (\mathcal{X}, \eta) = \kappa(T\tau_\M(\mathcal{X}), \eta_\subQ)$, for  $\eta\in\g$; for more details see \cite[Sec.~3.3]{BY}.

 \begin{proposition} \label{Prop:B_G-inv} \emph{\cite{BY}}
  The 2-forms $B_1$ and $\mathcal{B}$ defined in \eqref{Eq:B_1} and \eqref{Eq:CalB}, respectively, are semi-basic with respect to the bundle $\tau_\subM:\M\to Q$ and $G$-invariant. Moreover, the 2-form 
\begin{equation}\label{Def:B}
{\bf B} := B_1 + \mathcal{B},
\end{equation}
satisfies the {\it dynamical condition} 
\begin{equation} \label{Eq:DynamicCondition}
{\bf i}_{X_{\emph\nh}} {\bf B} = 0. 
\end{equation}
\end{proposition}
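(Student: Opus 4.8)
The plan is to verify the three claims of Proposition~\ref{Prop:B_G-inv} in turn: (a) semi-basicness and $G$-invariance of $B_1$ and $\mathcal{B}$, and (c) the dynamical condition ${\bf i}_{X_\nh}{\bf B}=0$. All three assertions are cited from \cite{BY}, so the argument will follow the structure there, and I will lean heavily on the explicit formulas \eqref{Eq:B_1} and \eqref{Eq:CalB} together with the splitting \eqref{Eq:TQ=H+S+W}.

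First I would handle \emph{semi-basicness}. For $B_1 = \langle J, \mathcal{K}_\subW\rangle + J_i\, d^\C\mathcal{Y}^i$, note that $\mathcal{K}_\subW = \tau_\subM^* K_\subW$ and $d^\C\mathcal{Y}^i = \tau_\subM^* d^D Y^i$ are pullbacks of forms on $Q$, hence semi-basic with respect to $\tau_\subM:\M\to Q$; since $\langle J,\cdot\rangle$ and the functions $J_i$ are just function coefficients, $B_1$ is semi-basic. For $\mathcal{B}$, the term $\langle J, \mathcal{K}_{\mbox{\tiny{$\mathcal{V}$}}}\rangle$ involves the curvature of $\mathcal{A}$, which is horizontal (it kills vertical vectors), so it is semi-basic; the second term is built by restriction to $\mathcal{H}\mbox{\tiny{or}}$ via $(\cdot)|_{\mathcal{H}\mbox{\tiny{or}}}$ and a wedge with $\kappa_\g$, and one checks directly from the definition that this kills vertical vectors. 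For \emph{$G$-invariance}, I would observe that each building block is $G$-invariant: $J$ is equivariant, $\mathcal{K}_\subW$ is $G$-invariant (because $W$ is $G$-invariant by the vertical symmetry condition and Prop.~\ref{P:VertSym-Equivalencies}), so $\langle J, \mathcal{K}_\subW\rangle$ is $G$-invariant as a contraction of an equivariant map with an invariant form; the functions $J_i$ are $G$-invariant by hypothesis (they are the $G$-invariant horizontal gauge momenta), the $1$-forms $Y^i$ are $G$-invariant because the $\zeta_i$ are $Ad$-invariant (Lemma~\ref{L:J_xi-Invariant}) and the splitting \eqref{Eq:TQ=H+S+W} is $G$-invariant (we chose $H\!\mbox{\tiny{or}}$ to be $G$-invariant), so $d^\C\mathcal{Y}^i$ is $G$-invariant; similarly the curvature $\mathcal{K}_{\mbox{\tiny{$\mathcal{V}$}}}$, the form $\kappa_\g$, and the vector field $X_\nh$ are all $G$-invariant, so $\mathcal{B}$ is $G$-invariant.

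The main work is the \emph{dynamical condition} \eqref{Eq:DynamicCondition}. I would compute ${\bf i}_{X_\nh}{\bf B} = {\bf i}_{X_\nh}B_1 + {\bf i}_{X_\nh}\mathcal{B}$ by decomposing $X_\nh$ according to the splitting $T\M = \mathcal{H}\mbox{\tiny{or}}\oplus\S\oplus\W$, writing $X_\nh = P_{\mathcal{H}\mbox{\tiny{or}}}(X_\nh) + P_{\mbox{\tiny{$\mathcal{V}$}}}(X_\nh)$. For ${\bf i}_{X_\nh}B_1$: the key input is that $X_\nh(J_i)=0$ (the $J_i$ are conserved), together with a careful evaluation of ${\bf i}_{X_\nh}\langle J,\mathcal{K}_\subW\rangle$ and ${\bf i}_{X_\nh}(J_i\, d^\C\mathcal{Y}^i)$, using that $X_\nh$ is a section of $\C$ so its $\W$-component vanishes and $d^D$ only sees the $D$-part. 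The subtle cancellation is between the piece of ${\bf i}_{X_\nh}B_1$ coming from $\langle J,\mathcal{K}_\subW\rangle$ and the explicitly-designed correction term $-\tfrac12(\kappa_\g\wedge {\bf i}_{P_{\mbox{\tiny{$\mathcal{V}$}}}(X_\nh)}[\mathcal{K}_\subW + d^\C\mathcal{Y}^i\otimes\zeta_i])|_{\mathcal{H}\mbox{\tiny{or}}}$ in $\mathcal{B}$: the $-\tfrac12$-term in $\mathcal{B}$ is precisely engineered so that ${\bf i}_{X_\nh}$ of it cancels ${\bf i}_{X_\nh}$ of the sum of the $\langle J,\mathcal{K}_\subW\rangle$-type contributions plus ${\bf i}_{X_\nh}\langle J,\mathcal{K}_{\mbox{\tiny{$\mathcal{V}$}}}\rangle$, on horizontal vectors, while on vertical vectors semi-basicness kills everything. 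I expect this interlocking cancellation — bookkeeping the curvature identity $\mathcal{K}_{\mbox{\tiny{$\mathcal{V}$}}} = \mathcal{K}_\subW + (\text{terms involving }\S)$ relative to the nested splitting, and tracking the factor $\tfrac12$ from the wedge contraction ${\bf i}_{X_\nh}(\kappa_\g\wedge\beta) = \langle\cdots\rangle\beta - \kappa_\g\,{\bf i}_{X_\nh}\beta$ — to be the main obstacle. Since all of this is established in \cite[Sec.~3.3]{BY}, for the present paper I would simply recall the statement and refer the reader there for the detailed verification, perhaps reproducing only the decomposition $X_\nh = P_{\mathcal{H}\mbox{\tiny{or}}}(X_\nh)+P_{\mbox{\tiny{$\mathcal{V}$}}}(X_\nh)$ and indicating where conservation of the $J_i$ enters.
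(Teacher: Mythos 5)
The paper offers no proof of this proposition at all --- it is quoted from \cite{BY} with a citation and the reader is sent to \cite[Sec.~3.3]{BY} --- so your plan, which checks semi-basicness and $G$-invariance directly from the explicit formulas \eqref{Eq:B_1} and \eqref{Eq:CalB} and defers the cancellation behind ${\bf i}_{X_\nh}{\bf B}=0$ to \cite{BY}, is consistent with what the paper does and is correct in outline. One small imprecision worth fixing: $\langle J,\mathcal{K}_{\mbox{\tiny{$\mathcal{V}$}}}\rangle$ is semi-basic with respect to $\tau_\subM:\M\to Q$ not because the curvature annihilates the $G$-vertical space $\mathcal{V}$ (that is a different notion of ``vertical''; indeed the $\tau_\subM$-fibers sit inside $\mathcal{H}\mbox{\tiny{or}}$), but because $\mathcal{A}=\tau_\subM^*A$ for a connection $A$ on $Q$, so its curvature is the $\tau_\subM$-pullback of a $\g$-valued 2-form on $Q$ and is therefore killed by $\tau_\subM$-vertical vectors.
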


We assume now that the vertical complement $W$ verifies the vertical symmetry condition and hence, following Sec.~2, we consider the partially reduced system $(T^*\widetilde Q, \widetilde \Omega, \widetilde H)$ with the partially reduced horizontal gauge momenta $\{\widetilde J_1,...,\widetilde J_k\}$. 
Next, we will see that the 2-forms $B_1$ and $\mathcal{B}$ descend to well defined 2-forms on $T^*\widetilde Q$ and, in particular, $B_1$ has an explicit expression on $T^*\widetilde Q$.  

Following Lemma \ref{L:lemma1-1}$(ii)$  the splitting \eqref{Eq:TQ=H+S+W} induces the connection $\widetilde A = \widetilde Y^i\otimes \eta_i$ on $\widetilde Q$ as in \eqref{Eq:TildeA}. Define the 1-forms $\widetilde{\mathcal Y}^i$  on $T^*\widetilde Q$ such that $\tau_{\widetilde\subQ}^*\widetilde Y^i = \widetilde{\mathcal{Y}}^i$ for $\tau_{\widetilde\subQ} : T^*\widetilde Q \to \widetilde Q$ for each $i=1,...,k$ (or equivalently $\rho_\subGW^*\widetilde{\mathcal Y}^i = \mathcal{Y}^i$). Moreover, from Lemma \ref{L:lemma1-1} and \eqref{Eq:TildeA} these forms satisfy that $\widetilde{\mathcal Y}^i ( (\eta_i)_{\mbox{\tiny{$T^*\widetilde Q$}}}) = \delta_{ij}$.

\begin{proposition} \label{P:B-Chaplygin} Consider the nonholonomic system $(T^*\widetilde Q, \widetilde \Omega, \widetilde H)$,  
\begin{enumerate} \setlength\itemsep{-0.3em}
 \item[$(i)$] the 2-forms $B_1$ and $\mathcal{B}$ on $\M$ are basic with respect to the bundle $\rho_{\mbox{\tiny{$G_{\!W}$}}}:\M\to T^*\widetilde Q$, i.e., there exist $\widetilde{B}_1$ and $\widetilde{\mathcal B}$ on $T^*\widetilde Q$ such that $\rho_{\mbox{\tiny{$G_{\!W}$}}}^* \widetilde{B}_1 = {B}_1$ and $\rho_{\mbox{\tiny{$G_{\!W}$}}}^* \widetilde{\mathcal B} = {\mathcal B}$. In particular, 
 \begin{equation}\label{Eq:tildeB1}
\widetilde B_1 =B_{\mbox{\tiny{$\!\langle\! J\mathcal{K}\!\rangle$}}} + \widetilde J_i d\widetilde{\mathcal{Y}}^i,  
 \end{equation}
and $\widetilde{\mathcal{B}}$ is  basic with respect to the principal bundle $\rho_{\mbox{\tiny{F}}}:T^*\widetilde Q \to T^*\widetilde Q/F$, that is, there is a 2-form $\overline{\mathcal B}$ on $T^*\widetilde Q/F$ such that $\rho^*_{\mbox{\tiny{F}}}\overline{\mathcal B} = \widetilde{\mathcal{B}}$.
 
\item[$(ii)$] The 2-form  $\widetilde{\bf B} := \widetilde{B}_1 + \widetilde{\mathcal B}$ defines a dynamical gauge transformation on $(T^*\widetilde Q, \widetilde \Omega, \widetilde H)$.
 \end{enumerate}
\end{proposition}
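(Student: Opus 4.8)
The plan is to derive the whole statement from Proposition~\ref{Prop:B_G-inv} (which gives that $B_1$, $\mathcal{B}$ and ${\bf B}=B_1+\mathcal{B}$ are semi-basic over $\tau_\subM:\M\to Q$, are $G$-invariant, and satisfy ${\bf i}_{X_\nh}{\bf B}=0$) together with the elementary fact that the pullback $\rho_\subGW^*$ along the orbit projection $\rho_\subGW:\M\to T^*\widetilde Q$ is injective on forms. Recall that a $G_\subW$-invariant form on $\M$ is basic for the principal bundle $\rho_\subGW$ exactly when it annihilates the infinitesimal generators $\xi_\subM$, $\xi\in\mathfrak{w}$. For $B_1=\langle J,\mathcal{K}_\subW\rangle+J_i\,d^\C\mathcal{Y}^i$ this holds because $\mathcal{K}_\subW=\tau_\subM^*d^DA_\subW$ and $d^\C\mathcal{Y}^i=\tau_\subM^*d^DY^i$, while $T\tau_\subM(\xi_\subM)=\xi_\subQ\in W=\ker P_D$ and $d^D$ is killed by arguments in $W$; moreover $\langle J,\mathcal{K}_\subW\rangle$ is already known from Sec.~\ref{Ss:VerticalSymmetries} to descend to $B_{\langle J\mathcal{K}\rangle}$. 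For $\mathcal{B}$ one uses that $\xi_\subM$ lies in the vertical space $\mathcal{V}$ of the $G$-action on $\M$ for \emph{every} $\xi\in\g$: the curvature term $\langle J,\mathcal{K}_{\mathcal{V}}\rangle$ annihilates $\mathcal{V}$, and the remaining term of \eqref{Eq:CalB} is pointwise restricted to $\mathcal{H}\mbox{\tiny{or}}$, which is complementary to $\mathcal{V}$, so in fact ${\bf i}_{\xi_\subM}\mathcal{B}=0$ for all $\xi\in\g$. This already produces descended 2-forms $\widetilde B_1$ and $\widetilde{\mathcal B}$ on $T^*\widetilde Q$ with $\rho_\subGW^*\widetilde B_1=B_1$ and $\rho_\subGW^*\widetilde{\mathcal B}=\mathcal{B}$. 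Since $\rho_\subGW$ is equivariant for the $G$-action on $\M$ and the induced $F$-action on $T^*\widetilde Q$, the generator of $\nu\in\mathfrak{f}$ on $T^*\widetilde Q$ is $\rho_\subGW$-related to $\hat\nu_\subM$ for any lift $\hat\nu\in\g$; hence ${\bf i}_{\xi_\subM}\mathcal{B}=0$ for all $\xi\in\g$ yields, by injectivity of $\rho_\subGW^*$, that ${\bf i}_{\nu_{T^*\widetilde Q}}\widetilde{\mathcal B}=0$ for all $\nu\in\mathfrak{f}$, and together with $F$-invariance (inherited from the $G$-invariance of $\mathcal{B}$) this gives the further descent to $\overline{\mathcal B}$ on $T^*\widetilde Q/F$ with $\rho^*_{\mbox{\tiny{F}}}\overline{\mathcal B}=\widetilde{\mathcal B}$.

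For the explicit formula \eqref{Eq:tildeB1}, I would compute $\rho_\subGW^*(B_{\langle J\mathcal{K}\rangle}+\widetilde J_i\,d\widetilde{\mathcal Y}^i)=\langle J,\mathcal{K}_\subW\rangle+J_i\,d\mathcal{Y}^i$, using $\rho_\subGW^*\widetilde J_i=J_i$ (Prop.~\ref{Prop:tildeJ_i}) and $\rho_\subGW^*\widetilde{\mathcal Y}^i=\mathcal{Y}^i$, and compare with $B_1=\langle J,\mathcal{K}_\subW\rangle+J_i\,d^\C\mathcal{Y}^i$. By injectivity of $\rho_\subGW^*$ this reduces to proving $d^\C\mathcal{Y}^i=d\mathcal{Y}^i$, i.e. $d^DY^i=dY^i$ on $Q$, and I expect this to be the main obstacle, being the only step that is not pure bookkeeping. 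It is equivalent to $dY^i$ annihilating $W$: writing $W_q$ as spanned by the $\xi_\subQ$ with $\xi\in\mathfrak{w}$, and using that $Y^i$ is $G$-invariant (so $\mathcal{L}_{\xi_\subQ}Y^i=0$) and $Y^i(\xi_\subQ)\equiv0$, Cartan's formula gives, for any vector field $X$,
$$
dY^i(X,\xi_\subQ)=-\xi_\subQ(Y^i(X))-Y^i([X,\xi_\subQ])=-Y^i([\xi_\subQ,X])+Y^i([\xi_\subQ,X])=0,
$$
where $\xi_\subQ(Y^i(X))=(\mathcal{L}_{\xi_\subQ}Y^i)(X)+Y^i([\xi_\subQ,X])=Y^i([\xi_\subQ,X])$. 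Hence $d^\C\mathcal{Y}^i=d\mathcal{Y}^i$ and $\widetilde B_1=B_{\langle J\mathcal{K}\rangle}+\widetilde J_i\,d\widetilde{\mathcal Y}^i$, completing item~$(i)$.

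For item~$(ii)$ I would verify the two conditions of Definition~\ref{Def:DynGauge} for $\widetilde{\bf B}=\widetilde B_1+\widetilde{\mathcal B}$. Semi-basicness with respect to $\tau_{\widetilde\subQ}:T^*\widetilde Q\to\widetilde Q$ follows from the commuting square $\tau_{\widetilde\subQ}\circ\rho_\subGW=\rho_{\widetilde\subQ}\circ\tau_\subM$: since the $G_\subW$-action is free one has $\ker T\tau_\subM\cap\ker T\rho_\subGW=0$, and a dimension count then makes $T\rho_\subGW$ map $\ker T\tau_\subM$ isomorphically onto $\ker T\tau_{\widetilde\subQ}$, so every $\tau_{\widetilde\subQ}$-vertical vector lifts through the submersion $\rho_\subGW$ to a $\tau_\subM$-vertical one; as $\rho_\subGW^*\widetilde{\bf B}={\bf B}$ is semi-basic over $Q$, the contraction of $\widetilde{\bf B}$ with any $\tau_{\widetilde\subQ}$-vertical vector vanishes. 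For the dynamical condition, $\widetilde X_\nh=T\rho_\subGW(X_\nh)$ by \eqref{Eq:TildeDyn}, whence $\rho_\subGW^*({\bf i}_{\widetilde X_\nh}\widetilde{\bf B})={\bf i}_{X_\nh}(\rho_\subGW^*\widetilde{\bf B})={\bf i}_{X_\nh}{\bf B}=0$ by Proposition~\ref{Prop:B_G-inv}, and injectivity of $\rho_\subGW^*$ gives ${\bf i}_{\widetilde X_\nh}\widetilde{\bf B}=0$. Apart from the identity $d^DY^i=dY^i$, the entire argument is bookkeeping with this commuting square and with the injectivity of $\rho_\subGW^*$ on forms.
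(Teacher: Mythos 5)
Your proof is correct and follows essentially the same route as the paper's: basicness of $B_1$ and $\mathcal{B}$ over $\rho_\subGW$ from $G_\subW$-invariance together with annihilation of the generators in $\W$, semi-basicness of $\mathcal{B}$ over the full $G$-orbit bundle read off from \eqref{Eq:CalB}, and pullback/injectivity arguments along $\rho_\subGW$ for item $(ii)$. The one place you go beyond the paper --- which calls the derivation of \eqref{Eq:tildeB1} ``straightforward'' --- is the explicit verification that $d^DY^i=dY^i$ via Cartan's formula and the invariance of $Y^i$, which is exactly the detail needed to reconcile $d^\C\mathcal{Y}^i$ in $B_1$ with the full differential $d\widetilde{\mathcal{Y}}^i$ appearing in $\widetilde B_1$, and your argument for it is correct.
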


\begin{proof}
 $(i)$ By construction, the 2-forms $B_1$ and ${\mathcal B}$ verify that ${\bf i}_X{B_1} = {\bf i}_X{\mathcal B}=0$ for all $X\in \Gamma(\W)$ (where $\mathcal{W}$ is the $G_\subW$-orbit on $\M$). Since they are  $G_\subW$-invariant (see Prop.~\ref{Prop:B_G-inv}),  we conclude that they are basic with respect to the bundle $\rho_{\mbox{\tiny{$G_{\!W}$}}}: \M\to T^*\widetilde Q$.  Therefore, from the expression of $B_1$ in \eqref{Eq:B_1} and by Prop.~\ref{Prop:tildeJ_i} $(i)$, it is straightforward to obtain that  $\widetilde B_1 =B_{\mbox{\tiny{$\!\langle\! J\mathcal{K}\!\rangle$}}} + \widetilde J_i d\widetilde{\mathcal{Y}}^i$, where $\widetilde{\mathcal Y}^i$ are 1-forms on $T^*\widetilde Q$ such that $\rho_\subGW^*\widetilde{\mathcal Y}^i = \mathcal{Y}^i$. Finally,  by \eqref{Eq:CalB} we see that $\mathcal{B}$ is semi-basic with respect to the bundle $\rho:\M\to \M/G$ and $G$-invariant and hence it is basic.   
 
$(ii)$ Since ${\bf B}$ is semi-basic with respect to the bundle $\tau_\subM: \M\to Q$ then $\widetilde{\bf B}$ is semi-basic with respect to $\tau_{\mbox{\tiny{$\widetilde{Q}$}}} : T^*\widetilde Q\to \widetilde Q$. The dynamical condition ${\bf i}_{\widetilde X_\nh} \widetilde{\bf B} = 0$ is a direct consequence of \eqref{Eq:DynamicCondition}.
\end{proof}
 
 As a consequence of Prop.~\ref{P:B-Chaplygin} and \eqref{Eq:DynChaplygin},  the (partially) reduced nonholonomic vector field $\widetilde X_\nh$ on $T^*\widetilde Q$ is also determined by
 $$
{\bf i}_{\widetilde X_\nh} \widetilde \Omega_\BB = d\widetilde H \qquad \mbox{where} \qquad \widetilde\Omega_\BB : = \widetilde\Omega + \widetilde{\bf B}.
$$
and, as a consequence,  the triple $(T^*\widetilde Q, \widetilde \Omega_\BB , \widetilde H)$ describes our (partially reduced) nonholonomic system as well.
 Note that, by Prop.~\ref{Prop:B_G-inv}, the 2-form $\widetilde{\bf B}$ is $F$-invariant and hence $F$ is a symmetry of the nonholonomic system $(T^*\widetilde Q, \widetilde \Omega_\BB , \widetilde H)$.

The following Proposition puts in evidence the need of considering the nonholonomic system des\-cribed by $(T^*\widetilde Q, \widetilde \Omega_\BB , \widetilde H)$ instead of $(T^*\widetilde Q, \widetilde \Omega, \widetilde H)$ showing that the canonical momentum map $\widetilde J: T^*\widetilde Q \to \mathfrak{f}^*$  is the map that behaves as a momentum on $(T^*\widetilde Q, \widetilde \Omega_\BB)$ when it is evaluated on the $\mathfrak{f}$-valued functions on $\widetilde Q$ of  $\widetilde{\mathfrak{B}}_{\mbox{\tiny{HGS}}}$ given in \eqref{Eq:RedBasisHGS}.

\begin{proposition}\label{Prop:HGM-MomMap} Consider the partially reduced nonholonomic system $(T^*\widetilde Q, \widetilde \Omega_\BB, \widetilde H)$.  The (partially reduced) horizontal gauge symmetries  $\eta_i\in \widetilde{\mathfrak{B}}_{\emph{\mbox{\tiny{HGS}}}}$ and  the momentum map $\widetilde J : T^*\widetilde Q\to \mathfrak{f}^*$ satisfy the relation
 $$
 {\bf i}_{(\eta_i)_{T^*\widetilde Q}} \widetilde \Omega_\BB = d \widetilde{J}_i,
 $$
 where $\widetilde J_i = \langle \widetilde J, \eta_i\rangle$ are the partially reduced horizontal gauge momenta.
\end{proposition}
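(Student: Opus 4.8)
The plan is to reduce the claimed identity, by purely algebraic manipulations, to an identity of $\mathfrak f$-valued $1$-forms on the base $\widetilde Q$, and then to prove the latter from the structure equation of the connection $\widetilde A$.

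Set $X_i := (\eta_i)_{T^*\widetilde Q}$. First I would rewrite $\widetilde\Omega_\BB$ using \eqref{Eq:DynChaplygin} and \eqref{Eq:tildeB1}: the term $-B_{\mbox{\tiny{$\!\langle\! J\mathcal{K}\!\rangle$}}}$ of $\widetilde\Omega$ cancels the term $+B_{\mbox{\tiny{$\!\langle\! J\mathcal{K}\!\rangle$}}}$ of $\widetilde B_1$, so
$$
\widetilde\Omega_\BB \ =\ \Omega_{\mbox{\tiny{$\widetilde Q$}}} + \widetilde J_j\, d\widetilde{\mathcal Y}^j + \widetilde{\mathcal B} \qquad (\text{sum over } j).
$$
By Prop.~\ref{P:B-Chaplygin}$(i)$, $\widetilde{\mathcal B}$ is basic — in particular horizontal — for $\rho_{\mbox{\tiny{F}}}:T^*\widetilde Q\to T^*\widetilde Q/F$, and $X_i$ is tangent to the $F$-orbits, so ${\bf i}_{X_i}\widetilde{\mathcal B}=0$. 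Since $\widetilde J_i={\bf i}_{X_i}\Theta_{\mbox{\tiny{$\widetilde Q$}}}$ (Def.~\ref{Def:RedHorGauge}), Cartan's formula gives ${\bf i}_{X_i}\Omega_{\mbox{\tiny{$\widetilde Q$}}}=d\widetilde J_i-\mathcal{L}_{X_i}\Theta_{\mbox{\tiny{$\widetilde Q$}}}$ (using $\Omega_{\mbox{\tiny{$\widetilde Q$}}}=-d\Theta_{\mbox{\tiny{$\widetilde Q$}}}$); substituting the three contributions, the claim ${\bf i}_{X_i}\widetilde\Omega_\BB=d\widetilde J_i$ reduces to
$$
\mathcal{L}_{X_i}\Theta_{\mbox{\tiny{$\widetilde Q$}}}\ =\ \widetilde J_j\,{\bf i}_{X_i}d\widetilde{\mathcal Y}^j.
$$

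Next I would compute both sides. Fixing a basis $\{e_a\}$ of $\mathfrak f$ and writing $\eta_i=f_i^a e_a$ with $f_i^a\in C^\infty(\widetilde Q)$, one has $X_i=(\tau_{\widetilde Q}^*f_i^a)\,(e_a)_{T^*\widetilde Q}$; since $(e_a)_{T^*\widetilde Q}$ is a cotangent lift, $\mathcal{L}_{(e_a)_{T^*\widetilde Q}}\Theta_{\mbox{\tiny{$\widetilde Q$}}}=0$ and ${\bf i}_{(e_a)_{T^*\widetilde Q}}\Theta_{\mbox{\tiny{$\widetilde Q$}}}=\widetilde J_{e_a}$, so $\mathcal{L}_{X_i}\Theta_{\mbox{\tiny{$\widetilde Q$}}}=\widetilde J_{e_a}\,d(\tau_{\widetilde Q}^*f_i^a)=\langle\widetilde J,d\eta_i\rangle$, where $d\eta_i=df_i^a\, e_a$ is the differential of $\eta_i$ as an $\mathfrak f$-valued function. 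On the other side, $X_i$ is $\tau_{\widetilde Q}$-related to $(\eta_i)_{\widetilde Q}$ and $\widetilde{\mathcal Y}^j=\tau_{\widetilde Q}^*\widetilde Y^j$, hence ${\bf i}_{X_i}d\widetilde{\mathcal Y}^j=\tau_{\widetilde Q}^*({\bf i}_{(\eta_i)_{\widetilde Q}}d\widetilde Y^j)$, and using $\widetilde J_j=\langle\widetilde J,\eta_j\rangle$ we get $\widetilde J_j\,{\bf i}_{X_i}d\widetilde{\mathcal Y}^j=\langle\widetilde J,\ \eta_j\,{\bf i}_{(\eta_i)_{\widetilde Q}}d\widetilde Y^j\rangle$. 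Because the canonical momentum map $\widetilde J$ is fibrewise surjective onto $\mathfrak f^*$, the two sides coincide as $1$-forms on $T^*\widetilde Q$ if and only if
$$
d\eta_i\ =\ \eta_j\,\big({\bf i}_{(\eta_i)_{\widetilde Q}}d\widetilde Y^j\big)\qquad (\text{sum on }j)
$$
as $\mathfrak f$-valued $1$-forms on $\widetilde Q$. I would prove this using the principal connection $\widetilde A=\widetilde Y^j\otimes\eta_j$ of \eqref{Eq:TildeA}. Contracting the structure equation with $(\eta_i)_{\widetilde Q}$ and using horizontality of the curvature, $\widetilde A((\eta_i)_{\widetilde Q})=\eta_i$, and equivariance of $\widetilde A$ for the left $F$-action yields $d\widetilde A((\eta_i)_{\widetilde Q},\,\cdot\,)=[\eta_i,\widetilde A(\,\cdot\,)]$. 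On the other hand $d\widetilde A=d\eta_j\wedge\widetilde Y^j+\eta_j\, d\widetilde Y^j$; contracting with $(\eta_i)_{\widetilde Q}$ and using $\widetilde Y^j((\eta_i)_{\widetilde Q})=\delta_{ij}$, $\widetilde A(\,\cdot\,)=\widetilde Y^l(\,\cdot\,)\eta_l$, and $\mathcal{L}_{(\eta_i)_{\widetilde Q}}\eta_j=[\eta_i,\eta_j]$ — which holds because the $G$-invariance of $J_j$ makes $\zeta_j$ $Ad$-invariant (Lemma~\ref{L:J_xi-Invariant}), hence $\eta_j$ $Ad$-equivariant, $\eta_j\circ\widetilde\Psi_h=Ad_h\circ\eta_j$ — gives $d\widetilde A((\eta_i)_{\widetilde Q},\,\cdot\,)=[\eta_i,\widetilde A(\,\cdot\,)]-d\eta_i+\eta_j\,{\bf i}_{(\eta_i)_{\widetilde Q}}d\widetilde Y^j$. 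Equating the two expressions, the Lie-bracket terms cancel and the required identity follows.

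The step I expect to be the main obstacle is this last one: the structure equation for $\widetilde A$ and the equivariance of the $\eta_j$ each contribute a Lie-bracket term, and the argument hinges on their exact cancellation; this happens with the left-action conventions of the paper, but one must keep consistent track of the $\pm[\,\cdot\,,\cdot\,]$ conventions for fundamental vector fields, connection $1$-forms and curvature (in particular the equivariance $\widetilde\Psi_h^*\widetilde A=Ad_h\circ\widetilde A$), and also verify the $\tau_{\widetilde Q}$-relatedness facts used above (that $X_i$ projects onto $(\eta_i)_{\widetilde Q}$ and that ${\bf i}_{X_i}$ commutes with the pullback of $d\widetilde Y^j$). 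As an alternative, one could transport the analogous identity on $\M$ from \cite{BY} along $\rho_{\mbox{\tiny{$G_{\!W}$}}}:\M\to T^*\widetilde Q$, using that $T\rho_{\mbox{\tiny{$G_{\!W}$}}}$ identifies $\C$ fibrewise with $T(T^*\widetilde Q)$, that $\rho_{\mbox{\tiny{$G_{\!W}$}}}^*\widetilde\Omega_\BB|_\C=(\Omega_\subM+{\bf B})|_\C$ (Prop.~\ref{P:B-Chaplygin}), and that $\rho_{\mbox{\tiny{$G_{\!W}$}}}^*\widetilde J_i=J_i$ (Prop.~\ref{Prop:tildeJ_i}); but the direct route keeps everything on $T^*\widetilde Q$.
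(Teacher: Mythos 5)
Your proof is correct. Its opening move is identical to the paper's: the term $-B_{\mbox{\tiny{$\!\langle\! J\mathcal{K}\!\rangle$}}}$ in $\widetilde\Omega$ cancels against $\widetilde B_1$, and $\widetilde{\mathcal B}$ drops out because it is basic for $\rho_{\mbox{\tiny{$F$}}}$ while $(\eta_i)_{T^*\widetilde Q}$ is tangent to the $F$-orbits, so both arguments reduce the claim to ${\bf i}_{(\eta_i)_{T^*\widetilde Q}}\bigl(\Omega_{\mbox{\tiny{$\widetilde Q$}}}+\widetilde J_j\,d\widetilde{\mathcal Y}^j\bigr)=d\widetilde J_i$. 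From there the routes genuinely differ. The paper picks a $G$-invariant frame $\{\widetilde X_a,(\eta_j)_{\widetilde Q}\}$, writes $\Theta_{\mbox{\tiny{$\widetilde Q$}}}=p_a\widetilde{\mathcal X}^a+p_j\widetilde{\mathcal Y}^j$ with $\widetilde J_j=p_j$, and evaluates the contraction in these coordinates: the added term $\widetilde J_j\,d\widetilde{\mathcal Y}^j$ cancels the $-p_j\,d\widetilde{\mathcal Y}^j$ piece of $-d\Theta_{\mbox{\tiny{$\widetilde Q$}}}$, and the remaining cross-terms vanish by invariance of the frame and of the momenta (a verification the paper leaves implicit). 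You instead stay frame-free: Cartan's formula isolates $\mathcal L_{(\eta_i)_{T^*\widetilde Q}}\Theta_{\mbox{\tiny{$\widetilde Q$}}}$, pairing with $\widetilde J$ pushes everything down to $\widetilde Q$, and the whole statement collapses to the $\mathfrak f$-valued identity $d\eta_i=\eta_j\,{\bf i}_{(\eta_i)_{\widetilde Q}}d\widetilde Y^j$, which you derive from the equivariance of $\widetilde A=\widetilde Y^j\otimes\eta_j$. I checked the sign bookkeeping you flag as the delicate point: with the paper's conventions ($\eta_j\circ\widetilde\Psi_h=Ad_h\circ\eta_j$ from Lemma~\ref{L:J_xi-Invariant} and $F$-invariant $\widetilde Y^j$, whence $\widetilde\Psi_h^*\widetilde A=Ad_h\circ\widetilde A$) both bracket terms equal $[\eta_i,\widetilde A(\cdot)]$ and cancel as claimed; a one-line shortcut is ${\bf i}_{(\eta_i)_{\widetilde Q}}d\widetilde A=\mathcal L_{(\eta_i)_{\widetilde Q}}\widetilde A-d\eta_i$ combined with $\mathcal L_{(\eta_i)_{\widetilde Q}}\widetilde A=[\eta_i,\widetilde A]+d\eta_i$, which bypasses the curvature altogether. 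Two minor remarks: only the ``if'' direction of your fibrewise-surjectivity step is needed, so nothing hinges on it; and what your approach buys is that it makes explicit exactly where the $G$-invariance of the $J_i$ (equivalently the $Ad$-equivariance of the $\eta_i$) enters, and it produces a reusable identity on the base, at the cost of being longer than the paper's coordinate computation.
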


\begin{proof}
Recall that $\widetilde \Omega_\BB = \widetilde \Omega + {\bf B} =\Omega_{\widetilde\subQ} - B_{\mbox{\tiny{$\!\langle\! J\mathcal{K}\!\rangle$}}}  + \widetilde B_1 + \widetilde{\mathcal{B}}$. Using \eqref{Eq:B_1} and the fact that ${\mathcal B}$ is basic with respect to $\rho_{\mbox{\tiny{$F$}}} : T^*\widetilde Q \to T^*\widetilde Q/F$, we have that 
$$
{\bf i}_{(\eta_i)_{T^*\widetilde Q}} \widetilde \Omega_\BB = {\bf i}_{(\eta_i)_{T^*\widetilde Q}} (\widetilde \Omega + \widetilde B_1 ) = {\bf i}_{(\eta_i)_{T^*\widetilde Q}} (\Omega_{\widetilde\subQ} +  \widetilde J_i d\widetilde{\mathcal{Y}}^i).
$$

Let us denote by $\widetilde Y_i:=(\eta_i)_{\mbox{\tiny{$\widetilde Q$}}}$. Consider now $G$-invariant vector fields $\widetilde X_1,..., \widetilde X_n$ so that \linebreak $\{\widetilde X_1,..., \widetilde X_n, \widetilde Y_1,..., \widetilde Y_k\}$ is a basis of vector fields on $\widetilde Q$ and consider its dual basis of 1-forms on $\widetilde Q$ given by $\{\widetilde X^1,..., X^n, \widetilde Y^1,..., \widetilde Y^k\}$. Since $\widetilde{\mathcal{Y}}^i  = \tau_{\widetilde\subQ}^*\widetilde Y^i$ for $\tau_{\widetilde\subQ} : T^*\widetilde Q \to \widetilde Q$ the canonical projection, we have that $\Theta_{\widetilde\subQ} = p_a \widetilde{\mathcal{X}}^a + p_i \widetilde{\mathcal{Y}}^i$ where $\widetilde{\mathcal{X}}^a  = \tau_{\widetilde\subQ}^*\widetilde X^a$ and $\widetilde J_i=p_i$ (see e.g. \cite{BY}), and hence we obtain that ${\bf i}_{(\eta_i)_{T^*\widetilde Q}} \widetilde \Omega_\BB = {\bf i}_{(\eta_i)_{T^*\widetilde Q}} (-d\Theta_{\widetilde \subQ} + \widetilde J_i d\widetilde{\mathcal{Y}}^i ) = d\widetilde J_i.$

\end{proof}

\begin{remark}
 
The particular case where the partially reduced symmetries are given by elements of the Lie algebra $\mathfrak{f}$  and the canonical momentum map behaves as a standard momentum map for $(T^*\widetilde Q, \widetilde \Omega_\BB , \widetilde H)$ was studied in \cite{BalFern} but we remark these are restrictive conditions, not satisfied in most of the cases as, for instance, the nonholonomic particle, the snakeboard and the solids of revolutions which are examples treated in Section \ref{S:Examples}. 
\end{remark}

\subsection{Almost symplectic reduction}\label{Ss:Reduction}

In this section we state one of the main results of the paper:  we will perform a reduction of $(T^*\widetilde Q, \widetilde \Omega_\BB)$ using the canonical momentum map $\widetilde J: T^*\widetilde Q\to \mathfrak{f}^*$ following the procedure of a Marsden-Weinstein--type reduction but having into account that the $\mathfrak{f}$-valued functions $\mu$ considered are, in general, non-constant functions on $\widetilde Q$.

\begin{theorem}\label{T:MW} 
Let $(\M,\Omega_\subM|_\C, H_\subM)$ be a nonholonomic system with a $G$-symmetry satisfying the dimension assumption.  Suppose that the system admits $\{J_1,...,J_k\}$, for $k=\textup{rank}(S)$,  $G$-invariant horizontal gauge momenta and that the vertical complement $W$ can be chosen so that it satisfies the vertical symmetry condition.  Then, for the 2-form  $\widetilde{\bf B}$ in Prop.~\ref{P:B-Chaplygin}, holds
 \begin{enumerate}
  \item[$(i)$] the partially reduced nonholonomic system $(T^*\widetilde Q, \widetilde \Omega_\BB, {\widetilde H})$ is $F$-invariant and it has $k$ (partially reduced) horizontal gauge momenta $\{\widetilde J_1,...,\widetilde J_k\}$.
  \item[$(ii)$] [Almost symplectic reduction] For each $\mathfrak{f}^*$-valued function of the form $\mu = c_i\mu^i$ for $c_i\in\R$ and $\mu^i\in \widetilde{\mathfrak{B}}^*_{\emph{\mbox{\tiny{HGS}}}}$ (defined in \eqref{Eq:RedDualBasisHGS}), the manifold $\widetilde J^{-1}(\mu)/F$ admits an almost symplectic form $\omega^\BB_\mu$ such that 
  $$
  \iota_\mu^* \widetilde \Omega_\BB = \rho_{\mbox{\tiny{$\mu$}}}^*\omega^\BB_\mu, 
  $$
  where $\iota_\mu :\widetilde J^{-1}(\mu) \to T^*\widetilde Q$ is the natural inclusion and $\rho_{\mbox{\tiny{$\mu$}}} :\widetilde J^{-1}(\mu)\to \widetilde J^{-1}(\mu)/F$ is the orbit projection.
  \item[$(iii)$] [The reduced dynamics]  The reduced nonholonomic vector field $X_{\emph\red}$ on $\M/G$ is tangent to the manifold  $\widetilde J^{-1}(\mu)/F$ for $\mu= c_i\mu^i$, with $c_i\in\R$ and $\mu^i\in \widetilde{\mathfrak{B}}^*_{\emph{\mbox{\tiny{HGS}}}}$, and its restriction to this leaf is a hamiltonian vector field for the 2-form $\omega^\BB_\mu$ and the hamiltonian function $H_\mu := (\iota_\mu^{\emph\red})^* H_{\emph\red}$, for $\iota^{\emph\red}_\mu:\widetilde J^{-1}(\mu)/F \to T^*\widetilde Q/F$ the inclusion. 
 \end{enumerate}
\end{theorem}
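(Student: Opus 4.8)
The plan is to verify the three items in order, leveraging the structural facts already established. Item $(i)$ is essentially a restatement of Proposition~\ref{P:B-Chaplygin}$(ii)$ together with the discussion following it (the $F$-invariance of $\widetilde\Omega_\BB$ and $\widetilde H$) and the construction of the partially reduced horizontal gauge momenta $\widetilde J_i$ in Section~\ref{S:Reduction}; so for $(i)$ I would simply collect those observations with a brief reference.

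For item $(ii)$, the core of the work, the strategy is to show that $\iota_\mu^*\widetilde\Omega_\BB$ is \emph{basic} with respect to the principal $F$-bundle $\rho_\mu:\widetilde J^{-1}(\mu)\to \widetilde J^{-1}(\mu)/F$ --- i.e.\ that it is $F$-invariant and that its contraction with every vertical vector field vanishes --- and then to define $\omega^\BB_\mu$ as the unique 2-form on the quotient with $\rho_\mu^*\omega^\BB_\mu = \iota_\mu^*\widetilde\Omega_\BB$; non-degeneracy (hence the almost symplectic property) follows from a dimension count as in the classical Marsden-Weinstein theorem. The $F$-invariance is immediate since $\widetilde\Omega_\BB$ is $F$-invariant and $\widetilde J^{-1}(\mu)$ is an $F$-invariant submanifold (Prop.~\ref{P:LevelSets}). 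For the horizontal/semi-basic condition: the vertical space of $\rho_\mu$ at a point is spanned by the infinitesimal generators $(\eta_i)_{T^*\widetilde Q}$ (restricted to $\widetilde J^{-1}(\mu)$, using Lemma~\ref{L:lemma1-1}$(ii)$ and \eqref{Eq:TildeA} that the $(\eta_i)_{\widetilde Q}$ span $\widetilde V_{\mbox{\tiny F}}$); by Proposition~\ref{Prop:HGM-MomMap} we have ${\bf i}_{(\eta_i)_{T^*\widetilde Q}}\widetilde\Omega_\BB = d\widetilde J_i$, and pulling back by $\iota_\mu$ kills this since $\widetilde J_i$ is constant ($=c_i$) on $\widetilde J^{-1}(\mu)$. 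This is exactly the place where the 2-form $\widetilde{\bf B}$ is indispensable: without it the relation ${\bf i}_{(\eta_i)_{T^*\widetilde Q}}\widetilde\Omega = d\widetilde J_i$ fails, and the pullback of $\widetilde\Omega$ need not be horizontal. For non-degeneracy of $\omega^\BB_\mu$: $\widetilde\Omega_\BB$ is non-degenerate on $T^*\widetilde Q$, and one checks, exactly as in the symplectic case, that $T_{\alpha_x}(\widetilde J^{-1}(\mu))$ and the $\widetilde\Omega_\BB$-orthogonal of the $F$-orbit directions are related so that the induced form on the quotient of $T_{\alpha_x}(\widetilde J^{-1}(\mu))$ by the orbit tangent is non-degenerate; here one uses that $\mathcal J = (\widetilde J_1,\dots,\widetilde J_k)$ is a submersion (Prop.~\ref{P:LevelSets}) so $T(\widetilde J^{-1}(\mu)) = \bigcap_i \ker d\widetilde J_i$ has the right codimension $k = \dim\mathfrak f$.

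For item $(iii)$: tangency of $\widetilde X_\nh$ to $\widetilde J^{-1}(\mu)$ is the consequence of Prop.~\ref{P:LevelSets} already noted in the text (the $\widetilde J_i$ are conserved), hence $X_\red = T\rho(X_\nh)$ is tangent to $\widetilde J^{-1}(\mu)/F$ under the identification $\M/G \cong T^*\widetilde Q/F$. To see it is Hamiltonian for $(\omega^\BB_\mu, H_\mu)$, start from ${\bf i}_{\widetilde X_\nh}\widetilde\Omega_\BB = d\widetilde H$ on $T^*\widetilde Q$, pull back along $\iota_\mu$ using that $\widetilde X_\nh$ is tangent to $\widetilde J^{-1}(\mu)$ so it restricts to a vector field $\widetilde X_\nh^\mu$ there, obtaining ${\bf i}_{\widetilde X_\nh^\mu}\iota_\mu^*\widetilde\Omega_\BB = d(\iota_\mu^*\widetilde H)$; then push down to the quotient using $\rho_\mu^*\omega^\BB_\mu = \iota_\mu^*\widetilde\Omega_\BB$ and the fact that $\widetilde X_\nh^\mu$ is $\rho_\mu$-related to its reduction and $\iota_\mu^*\widetilde H$ is $F$-invariant, descending to $H_\mu$. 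The main obstacle is item $(ii)$: one must be careful that the vertical directions of $\rho_\mu$ are precisely spanned by $(\eta_i)_{T^*\widetilde Q}$ (this is where Lemma~\ref{L:lemma1-1} and the hypothesis of the maximal number $k$ of horizontal gauge momenta enter crucially) and that the count makes $\omega^\BB_\mu$ non-degenerate; the gauge-transformed 2-form $\widetilde{\bf B}$ is exactly the ingredient that makes the semi-basic check go through.
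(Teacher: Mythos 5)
Your proposal is correct and follows essentially the same route as the paper: item $(i)$ by collecting the earlier propositions, item $(ii)$ by using Prop.~\ref{Prop:HGM-MomMap} to show the orbit directions span the kernel of $\iota_\mu^*\widetilde\Omega_\BB$ together with the standard Marsden--Weinstein orthogonality/dimension count (the paper phrases this as $T_\alpha\widetilde J^{-1}(\mu)=(T_\alpha\textup{Orb}_F(\alpha))^{\widetilde\Omega_\BB}$ and $\textup{Ker}(\iota_\mu^*\widetilde\Omega_\BB)=T_\alpha\textup{Orb}_F(\alpha)$), and item $(iii)$ by conservation of the $\widetilde J_i$ plus pulling back ${\bf i}_{\widetilde X_\nh}\widetilde\Omega_\BB=d\widetilde H$ and descending to the quotient. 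No gaps worth noting.
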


\begin{proof}
$(i)$ Since the nonholonomic system $(\M,\Omega_\subM|_\C, H_\subM)$ is $G$-invariant and $\widetilde{\bf B}$ is $F$-invariant, then the partially reduced system $(T^*\widetilde Q, \widetilde \Omega_\BB, \widetilde{H})$ is invariant by the $F$-action defined in \eqref{Eq:F-action}.  Moreover, the canonical momentum map $\widetilde J:T^*\widetilde Q \to  \mathfrak{f}^*$ and the basis $\widetilde{\mathfrak{B}}_{\mbox{\tiny{HGS}}}$ define the functions $\widetilde J_i$ which are partially reduced horizontal gauge momenta, see Lemma \ref{L:lemma1-1}, Prop.~\ref{Prop:tildeJ_i} and Def.~\ref{Def:RedHorGauge}.

$(ii)$ Following Prop.~\ref{P:LevelSets}, for each $\mu = c_i\mu^i$ (for $c_i\in\R$ and $\mu^i \in \widetilde{\mathfrak{B}}^*_{\mbox{\tiny{HGS}}}$),  $\widetilde J^{-1}(\mu)$ is a $F$-invariant manifold and, since the $F$-action is free and proper, the quotient space  $\widetilde J^{-1}(\mu)/F$ is a well defined manifold. 
Let us denote by $\widetilde \Omega_\mu^\BB$ the pull back of $\widetilde \Omega_\BB$ to $\widetilde J^{-1}(\mu)$, i.e., $ \widetilde \Omega_\mu^\BB := \iota_\mu^*  \widetilde \Omega_\BB$. Next, we will show that  $\widetilde \Omega_\mu^\BB$ is basic with respect to the bundle $\widetilde J^{-1}(\mu) \to \widetilde J^{-1}(\mu)/F$.    That is, as a consequence of Prop.~\ref{Prop:HGM-MomMap}, we will prove that, for each $\alpha\in \widetilde J^{-1}(\mu) \subset T^*\widetilde Q$, 
\begin{equation}\label{Eq:Proof:Ker=Torb}
\textup{Ker}\,( \widetilde \Omega_\mu^\BB {\mbox{\tiny{$(\alpha)$}}} ) = T_{\alpha}(\textup{Orb}_F(\alpha)),  
\end{equation}
where $\textup{Orb}_F(\alpha)$ is the orbit of the $F$-action at $\alpha$.  First, we claim that for all $\alpha\in \widetilde J^{-1}(\mu)$ and $\mu = c_i\mu^i$,  $T_{\alpha}\widetilde J^{-1}(\mu) = (T_{\alpha}\textup{Orb}_F(\alpha))^{\widetilde \Omega_\BB}$.
In fact, the flow $\phi^X_t$ of a vector field $X$ on $\widetilde J^{-1}(\mu)$,  satisfies that $\widetilde J(\phi^X_t(\alpha) ) = \mu$ for all $t$. Then, for $\eta_i\in  \widetilde{\mathfrak{B}}_{\mbox{\tiny{HGS}}}$ and using Prop. \ref{Prop:HGM-MomMap}, we have that
 $$
 \widetilde \Omega_\BB(X,(\eta_i)_{\mbox{\tiny{$T^*\!\widetilde Q$}}}(\alpha)) = - d\widetilde J_i(\alpha) \, X = - \left.\tfrac{d}{dt} \widetilde J_i(\phi_t^X(\alpha))\right|_{t=0} = - \left.\tfrac{d}{dt} \langle \widetilde J(\phi_t^X(\alpha)), \eta_i\rangle \right|_{t=0} = - \left. \tfrac{d}{dt} c_i \right|_{t=0} = 0. 
 $$
 Since $(\eta_i)_{\mbox{\tiny{$T^*\!\widetilde Q$}}}(\alpha)$ for $i=1,...,k$ form a basis of $T_{\alpha}\textup{Orb}_F(\alpha)$, then we obtain that $X(\alpha)\in (T_{\alpha}\textup{Orb}_F(\alpha))^{\widetilde \Omega_\BB}$.  Finally, $T_{\alpha}\widetilde J^{-1}(\mu) = (T_{\alpha}\textup{Orb}_F(\alpha))^{\widetilde \Omega_\BB}$ since both spaces have the same dimension.

Now we prove the identity \eqref{Eq:Proof:Ker=Torb}.  Let $X(\alpha) \in \textup{Ker}\, (\widetilde \Omega^\BB_\mu {\mbox{\tiny{$(\alpha)$}}})$, then $\widetilde \Omega^\BB_\mu (X(\alpha),Y(\alpha)) =0$ for all $Y(\alpha)\in T_{\alpha} \widetilde J^{-1}(\mu)=(T_{\alpha}\textup{Orb}_F(\alpha))^{\widetilde \Omega_\BB}$. 
Then $X(\alpha)\in [(T_{\alpha}\textup{Orb}_F(\alpha))^{\widetilde \Omega_\BB}]^{\widetilde \Omega_\BB} = T_{\alpha}\textup{Orb}_F(\alpha)$ (since the 2-form $\widetilde \Omega_\BB$ is nondegenerate). 
Conversely, for $\eta_i\in \widetilde{\mathfrak{B}}_{\mbox{\tiny{HGS}}}$, $\widetilde \Omega_\mu^\BB((\eta_i)_{\mbox{\tiny{$\widetilde J^{-1}(\mu)$}}}(\alpha) , X(\alpha)) =0$ for all
$X(\alpha)\in T_{\alpha}\widetilde J^{-1}(\mu)$ and then $(\eta_i)_{\mbox{\tiny{$\widetilde J^{-1}(\mu)$}}}(\alpha) \in \textup{Ker}\, ( \widetilde \Omega^\BB_\mu {\mbox{\tiny{$(\alpha)$}}})$ for all $i=1,...,k$.  

Therefore, by \eqref{Eq:Proof:Ker=Torb}, the 2-form $\widetilde \Omega_\mu^\BB$   descends to
 an almost symplectic 2-form $\omega_\mu^\BB$ on $\widetilde J^{-1}(\mu)/F$ such that $\rho_{\mbox{\tiny{$F$}}}^*\omega^\BB_\mu = \widetilde \Omega_\mu^\BB$.  

$(iii)$ Since $\widetilde J_i$ are conserved quantities for the partially reduced dynamics $\widetilde X_\nh$ on $T^*\widetilde Q$, then the flow $\widetilde{\phi}_t^{\nh}$ of the vector field $\widetilde X_\nh$ satisfies that, for $\alpha\in \widetilde J^{-1}(\mu)$,  $\widetilde{\phi}_t^{\nh}(\alpha) \in \widetilde J^{-1}(\mu)$ for all $t$. Therefore, by the $G$-invariance of the dynamics, we conclude that $\phi_t^\red(\rho_{\mbox{\tiny{$\mu$}}}(\alpha)) \in \widetilde J^{-1}(\mu)/F$ where $\phi_t^\red$ is the flow of the reduced nonholonomic dynamics $X_\red$ and  $\rho_{\mbox{\tiny{$\mu$}}} :\widetilde J^{-1}(\mu)\to \widetilde J^{-1}(\mu)/F$ the orbit projection. Denoting by $X^\mu_\red$ the restriction of $X_\red$ to the leaf $\widetilde J^{-1}(\mu)/F$,  we can see that ${\bf i}_{X_\red^\mu}\omega_\mu^\BB = d H_\mu$ as a direct consequence of item $(ii)$ and Prop.~\ref{P:B-Chaplygin} $(ii)$.

\end{proof}

\begin{remark} \label{R:G-invJ} The fact that we consider $G$-invariant horizontal gauge momenta $J_1,...,J_k$  and not only $G_\subW$-invariant, permits us to reduced the manifold $\widetilde J^{-1}(\mu)$ by the action of the Lie group $F$ (without taking into account any ``isotropy group'').  
Under this assumption, we may denote by $\bar{J}_i$ the functions on $T^*\widetilde Q/F$ such that $\rho^*_{\mbox{\tiny{$F$}}} \bar{J}_i = \widetilde{J}_i$ for $\rho_{\mbox{\tiny{$F$}}}:T^*\widetilde Q\to T^*\widetilde Q/F$ the corresponding  orbit projections (or equivalently $\rho^*\bar{J}_i = {J}_i$). Therefore,  $\widetilde J^{-1}(\mu)/F$ coincides with the common level sets of the reduced horizontal gauge momenta $\bar J_i$, i.e., $\widetilde J^{-1}(\mu)/F \simeq \cap_i \bar{J}_i^{-1}(c_i)$.  
\end{remark}

\section{The identification of $(J^{-1}(\mu)/F,\omega^\BB_\mu)$ with the canonical symplectic manifold}\label{S:Identifications}

In the hamiltonian framework, when working on a canonical symplectic manifold $(T^*Q, \Omega_\subQ)$ (and when $G=G_\mu$)  we have the identification of the Marsden-Weinstein reduced symplectic manifolds with the cotangent manifold $T^*(Q/G)$ and its canonical symplectic form plus a magnetic term that depends on a chosen connection, see e.g. \cite{AbrMarsden,Marsdenetal}.   In this section we show an analogous identification but carrying on the information of the nonholonomic character of the system. That is, we take into account that the nonholonomic dynamics takes place on the almost symplectic manifold $(T^*\widetilde Q, \widetilde \Omega_\BB)$ and that the 2-form $\widetilde{\bf B}$ also depends on a chosen connection. Hence we obtain an  identification of the ``Marsden-Weinstein'' reduced spaces $(\widetilde J^{-1}(\mu)/F, \omega_\mu^\BB)$ with the cotangent manifold $T^*(\widetilde Q/F)$ and its canonical symplectic form modified by a  ``magnetic''  term, i.e., a 2-form that, in this case, does not come from a 2-form on $\widetilde Q/F$ (as in the hamiltonian case) and it depends only on the 2-form $\widetilde{\mathcal{B}}$.
This extra term carries the nonholonomic character of the reduced system since, contrary to hamiltonian systems, its differential can be different from zero. 
Moreover, in Examples \ref{Ex:NHParticle}, \ref{Ex:Snakeboard} and \ref{Ex:Solids} we will see that $\mathcal{B}=0$ and then the manifolds  $(\widetilde J^{-1}(\mu)/F, \omega_\mu^\BB)$ are diffeomorphic to the canonical symplectic manifold $(T^*(\widetilde{Q}/F), \Omega_{\mbox{\tiny{$\widetilde{Q}/F$}}})$ (showing a genuine hamiltonization). 

\medskip

Next, we consider, as usual, a nonholonomic system $(\M, \Omega_\subM|_\C, H_\subM)$ with a $G$-symmetry admitting $k=\textup{rank}(S)$ horizontal gauge momenta and with the vertical symmetry condition. Then the partially reduced nonholonomic system $(T^*\widetilde Q, \widetilde \Omega_\BB, \widetilde H)$ admits a symmetry given by the action of the Lie group $F$ with $k$ partially reduced horizontal gauge momenta (recall that the dimension of the Lie algebra $\mathfrak{f}$ is also $k$). 

\subsection{Identification at the zero-level}

Following \cite{Marsdenetal}, we consider the zero level set of the canonical momentum map $\widetilde J: T^*\widetilde Q\to \mathfrak{f}^*$ and the map $\widetilde{\varphi}_0:\widetilde J^{-1}(0)\to T^*\overline{Q}$, for $\overline{Q}:=\widetilde{Q}/F$  given, at each $\alpha_x\in \widetilde J^{-1}(0) \subset T^*\widetilde Q$, by 
$$
\langle \widetilde{\varphi}_0(\alpha_x), T\rho_{\mbox{\tiny{$\overline Q$}}} (v_x) \rangle = \langle \alpha_x, v_x\rangle,
$$ 
for $v_x\in T_x\widetilde Q$ and $\rho_{\mbox{\tiny{$\overline Q$}}}: \widetilde Q\to \overline{Q}$ the orbit projection.  
Since the map $\widetilde{\varphi}_0$ is $F$-invariant, it is shown also in \cite{Marsdenetal} that there is a well defined diffeomorphism 
$$
\varphi_0:  \widetilde J^{-1}(0)/F \to T^*\overline{Q},
$$
so that $\varphi_0 \circ \rho_0 = \widetilde{\varphi}_0$ for $\rho_0:  \widetilde J^{-1}(0) \to  \widetilde J^{-1}(0)/F$ the canonical projection.
Next, we show that this map is, in fact, the diffeomorhism that links the 2-form $\omega_0^\BB$ on $\widetilde J^{-1}(0)/F$ from Theorem \ref{T:MW} (at $\mu=0$) with the canonical 2-form $\Omega_{\mbox{\tiny{$\overline{Q}$}}}$ on $T^*\overline{Q}$.   

Recall, from Prop.~\ref{Prop:B_G-inv} and Prop.~\ref{P:B-Chaplygin}, that the 2-form $\widetilde{\bf B}$ can be written as $\widetilde{\bf B} = \widetilde B_1 + \widetilde{\mathcal{B}}$ where $\widetilde{\mathcal{B}}$ is a basic form with respect to the bundle $\rho_{\mbox{\tiny{$F$}}}:T^*\widetilde{Q} \to T^*\widetilde{Q}/F$ and hence we denote by $\overline{\mathcal{B}}$ the 2-form on $T^*\widetilde{Q}/F$ such that $\rho_{\mbox{\tiny{$F$}}}^*\overline{\mathcal{B}}=\widetilde{\mathcal{B}}$.   

\begin{proposition} \label{Prop:0Level}
 The diffeomorphism $\varphi_0 : \widetilde J^{-1}(0)/F \to T^*\overline{Q}$ satisfies that 
 $$
 \varphi_0^*\, \Omega_{\mbox{\tiny{$\overline{Q}$}}}  = \omega_0^\BB -  \overline{\mathcal{B}}_0,
 $$
 where $\Omega_{\mbox{\tiny{$\overline{Q}$}}}$ is the canonical 2-form on $T^*\overline{Q}$ and $\overline{\mathcal{B}}_0 := (\iota_0^{\emph\red})^*\overline{\mathcal B}$,  for $\iota_0^{\emph\red}: \widetilde J^{-1}(0)/F \to T^*\widetilde Q/F$ the natural inclusion. In particular, if $\textup{dim}(\overline{Q})=1$, then $\varphi_0^*\, \Omega_{\mbox{\tiny{$\overline{Q}$}}}  = \omega_0^\BB$. 
\end{proposition}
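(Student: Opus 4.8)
The plan is to compute the pull-back of the canonical 2-form $\Omega_{\overline Q}$ along $\widetilde\varphi_0$ and compare it with the pull-back of $\widetilde\Omega_\BB$ to $\widetilde J^{-1}(0)$. First I would recall the classical fact (as in \cite{Marsdenetal,AbrMarsden}) that, at the zero level of the canonical momentum map, the map $\widetilde\varphi_0:\widetilde J^{-1}(0)\to T^*\overline Q$ intertwines the Liouville 1-forms: $\widetilde\varphi_0^*\Theta_{\overline Q} = \iota_0^*\Theta_{\widetilde Q}$, where $\iota_0:\widetilde J^{-1}(0)\hookrightarrow T^*\widetilde Q$. This follows directly from the defining relation $\langle\widetilde\varphi_0(\alpha_x),T\rho_{\overline Q}(v_x)\rangle = \langle\alpha_x,v_x\rangle$ together with the fact that $\Theta_{\overline Q}$ and $\Theta_{\widetilde Q}$ are both tautological. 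Taking $-d$ of this identity and using that $d$ commutes with pull-backs gives $\widetilde\varphi_0^*\Omega_{\overline Q} = \iota_0^*\Omega_{\widetilde Q}$.

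Next I would bring in the structure of $\widetilde\Omega_\BB$. By Proposition~\ref{P:B-Chaplygin} and \eqref{Eq:DynChaplygin} we have $\widetilde\Omega_\BB = \widetilde\Omega + \widetilde{\bf B} = \Omega_{\widetilde Q} - B_{\mbox{\tiny{$\!\langle\! J\mathcal{K}\!\rangle$}}} + \widetilde B_1 + \widetilde{\mathcal B}$, and by \eqref{Eq:tildeB1}, $\widetilde B_1 = B_{\mbox{\tiny{$\!\langle\! J\mathcal{K}\!\rangle$}}} + \widetilde J_i\, d\widetilde{\mathcal Y}^i$. Hence the terms $-B_{\mbox{\tiny{$\!\langle\! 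J\mathcal{K}\!\rangle$}}}$ and $+B_{\mbox{\tiny{$\!\langle\! J\mathcal{K}\!\rangle$}}}$ cancel, leaving $\widetilde\Omega_\BB = \Omega_{\widetilde Q} + \widetilde J_i\, d\widetilde{\mathcal Y}^i + \widetilde{\mathcal B}$. Now restrict to $\widetilde J^{-1}(0) = \widetilde J^{-1}(\mu)$ with $\mu=0$, i.e. with all $c_i=0$: on this submanifold each $\widetilde J_i$ vanishes identically, so $\iota_0^*(\widetilde J_i\, d\widetilde{\mathcal Y}^i) = 0$ (the function $\widetilde J_i$ being zero kills the term pointwise, regardless of $d\widetilde{\mathcal Y}^i$). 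Therefore $\iota_0^*\widetilde\Omega_\BB = \iota_0^*\Omega_{\widetilde Q} + \iota_0^*\widetilde{\mathcal B}$. Combining with the previous paragraph, $\iota_0^*\widetilde\Omega_\BB = \widetilde\varphi_0^*\Omega_{\overline Q} + \iota_0^*\widetilde{\mathcal B}$.

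Finally I would pass to the quotient. By Theorem~\ref{T:MW}(ii), $\iota_0^*\widetilde\Omega_\BB = \rho_0^*\omega_0^\BB$; by the defining property of $\varphi_0$, $\widetilde\varphi_0 = \varphi_0\circ\rho_0$, so $\widetilde\varphi_0^*\Omega_{\overline Q} = \rho_0^*\varphi_0^*\Omega_{\overline Q}$; and since $\widetilde{\mathcal B} = \rho_\F^*\overline{\mathcal B}$ is basic for $\rho_\F:T^*\widetilde Q\to T^*\widetilde Q/F$, its restriction to $\widetilde J^{-1}(0)$ satisfies $\iota_0^*\widetilde{\mathcal B} = \rho_0^*(\iota_0^{\red})^*\overline{\mathcal B} = \rho_0^*\overline{\mathcal B}_0$. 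Thus $\rho_0^*\omega_0^\BB = \rho_0^*\varphi_0^*\Omega_{\overline Q} + \rho_0^*\overline{\mathcal B}_0$, and since $\rho_0$ is a surjective submersion its pull-back is injective on forms, giving $\omega_0^\BB = \varphi_0^*\Omega_{\overline Q} + \overline{\mathcal B}_0$, i.e. $\varphi_0^*\Omega_{\overline Q} = \omega_0^\BB - \overline{\mathcal B}_0$, as claimed. For the last assertion, if $\dim\overline Q = 1$ then $T^*\overline Q$ is $2$-dimensional, $\widetilde J^{-1}(0)/F$ is diffeomorphic to it, and $\overline{\mathcal B}_0$ is a $2$-form on a $2$-manifold that is moreover semi-basic with respect to $T^*\overline Q\to\overline Q$ (inherited from the semi-basic character of $\widetilde{\mathcal B}$ with respect to $T^*\widetilde Q\to\widetilde Q$); but a semi-basic $2$-form on the cotangent bundle of a $1$-manifold must vanish, since semi-basic $2$-forms vanish on pairs of vertical vectors and there is only a $1$-dimensional space of vertical directions. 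Hence $\overline{\mathcal B}_0 = 0$ and $\varphi_0^*\Omega_{\overline Q} = \omega_0^\BB$.

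The main obstacle I anticipate is being careful about exactly where the term $\widetilde J_i\, d\widetilde{\mathcal Y}^i$ lives and why it drops out: one must use that $\mu = 0$ means every horizontal gauge momentum $\widetilde J_i$ is the \emph{zero function} on $\widetilde J^{-1}(0)$, so the wedge $\widetilde J_i\, d\widetilde{\mathcal Y}^i$ vanishes as a $2$-form there even though $d\widetilde{\mathcal Y}^i$ need not pull back to zero — this is precisely the feature that distinguishes the zero level from a general level $\mu = c_i\mu^i$ and is what makes Proposition~\ref{Prop:0Level} cleaner than its shifted counterpart. A secondary point requiring care is the verification that $\widetilde{\mathcal B}$ (hence $\overline{\mathcal B}_0$) is semi-basic, which is used only for the $\dim\overline Q = 1$ statement and follows from Proposition~\ref{Prop:B_G-inv}.
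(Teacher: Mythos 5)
Your proof is correct and follows essentially the same route as the paper's: identify $\varphi_0^*\Omega_{\mbox{\tiny{$\overline{Q}$}}}$ with the reduction of $\iota_0^*\Omega_{\mbox{\tiny{$\widetilde Q$}}}$ (the paper cites \cite{Marsdenetal} for this, you rederive it from the tautological 1-forms), note that the two $B_{\mbox{\tiny{$\!\langle\! J\mathcal{K}\!\rangle$}}}$ terms cancel and that $\iota_0^*(\widetilde J_i\, d\widetilde{\mathcal Y}^i)=0$ on the zero level, and push the remaining $\widetilde{\mathcal{B}}$ term to the quotient using its basic character. Your added argument for the $\dim\overline{Q}=1$ case (a semi-basic 2-form over a 1-dimensional base vanishes) is a sound justification of an assertion the paper leaves implicit.
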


\begin{proof}
 On the one hand, it was shown in \cite{Marsdenetal} that the diffeomorphism $\varphi_0 : \widetilde J^{-1}(0)/F \to T^*\overline{Q}$ satisfies that 
 $ \varphi_0^* \, \Omega_{\mbox{\tiny{$\overline{Q}$}}} = \omega_0$, where $\omega_0$ is the symplectic form on $\widetilde J^{-1}(0)/F$ such that 
 $\rho_0^* \, \omega_0 = \iota_0^*\Omega_{\mbox{\tiny{$\widetilde Q$}}}$, for $\iota_0: \widetilde J^{-1}(0)\to T^*\widetilde Q$ the natural inclusion. 
 On the other hand, Theorem \ref{T:MW} at the zero-level, implies that 
 $\rho_0^*\omega_0^\BB = \iota_0^*\widetilde \Omega_\BB$.  From the expression of $\widetilde{B}_1$ in \eqref{Eq:tildeB1}, we have that  $\iota_0^*(B_{\mbox{\tiny{$\!\langle\! J\mathcal{K}\!\rangle$}}} - \widetilde{B}_1)= - \iota_0^*(\widetilde{J}_i d\widetilde{\mathcal{Y}}^i) = 0$. Therefore, we obtain that 
 $$
 \rho_0^*\, \omega_0^\BB = \iota_0^*(\Omega_{\mbox{\tiny{$\widetilde Q$}}} - B_{\mbox{\tiny{$\!\langle\! J\mathcal{K}\!\rangle$}}} +\widetilde{\bf B})  =\iota_0^*\, \Omega_{\mbox{\tiny{$\widetilde Q$}}} + \iota^*_0\, \widetilde{\mathcal{B}} = \rho_0^*\, \omega_0 + \iota_0^* \, \rho_{\mbox{\tiny{$F$}}}^*\, \overline{\mathcal{B}} = \rho_0^* \omega_0 + \rho_0^* \, (\iota_0^{\red})^* \, \overline{\mathcal{B}},
 $$
 where in the last equality we used that $\iota_0^\red \circ \rho_0 = \rho_{\mbox{\tiny{$F$}}} \circ \iota_0$ for $\rho_{\mbox{\tiny{$F$}}}: T^*\widetilde Q\to T^*\widetilde Q/F$ the orbit projection.  Then $\omega_0^\BB = \omega_0 + (\iota_0^{\red})^* \, \overline{\mathcal{B}}$ which implies that $\omega_0^\BB = \varphi_0^* \, \Omega_{\mbox{\tiny{$\overline{Q}$}}} + \overline{\mathcal{B}}_0$. 
 
\end{proof}

\subsection{Identification at the $\mu$-level and the Shift-trick}

Now, using the {\it Shift-trick} as in \cite{AbrMarsden,Marsdenetal}, we show that each (connected component of the) almost symplectic manifold $(\widetilde J^{-1}(\mu)/F, \omega_\mu^\BB)$ obtained in Theorem~\ref{T:MW}, is diffeomorphic to $T^*\overline{Q}$ with its canonical 2-form $\Omega_{\mbox{\tiny{$\overline{Q}$}}}$  properly modified by a ``magnetic'' term.

As usual, we denote by $\widetilde{\mathfrak{B}}_{\mbox{\tiny{HGS}}}=\{\eta_1,...,\eta_k\}$ a global basis   of equivariant $\mathfrak{f}$-valued functions on $\widetilde Q$  of (partially reduced) horizontal gauge symmetries 
and  $\widetilde{\mathfrak{B}}^*_{\mbox{\tiny{HGS}}} = \{\mu^1,...,\mu^k\}$ the dual basis of $\mathfrak{f}^*$-valued functions given in \eqref{Eq:RedDualBasisHGS}.
Recall that $\widetilde A$ is the induced connection on $T\widetilde Q$ (see Lemma \ref{L:lemma1-1}) and observe that the 2-form $\widetilde{B}_1$ in \eqref{Eq:tildeB1} is written with respect to this connection: $\widetilde A= \eta_i\otimes \widetilde{Y}^i$ and $\widetilde B_1 =  B_{\mbox{\tiny{$\!\langle\! J\mathcal{K}\!\rangle$}}} + \widetilde J_i d\widetilde{\mathcal Y}^i$ where $\widetilde{\mathcal{Y}}^i= \tau_{\widetilde\subQ}^*\widetilde{Y}^i$  for $\tau_{\widetilde\subQ}: T^*\widetilde Q \to \widetilde Q$ is the canonical projection. 

 Next we proceed to define the $\textup{Shift}$-map that, on $\mu^i\in \widetilde{\mathfrak{B}}^*_{\mbox{\tiny{HGS}}}$, coincides with the one defined in \cite{AbrMarsden, Marsdenetal}.
More precisely, for $\mathfrak{f}^*$-valued functions $\mu = c_i\mu^i$, where $c_i\in \R$ and $\mu^i \in \widetilde{\mathfrak{B}}^*_{\mbox{\tiny{HGS}}}$, we define the diffeomorphism 
\begin{equation}
 \begin{split}
\textup{Shift}_\mu : T^*\widetilde Q & \to T^*\widetilde Q\\
  \alpha & \mapsto \alpha - \alpha_\mu,
 \end{split}
\end{equation}
where $\alpha_\mu = \langle \mu, \widetilde{A}\rangle$ for $\langle \cdot, \cdot \rangle$ the natural pairing between the $\mathfrak{f}^*$-valued function $\mu$ and the $\mathfrak{f}$-valued 1-form $\widetilde A$, i.e., for $x\in \widetilde Q$,  $\alpha_\mu(x) = \langle \mu(x), \widetilde A_x\rangle \in T^*_x\widetilde Q$. 


\begin{lemma}\label{L:shift}
 If  $\mu=c_i\mu^i$, for $c_i\in\R$ and $\mu^i\in \widetilde{\mathfrak{B}}^*_{\emph{\mbox{\tiny{HGS}}}}$, then 
 \begin{enumerate}
  \item[$(i)$] $\textup{Shift}_\mu^*\, \Omega_{\mbox{\tiny{$\widetilde Q$}}} = \Omega_{\mbox{\tiny{$\widetilde Q$}}} + \tau_{\widetilde\subQ}^*\, c_i d\widetilde Y^i$.
  \item[$(ii)$] The restricted map $\textup{shift}_\mu :=\textup{Shift}_\mu|_{\widetilde J^{-1}(\mu)} : \widetilde J^{-1}(\mu) \to \widetilde J^{-1}(0)$ is a well defined equivariant diffeo\-mor\-phism and hence there is a well defined diffeomorphism $\overline{\textup{shift}}_\mu : \widetilde J^{-1}(\mu)/F \to \widetilde J^{-1}(0)/F$ so that the following diagram commutes
 \begin{equation}\label{Diag:shift}
  \xymatrix{ T^*\widetilde Q  \ar[d]^{\textup{Shift}_\mu}  && {\widetilde{J}}^{-1}(\mu) \ar[ll]_{\iota_\mu} \ar[rr]^{\rho_\mu} \ar[d]^{\textup{shift}_\mu} && {\widetilde{J}}^{-1}(\mu)/F \ar[d]^{\overline{\textup{shift}}_\mu} \\
 T^*\widetilde Q  && {\widetilde{J}}^{-1}(0)  \ar[ll]_{\iota_0} \ar[rr]^{\rho_0}  && {\widetilde{J}}^{-1}(0)/F  
  }
  \end{equation}

 \end{enumerate}

\end{lemma}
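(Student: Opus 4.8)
The plan is to treat $(i)$ as a direct computation with the Liouville form, and $(ii)$ as a verification that $\textup{Shift}_\mu$ restricts and descends correctly, using the explicit form of the canonical momentum map together with Proposition~\ref{P:LevelSets}. For $(i)$, I would first recall the classical fact (see \cite{AbrMarsden}) that translating in the fibre of $T^*\widetilde Q$ by a fixed $1$-form $\beta$ on $\widetilde Q$ pulls back the Liouville $1$-form $\Theta_{\widetilde\subQ}$ to $\Theta_{\widetilde\subQ} - \tau_{\widetilde\subQ}^*\beta$, hence pulls back $\Omega_{\widetilde\subQ} = -d\Theta_{\widetilde\subQ}$ to $\Omega_{\widetilde\subQ} + \tau_{\widetilde\subQ}^*d\beta$. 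Here $\beta = \alpha_\mu = \langle \mu, \widetilde A\rangle$, and since $\mu = c_i\mu^i$ with $c_i$ constant and $\widetilde A = \eta_j\otimes \widetilde Y^j$ (by \eqref{Eq:TildeA}), the pairing gives $\alpha_\mu = c_i\,\langle\mu^i,\eta_j\rangle\,\widetilde Y^j = c_i\widetilde Y^i$ using the duality $\langle\mu^i(x),\eta_j(x)\rangle = \delta_{ij}$. Therefore $d\alpha_\mu = c_i\, d\widetilde Y^i$ and we obtain $\textup{Shift}_\mu^*\Omega_{\widetilde\subQ} = \Omega_{\widetilde\subQ} + \tau_{\widetilde\subQ}^*(c_i\,d\widetilde Y^i)$ as claimed. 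The one point requiring a little care is that $\alpha_\mu$ is a genuine (globally defined) $1$-form on $\widetilde Q$: this holds because the $\mu^i$ and $\widetilde Y^i$ are globally defined sections, which in turn rests on the conserved quantity assumption and Lemma~\ref{L:lemma1-1}.

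For $(ii)$, I would first check that $\textup{Shift}_\mu$ maps $\widetilde J^{-1}(\mu)$ into $\widetilde J^{-1}(0)$. This follows from the translation property of the canonical momentum map: for any $\alpha_x\in T^*_x\widetilde Q$ and any $\nu\in\mathfrak f$, one has $\langle\widetilde J(\alpha_x - \alpha_\mu(x)),\nu\rangle = \langle\alpha_x - \alpha_\mu(x), \nu_{\widetilde\subQ}(x)\rangle = \langle\widetilde J(\alpha_x),\nu\rangle - \langle\alpha_\mu(x),\nu_{\widetilde\subQ}(x)\rangle$. Now evaluate at $\nu = \eta_i(x)$ and use $\alpha_\mu(x) = c_j\widetilde Y^j_x$ together with $\widetilde Y^j((\eta_i)_{\widetilde\subQ}) = \delta_{ij}$ to get $\langle\alpha_\mu(x),(\eta_i)_{\widetilde\subQ}(x)\rangle = c_i$; hence $\widetilde J_{\eta_i}(\alpha_x - \alpha_\mu(x)) = \widetilde J_i(\alpha_x) - c_i$. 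So if $\alpha_x\in\widetilde J^{-1}(\mu)$, i.e. $\widetilde J_i(\alpha_x) = c_i$ for all $i$ by Proposition~\ref{P:LevelSets}, then the shifted point lies in $\bigcap_i\widetilde J_i^{-1}(0) = \widetilde J^{-1}(0)$. Replacing $\mu$ by $-\mu$ (equivalently $c_i$ by $-c_i$) gives the inverse map, so $\textup{shift}_\mu$ is a diffeomorphism onto $\widetilde J^{-1}(0)$.

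The remaining points are equivariance and the passage to the quotient. Equivariance of $\textup{Shift}_\mu$ under the $F$-action amounts to the equivariance of the section $\alpha_\mu = \langle\mu,\widetilde A\rangle$, which follows because $\widetilde A$ is an \emph{equivariant} principal connection (Lemma~\ref{L:lemma1-1}$(ii)$) and, as $\mu$ is built from the equivariant basis $\widetilde{\mathfrak B}^*_{\mbox{\tiny{HGS}}}$ with constant coefficients, $\mu$ is itself $\mathrm{Ad}^*$-equivariant; the cotangent lift of $\widetilde\Psi$ then commutes with fibrewise translation by $\alpha_\mu$. Since $\textup{shift}_\mu$ is $F$-equivariant and both $\widetilde J^{-1}(\mu)$ and $\widetilde J^{-1}(0)$ carry free and proper $F$-actions (Proposition~\ref{P:LevelSets} and Theorem~\ref{T:MW}), it descends to a diffeomorphism $\overline{\textup{shift}}_\mu: \widetilde J^{-1}(\mu)/F \to \widetilde J^{-1}(0)/F$ with $\rho_0\circ\textup{shift}_\mu = \overline{\textup{shift}}_\mu\circ\rho_\mu$, which is the commutativity of the right square in \eqref{Diag:shift}; commutativity of the left square is the definition $\textup{Shift}_\mu\circ\iota_\mu = \iota_0\circ\textup{shift}_\mu$. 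I do not anticipate a serious obstacle here; the only thing to be vigilant about is keeping straight that $\mu$ is a \emph{function} on $\widetilde Q$ rather than a constant in $\mathfrak f^*$, so that the translating $1$-form $\alpha_\mu$ genuinely varies over $\widetilde Q$ and its exterior derivative contributes the magnetic term in $(i)$ — this is exactly the feature that distinguishes the present setting from the classical shift-trick.
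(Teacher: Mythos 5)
Your proposal is correct and follows essentially the same route as the paper: part $(i)$ via the classical fibre-translation identity $\textup{Shift}_\mu^*\Theta_{\widetilde\subQ} = \Theta_{\widetilde\subQ} - \tau_{\widetilde\subQ}^*\langle\mu,\widetilde A\rangle$ combined with $\langle\mu,\widetilde A\rangle = c_i\widetilde Y^i$, and part $(ii)$ via the $Ad$-equivariance of $\widetilde A$ and the $Ad^*$-equivariance of $\mu$. The only difference is that you spell out the verification that $\textup{Shift}_\mu$ carries $\widetilde J^{-1}(\mu)$ into $\widetilde J^{-1}(0)$ (using $\widetilde J_i(\alpha_x-\alpha_\mu(x))=\widetilde J_i(\alpha_x)-c_i$), a step the paper dismisses as straightforward; your computation is a correct and welcome filling-in of that detail.
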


\begin{proof}
$(i)$ Note that if  $\mu=c_i\mu^i$, for $\mu^i\in \widetilde{\mathfrak{B}}^*_{\mbox{\tiny{HGS}}}$, then by \eqref{Eq:TildeA},  $\langle \mu, \widetilde{A}\rangle = c_i \widetilde Y^i$ and hence $d\langle \mu, \widetilde{A}\rangle = c_i d\widetilde Y^i$.  Moreover, following \cite{Marsdenetal}, we can also prove that $\textup{Shift}_\mu^*\Theta_{\mbox{\tiny{${\widetilde Q}$}}} = \Theta_{\mbox{\tiny{${\widetilde Q}$}}} -\tau_{\widetilde\subQ}^* \langle \mu, \widetilde{A}\rangle$ 
and conclude that $\textup{Shift}_\mu^*\, \Omega_{\mbox{\tiny{$\widetilde Q$}}} = \Omega_{\mbox{\tiny{$\widetilde Q$}}} + \tau_{\widetilde\subQ}^* \, c_i d\widetilde Y^i$.

$(ii)$ It is straightforward to check that $\textup{shift}_\mu$ is a diffeomorphism. To see the equivariance,  recall that, for $h\in F$, $\widetilde \Psi_h:\widetilde{Q}\to \widetilde{Q}$ denotes the $F$-action on $\widetilde Q$  and $\mu(x)$ denotes the evaluation of the $\mathfrak{f}^*$-valued function $\mu$ at $x\in \widetilde{Q}$. On the one hand, due to the $F$-invariance of the horizontal space $\widetilde{H\!{\mbox{\tiny{or}}}}$, the connection $\widetilde A$ is $Ad$-equivariant: for $x\in \widetilde Q$, $y=\Psi_{h^{-1}}(x)$, and $X$ a vector field on $\widetilde Q$.  $\widetilde A_x(T\widetilde \Psi_h(X(y))) = Ad_h (\widetilde A_y(X(y)))$. On the other hand, since the (partially reduced) horizontal gauge momenta $\widetilde J_i$ are $F$-invariant functions, then the associated horizontal gauge symmetries $\eta_i$ (seen as $\mathfrak{f}$-valued functions) are $Ad$-equivariant and the corresponding $\mathfrak{f}^*$-valued functions $\mu^i$ are $Ad^*$-equivariant, i.e.,  $Ad_h(\eta_y) = \eta_x$ and $Ad_h^*\mu(x) = \mu(y)$.  

For $\alpha_x\in T^*_x\widetilde Q$ and $h\in F$, we get that
\begin{equation*}
	\begin{split}
		T^*\widetilde\Psi_h(\textup{Shift}_\mu(\alpha_x)) & = T^*\widetilde\Psi_h(\alpha_x) - T^*\widetilde\Psi_h(\langle \mu(x),\widetilde{A}_x \rangle)  = T^*\widetilde\Psi_h(\alpha_x) - \langle \mu(x),\widetilde{A}_x\circ T{\widetilde \Psi_h}) \rangle  \\
		& = T^*\widetilde\Psi_h(\alpha_x) - \langle \mu(x),Ad_h(\widetilde{A}_y) \rangle  = T^*\widetilde\Psi_h(\alpha_x) - \langle Ad_{h}^*(\mu(x)),\widetilde{A}_{y} \rangle \\
		& = T^*\widetilde\Psi_h(\alpha_x) - \langle \mu(y),\widetilde{A}_{y} \rangle =\textup{Shift}_\mu(T^*\widetilde\Psi_h(\alpha_x)).
	\end{split}
\end{equation*}

\end{proof}

Next, for each $\mathfrak{f}^*$-valued function $\mu=c_i\mu^i$ with $c_i\in \R$ and $\mu^i\in \widetilde{\mathfrak{B}}^*_{\mbox{\tiny{HGS}}}$ we consider the map 
\begin{equation}\label{Def:varphi_mu} 
\varphi_\mu := \varphi_0 \circ \overline{\textup{shift}}_\mu: \widetilde{J}^{-1}(\mu)/F \to T^*\overline{Q}, 
\end{equation}
 which, by construction, is a diffeomorphism. 

\begin{theorem} \label{T:mu-identification}
 Consider a nonholonomic system $(\M, \Omega_\subM|_\C,H_{\subM})$ with a $G$-symmetry satisfying the di\-men\-sion assumption such that it admits $k=\textup{rank}(S)$ $G$-invariant horizontal gauge momenta.  More\-o\-ver, we assume that the system verifies the vertical symmetry condition.  
The reduction of the partially reduced system $(T^*\widetilde Q, \widetilde \Omega_\BB, \widetilde H)$ given in Theorem~\ref{T:MW}, induces, for each $\mu=c_i\mu^i$ (with $\mu^i\in  \widetilde{\mathfrak{B}}^*_{\emph{\mbox{\tiny{HGS}}}}$),  the almost symplectic ma\-ni\-fold $(\widetilde J^{-1}(\mu)/F, \omega_\mu^\BB)$ for which the diffeomorphism $\varphi_\mu : \widetilde J^{-1}(\mu)/F \to T^*\overline{Q}$ satisfies that 
 $$
 \varphi_\mu^*\, \Omega_{\mbox{\tiny{$\overline{Q}$}}} = \omega_\mu^\BB - \overline{\mathcal{B}}_\mu,
 $$
 where $\overline{\mathcal{B}}_\mu := (\iota_\mu^{\emph\red})^*\overline{\mathcal B}$ for $\iota_\mu^{\emph\red} : \widetilde J^{-1}(\mu)/F \to T^*\widetilde Q/F$ the natural inclusion.
In particular, if $\textup{dim}(\overline{Q}) =1$, then $\varphi_\mu^*\Omega_{\mbox{\tiny{$\overline{Q}$}}} = \omega^{\BB}_{\mu}$.
\end{theorem}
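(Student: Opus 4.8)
The plan is to reduce the statement at the level $\mu$ to the already-proven statement at the level $0$ (Proposition~\ref{Prop:0Level}) by transporting everything through the equivariant diffeomorphism $\mathrm{shift}_\mu$ of Lemma~\ref{L:shift}. First I would record the defining relations: by Theorem~\ref{T:MW}$(ii)$ we have $\rho_\mu^*\omega_\mu^\BB = \iota_\mu^*\widetilde\Omega_\BB$ and $\rho_0^*\omega_0^\BB = \iota_0^*\widetilde\Omega_\BB$; by the definition \eqref{Def:varphi_mu}, $\varphi_\mu = \varphi_0\circ\overline{\mathrm{shift}}_\mu$, and by the commuting diagram \eqref{Diag:shift} we have $\iota_0\circ\mathrm{shift}_\mu = \mathrm{Shift}_\mu\circ\iota_\mu$ and $\rho_0\circ\mathrm{shift}_\mu = \overline{\mathrm{shift}}_\mu\circ\rho_\mu$. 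The goal is then to compute $\varphi_\mu^*\Omega_{\mbox{\tiny{$\overline Q$}}} = \overline{\mathrm{shift}}_\mu^*\varphi_0^*\Omega_{\mbox{\tiny{$\overline Q$}}} = \overline{\mathrm{shift}}_\mu^*(\omega_0^\BB - \overline{\mathcal B}_0)$ using Proposition~\ref{Prop:0Level}, and to identify this pullback with $\omega_\mu^\BB - \overline{\mathcal B}_\mu$.

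The key computation is to show $\overline{\mathrm{shift}}_\mu^*\omega_0^\BB = \omega_\mu^\BB$ up to the magnetic correction, i.e.\ really to show that the two reduced $\BB$-forms differ exactly by the difference of the two restrictions of $\overline{\mathcal B}$ plus the exact magnetic term from Lemma~\ref{L:shift}$(i)$. I would pull back through $\rho_\mu$, which is injective on forms modulo the $F$-orbit directions, and work upstairs on $\widetilde J^{-1}(\mu)$: using the diagram, $\rho_\mu^*\overline{\mathrm{shift}}_\mu^*\omega_0^\BB = \mathrm{shift}_\mu^*\rho_0^*\omega_0^\BB = \mathrm{shift}_\mu^*\iota_0^*\widetilde\Omega_\BB = \iota_\mu^*\mathrm{Shift}_\mu^*\widetilde\Omega_\BB$. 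Now I expand $\widetilde\Omega_\BB = \Omega_{\mbox{\tiny{$\widetilde Q$}}} - B_{\mbox{\tiny{$\!\langle\! J\mathcal{K}\!\rangle$}}} + \widetilde B_1 + \widetilde{\mathcal B}$ and apply $\mathrm{Shift}_\mu^*$ term by term. By Lemma~\ref{L:shift}$(i)$, $\mathrm{Shift}_\mu^*\Omega_{\mbox{\tiny{$\widetilde Q$}}} = \Omega_{\mbox{\tiny{$\widetilde Q$}}} + \tau_{\widetilde\subQ}^* c_i\, d\widetilde Y^i$. The terms $B_{\mbox{\tiny{$\!\langle\! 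J\mathcal{K}\!\rangle$}}}$ and $\widetilde{\mathcal B}$ are basic with respect to $\tau_{\widetilde\subQ}$ or $\rho_{\mbox{\tiny{$F$}}}$ and $\mathrm{Shift}_\mu$ covers the identity on $\widetilde Q$, so they are essentially unchanged under $\mathrm{Shift}_\mu^*$ in the relevant sense (the point being that $\widetilde{\mathcal B}$, being basic for $\rho_{\mbox{\tiny{$F$}}}$ and semibasic, pulls back to itself after restricting and projecting). For $\widetilde B_1 = B_{\mbox{\tiny{$\!\langle\! J\mathcal{K}\!\rangle$}}} + \widetilde J_i\, d\widetilde{\mathcal Y}^i$, the crucial fact is that on $\widetilde J^{-1}(\mu)$ the functions $\widetilde J_i$ restrict to the constants $c_i$, so $\iota_\mu^*(\widetilde J_i\, d\widetilde{\mathcal Y}^i) = \iota_\mu^*(c_i\, d\widetilde{\mathcal Y}^i) = \iota_\mu^* \tau_{\widetilde\subQ}^* c_i\, d\widetilde Y^i$; and after applying $\mathrm{Shift}_\mu^*$ this term, together with the $\tau_{\widetilde\subQ}^* c_i\, d\widetilde Y^i$ produced by the canonical form, has to cancel against the shift of the $\widetilde J_i$'s (which shift by $c_i \mapsto 0$ under $\mathrm{shift}_\mu$, as in the proof of Proposition~\ref{Prop:0Level}). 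Carefully bookkeeping these cancellations is where the argument needs care, but it parallels exactly the zero-level computation in Proposition~\ref{Prop:0Level} combined with Lemma~\ref{L:shift}$(i)$.

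Assembling the pieces: $\iota_\mu^*\mathrm{Shift}_\mu^*\widetilde\Omega_\BB = \iota_\mu^*\widetilde\Omega_\BB + (\text{the canceling } d\widetilde Y^i \text{ terms}) = \iota_\mu^*\widetilde\Omega_\BB = \rho_\mu^*\omega_\mu^\BB$, whence $\overline{\mathrm{shift}}_\mu^*\omega_0^\BB = \omega_\mu^\BB$. Similarly $\overline{\mathrm{shift}}_\mu^*\overline{\mathcal B}_0 = \overline{\mathcal B}_\mu$: pulling back through $\rho_\mu$ and using the diagram, $\rho_\mu^*\overline{\mathrm{shift}}_\mu^*\overline{\mathcal B}_0 = \mathrm{shift}_\mu^*\rho_0^*(\iota_0^{\emph\red})^*\overline{\mathcal B} = \mathrm{shift}_\mu^*\iota_0^*\rho_{\mbox{\tiny{$F$}}}^*\overline{\mathcal B} = \mathrm{shift}_\mu^*\iota_0^*\widetilde{\mathcal B} = \iota_\mu^*\mathrm{Shift}_\mu^*\widetilde{\mathcal B} = \iota_\mu^*\widetilde{\mathcal B}$ (since $\widetilde{\mathcal B}$ is $\rho_{\mbox{\tiny{$F$}}}$-basic and $\mathrm{Shift}_\mu$ covers the identity, or more carefully since $\mathrm{Shift}_\mu$ translates fibers of $\tau_{\widetilde\subQ}$ and $\widetilde{\mathcal B}$ pulled back from the quotient is insensitive to such translations — this uses that $\overline{\mathcal B}$ lives on $T^*\widetilde Q/F$ and $\mathrm{shift}_\mu$ descends), which equals $\rho_\mu^*(\iota_\mu^{\emph\red})^*\overline{\mathcal B} = \rho_\mu^*\overline{\mathcal B}_\mu$. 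Therefore $\varphi_\mu^*\Omega_{\mbox{\tiny{$\overline Q$}}} = \overline{\mathrm{shift}}_\mu^*(\omega_0^\BB - \overline{\mathcal B}_0) = \omega_\mu^\BB - \overline{\mathcal B}_\mu$. The final sentence on $\dim(\overline Q) = 1$ is immediate: then $T^*\overline Q$ is two-dimensional, so any semibasic 2-form such as $\overline{\mathcal B}_\mu$ vanishes for dimension reasons, giving $\varphi_\mu^*\Omega_{\mbox{\tiny{$\overline Q$}}} = \omega_\mu^\BB$. The main obstacle I expect is the precise justification that $\mathrm{Shift}_\mu^*\widetilde{\mathcal B}$ (or its restriction to $\widetilde J^{-1}(\mu)$) equals $\iota_\mu^*\widetilde{\mathcal B}$ — i.e.\ that the shift, which translates in the cotangent fibers, does not affect the $\rho_{\mbox{\tiny{$F$}}}$-basic part $\widetilde{\mathcal B}$ — and keeping the $d\widetilde Y^i$ bookkeeping consistent with the sign conventions in \eqref{Eq:tildeB1} and Lemma~\ref{L:shift}$(i)$.
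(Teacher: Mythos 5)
Your skeleton is the paper's (compose $\varphi_\mu=\varphi_0\circ\overline{\textup{shift}}_\mu$, move everything up to $\widetilde J^{-1}(\mu)$ through the diagram of Lemma~\ref{L:shift}, and control the shift via Lemma~\ref{L:shift}$(i)$ and the constancy of the $\widetilde J_i$ on the level set), but the way you organize it introduces a genuine gap. Splitting the target identity into the two separate claims $\overline{\textup{shift}}_\mu^*\,\omega_0^\BB=\omega_\mu^\BB$ and $\overline{\textup{shift}}_\mu^*\,\overline{\mathcal B}_0=\overline{\mathcal B}_\mu$ reduces each of them, exactly as your own computation shows, to the assertion $\iota_\mu^*\,\textup{Shift}_\mu^*\,\widetilde{\mathcal{B}}=\iota_\mu^*\,\widetilde{\mathcal{B}}$ --- and none of the properties you invoke delivers this; it is false in general. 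Semi-basicness of $\widetilde{\mathcal{B}}$ over $\widetilde Q$ only says that contractions with $\tau_{\widetilde\subQ}$-vertical vectors vanish; it does not make the coefficients independent of the point in the cotangent fibre, and $\textup{Shift}_\mu$ translates precisely that point. Nor does $\rho_{\mbox{\tiny{$F$}}}$-basicness help: $\textup{Shift}_\mu$ is $F$-equivariant and so descends to $T^*\widetilde Q/F$, but to a nontrivial map, not the identity. Concretely, for the Chaplygin ball $\mathcal{B}= mr^2\langle \vecgamma,\vecOm\rangle\langle \vecgamma, d\vecL\rangle + J_\zeta\,\Phi_{S^2}$ depends explicitly on the momentum, so its restrictions to $\widetilde J^{-1}(0)$ and $\widetilde J^{-1}(\mu)$ are genuinely different and are not exchanged by the shift. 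The same objection applies to your parenthetical treatment of $B_{\mbox{\tiny{$\!\langle\! J\mathcal{K}\!\rangle$}}}$, although that term in fact cancels identically, since $\widetilde\Omega_\BB=\Omega_{\mbox{\tiny{$\widetilde Q$}}}+\widetilde J_i\, d\widetilde{\mathcal Y}^i+\widetilde{\mathcal{B}}$.

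The repair is to not split. Either (a) do what the paper does: the proof of Prop.~\ref{Prop:0Level} gives $\rho_0^*(\omega_0^\BB-\overline{\mathcal B}_0)=\iota_0^*\,\Omega_{\mbox{\tiny{$\widetilde Q$}}}$, so
$\rho_\mu^*\varphi_\mu^*\,\Omega_{\mbox{\tiny{$\overline{Q}$}}}=\textup{shift}_\mu^*\,\iota_0^*\,\Omega_{\mbox{\tiny{$\widetilde Q$}}}=\iota_\mu^*\,\textup{Shift}_\mu^*\,\Omega_{\mbox{\tiny{$\widetilde Q$}}}=\iota_\mu^*(\Omega_{\mbox{\tiny{$\widetilde Q$}}}+\widetilde J_i\, d\widetilde{\mathcal Y}^i)$,
and only the canonical form is ever shifted, so the question of $\textup{Shift}_\mu^*\widetilde{\mathcal{B}}$ never arises; or (b) observe that the unknown quantity drops out of the difference you actually need: using $\textup{Shift}_\mu^*\widetilde J_i=\widetilde J_i-c_i$ and Lemma~\ref{L:shift}$(i)$ one gets $\textup{Shift}_\mu^*\widetilde\Omega_\BB=\widetilde\Omega_\BB-\widetilde{\mathcal{B}}+\textup{Shift}_\mu^*\widetilde{\mathcal{B}}$, while $\rho_\mu^*\,\overline{\textup{shift}}_\mu^*\,\overline{\mathcal B}_0=\iota_\mu^*\,\textup{Shift}_\mu^*\,\widetilde{\mathcal{B}}$, so the two occurrences of $\iota_\mu^*\textup{Shift}_\mu^*\widetilde{\mathcal{B}}$ cancel in $\overline{\textup{shift}}_\mu^*(\omega_0^\BB-\overline{\mathcal B}_0)$ and the claimed identity follows. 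Your closing remark for $\textup{dim}(\overline{Q})=1$ is fine once one notes that $\overline{\mathcal{B}}_\mu$ is semi-basic over $\overline{Q}$.
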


\begin{proof}
 First, consider the following two commutative diagrams
 \begin{equation}\label{Diag:iotas}
 \xymatrix{{\widetilde{J}}^{-1}(0) \ar[d]_{\rho_0} \ar[r]^{\iota_0} & T^*\widetilde Q \ar[d]_{\rho_{\mbox{\tiny{$F$}}} }\\
 {\widetilde{J}}^{-1}(0)/F \ar[r]^{\iota_0^{\red}} & T^*\widetilde Q /F } 
 \qquad  \qquad
  \xymatrix{{\widetilde{J}}^{-1}(\mu) \ar[d]_{\rho_\mu} \ar[r]^{\iota_\mu} & T^*\widetilde Q \ar[d]_{\rho_{\mbox{\tiny{$F$}}} }\\
{\widetilde{J}}^{-1}(\mu)/F \ar[r]^{\iota_\mu^{\red}} & T^*\widetilde Q /F }
 \end{equation}
Since $\rho_{\mbox{\tiny{$F$}}}^*\, \overline{\mathcal{B}} = \widetilde{\mathcal{B}}$ and, as a consequence of Theorem \ref{T:MW}, we see that $\varphi_\mu^*\, \Omega_{\mbox{\tiny{$\overline{Q}$}}} = \omega_\mu^\BB - \overline{\mathcal{B}}_\mu$ if and only if
 \begin{equation}\label{Eq:Proof:muIdent}
 \rho_\mu^* \circ \varphi_\mu^* \, \Omega_{\mbox{\tiny{$\overline{Q}$}}} = \iota_\mu^*(\Omega_{\mbox{\tiny{$\widetilde{Q}$}}} - B_{\mbox{\tiny{$\!\langle\! J\mathcal{K}\!\rangle$}}} +\widetilde{\bf B} ) - \iota_\mu^*\widetilde{\mathcal{B}}.
 \end{equation}
 Next, we will prove \eqref{Eq:Proof:muIdent}. Using the definition of $\varphi_\mu$ in \eqref{Def:varphi_mu} and by \eqref{Diag:shift} we have that
 \begin{equation*}
   \rho_\mu^* \circ \varphi_\mu^*\, \Omega_{\mbox{\tiny{$\overline{Q}$}}}  = \rho_\mu^* \circ \overline{\textup{shift}}_\mu^* \circ \varphi_0^* \, \Omega_{\mbox{\tiny{$\overline{Q}$}}} = \textup{shift}_\mu^* \circ \rho_0^* \circ \varphi_0^* \,  \Omega_{\mbox{\tiny{$\overline{Q}$}}} = \textup{shift}_\mu^* \circ \rho_0^* \, (\omega_0^\BB - \overline{\mathcal{B}}_0),
 \end{equation*}
where in the last equality we used Prop.~\ref{Prop:0Level}. Moreover, since $\rho_0^*(\omega_0^\BB - \overline{\mathcal{B}}_0) = \iota_0^*\, \Omega_{\mbox{\tiny{$\widetilde{Q}$}}}$ (see the proof of Prop.~\ref{Prop:0Level}), and using \eqref{Diag:iotas} and Lemma \ref{L:shift}, we conclude that
\begin{equation*}
   \rho_\mu^* \circ \varphi_\mu^* \, \Omega_{\mbox{\tiny{$\overline{Q}$}}}  = \textup{shift}_\mu^* \circ \iota_0^* \, \Omega_{\mbox{\tiny{$\widetilde{Q}$}}} = \iota_\mu^*\circ \textup{Shift}_\mu^* \, \Omega_{\mbox{\tiny{$\widetilde{Q}$}}}= \iota_\mu^* (\Omega_{\mbox{\tiny{$\widetilde{Q}$}}} +\tau_{\widetilde\subQ}^*\, c_i d\widetilde{Y}^i)= \iota_\mu^* (\Omega_{\mbox{\tiny{$\widetilde{Q}$}}} + \widetilde J_i d\widetilde{\mathcal Y}^i),
\end{equation*}
where in the last equality we also used Prop.\ref{P:LevelSets} and the fact that $\widetilde{\mathcal{Y}}^i=\tau_{\widetilde\subQ}^*\widetilde Y^i$. 
Finally, recalling the expression of ${\widetilde B}_1$ in \eqref{Eq:tildeB1}, we see that $\Omega_{\mbox{\tiny{$\widetilde{Q}$}}} + \widetilde J_i d\widetilde{\mathcal Y}^i = \Omega_{\mbox{\tiny{$\widetilde{Q}$}}} - B_{\mbox{\tiny{$\!\langle\! J\mathcal{K}\!\rangle$}}} +\widetilde B_1$ and, since $\widetilde B_1 = \widetilde{\bf B} - \widetilde{\mathcal{B}}$, we arrive to the desired result \eqref{Eq:Proof:muIdent}.  
\end{proof}

\begin{remark} \label{R:Indentification}
Theorem \ref{T:mu-identification} identifies each almost symplectic manifold $(\widetilde J^{-1}(\mu)/F, \omega_ \mu^\BB)$ with $(T^*\overline{Q}, \Omega_{\mbox{\tiny{$\overline{Q}$}}} + \widehat{\mathcal{B}}_\mu)$ where $\widehat{\mathcal{B}}_\mu := (\varphi_\mu^{-1})^*\overline{\mathcal{B}}_\mu$.  Observe that $\widehat{\mathcal{B}}_\mu$ is not a magnetic term in the strict sense since it might be non closed and is not coming from a 2-form defined on $\overline{Q}$. Moreover, it has no connection with the magnetic term that appears in hamiltonian systems. 
\end{remark}

\section{Relation with the nonholonomic bracket} \label{S:NHBracket}

In this section we will work with the {\it nonholonomic bracket} $\{\cdot, \cdot \}_\nh$ on $\M$ defined by the system $(\M, \Omega_\subM|_\C, H_\subM)$. 
First, we recall from \cite{BGN} how the gauge transformation by the 2-form ${\bf B}$ of the nonholonomic bracket defines a new almost Poisson bracket $\{\cdot, \cdot\}_\BB$ on $\M$. Afterwards, we will see how the 2-step reduction developed in Section \ref{S:Reduction} translates into a 2-step reduction where at each level we obtain an almost Poisson bracket showing that the almost symplectic foliation defined in Theorem \ref{T:MW} coincides with the union of almost symplectic leaves associated to the reduced bracket $\{\cdot, \cdot\}_\red^\BB$ on $\M/G$. 

\subsection{The nonholonomic bracket and reduction}\label{Ss:Gauge-Leaves}

Consider a nonholonomic system given by the triple $(\M, \Omega_\subM|_\C, H_\subM)$ as in Section \ref{S:Prelim}. The fact that $\Omega_\subM|_\C$ is a nondegenerate 2-section \cite{BS93} not only defines uniquely the nonholonomic vector field but also it induces an almost Poisson bracket $\{\cdot, \cdot \}_\nh$ on functions on $\M$, given for each $f\in C^\infty(\M)$, by 
\begin{equation}\label{Eq:NH-Bracket}
 \{\cdot, f\}_\nh =X_f \qquad \mbox{if and only if} \qquad {\bf i}_{X_f} \Omega_\subM |_\C = (df) |_\C.
\end{equation}
The bracket $\{\cdot, \cdot \}_\nh$ is called {\it the nonholonomic bracket} \cite{IdLMdD,Marle,SchaftMaschke1994} and it describes the nonholonomic dynamics since 
$$
X_\nh = \{ \cdot, H_\subM \}_\nh. 
$$
Recall that, on the one hand, an almost Poisson bracket is a bilinear, skew-symmetric bracket that satisfies Leibniz identity but not necessarily the Jacobi identity.  In fact, the {\it characteristic distribution} of the nonholonomic bracket $\{\cdot, \cdot \}_\nh$ --the distribution generated by the hamiltonian vector fields $X_f$-- is the nonintegrable distribution $\C$, induced by the constraints and defined in \eqref{Eq:C},  and hence the Jacobi identity is not satisfied.
On the other hand, a Poisson bracket satisfies the Jacobi identity and, as a consequence, it has an integrable characteristic distribution inducing a symplectic foliation. In between, there is a class of almost Poisson brackets that have an integrable characteristic distribution but the Jacobi identity is still not satisfied.  More precisely, an almost Poisson bracket $\{\cdot, \cdot\}$ on $\M$ is {\it twisted Poisson}   \cite{KlimStro-2002,SeveraWeinstein} if there exists a closed 3-form $\Phi$ on $\M$ such that 
\begin{equation}\label{Def:twisted}
\{f,\{g,h\}\} + \{g,\{h,f\}\} + \{h,\{f,g\}\} = \Phi(X_f, X_g,X_h), \quad \mbox{ for } f,g,h \in C^\infty(\M). 
\end{equation}
A twisted Poisson bracket admits an almost symplectic foliation. The role of (regular) twisted Poisson brackets in nonholonomic mechanics was studied in \cite{BJac,BGN,BY} where it was observed that the reduction by symmetries of the nonholonomic bracket might become twisted Poisson. 

If the nonholonomic system $(\M, \Omega_\subM|_\C, H_\subM)$ has a symmetry given by the action of a Lie group $G$, then the nonholonomic bracket is $G$-invariant as well, and thus it can be reduced to an almost Poisson bracket $\{\cdot, \cdot \}_\red$ on $\M/G$ so that, for $\bar f, \bar g\in C^\infty (\M/G)$, 
\begin{equation}\label{Eq:RedBracket}
\{\bar f, \bar g\}_\red (\rho (m)) = \{\rho^* \bar f, \rho^*\bar g\}_\nh (m),
\end{equation}
for $m\in \M$ and $\rho: \M \to \M/G$ the orbit projection.  The reduced bracket $\{\cdot, \cdot \}_\red$ on $\M/G$ is responsible of the reduced dynamics: 
$$
X_\red = \{\cdot, H_\red \}_\red, 
$$
where $H_\red : \M/G \to \R$ is, as usual, the reduced hamiltonian, i.e., $\rho^*H_\red = H_\subM$. 

In this section, we denote a nonholonomic system by the triple $(\M, \{\cdot, \cdot\}_\nh, H_\subM)$ or by \linebreak $(\M/G, \{\cdot, \cdot\}_\red, H_\red)$ to refer to the reduced system. 

Following Section \ref{Ss:VerticalSymmetries}, we assume that we can choose a vertical complement of the constraints that satisfies the vertical symmetry condition,  then the reduction by $G_\subW$ of the nonholonomic system $(\M, \{\cdot, \cdot \}_\nh,H_\subM)$ gives the partially reduced nonholonomic system $(\M/G_\subW, \{\cdot, \cdot\}_{\widetilde\nh},\widetilde H)$ with the almost Poisson bracket $\{\cdot, \cdot\}_{\widetilde\nh}$ obtained as in \eqref{Eq:RedBracket} but with respect to the orbit projection $\rho_{\mbox{\tiny{$G_{\!W}$}}}: \M\to \M/G_\subW$. It is straightforward to see that the bracket $\{\cdot, \cdot\}_{\widetilde\nh}$ is nondegenerate and hence the system $(\M/G_\subW, \{\cdot, \cdot\}_{\widetilde\nh}, \widetilde H)$ is just the Chaplygin system $(T^*\widetilde Q, \widetilde\Omega, \widetilde H)$ from \eqref{Eq:DynChaplygin}.   Finally, the reduction by the Lie group $F=G/G_\subW$ of  $(\M/G_\subW, \{\cdot, \cdot\}_{\widetilde\nh}, \widetilde H)$ gives the reduced nonholonomic system  $(\M/G, \{\cdot, \cdot\}_\red, H_\red)$.

\subsection{Gauge transformation of the nonholonomic bracket and reduction}

Consider a nonholonomic system $(\M, \Omega_\subM|_\C, H_\subM)$ with a $G$-symmetry admitting $k=\textup{rank}(S)$ horizontal gauge momenta $\{J_1,...,J_k\}$. 
Next, we study the gauge transformation by the 2-form ${\bf B}$ of the nonholonomic bracket $\{\cdot, \cdot\}_\nh$ and its reduction process.

More precisely, consider the 2-form ${\bf B}$ on $\M$ defined in \eqref{Def:B}. Observe that, since $\Omega_\subM|_\C$ is nondegenerate and ${\bf B}$ is semi-basic with respect to the bundle $\tau_\subM:\M\to Q$, then $(\Omega_\subM + {\bf B})|_\C$ is a nondegenerate 2-section and we can define the new bracket $\{\cdot, \cdot\}_\BB$ on functions on $\M$, given at each $f\in C^\infty(\M)$, by
\begin{equation} \label{Eq:GaugedBracket}
X_f = \{\cdot, f\}_\BB  \quad \mbox{if and only if} \quad  {\bf i}_{X_f}(\Omega_\subM + {\bf B})|_\C = (df)|_\C,
\end{equation}
(c.f. \eqref{Eq:NH-Bracket}).  Therefore the 2-form ${\bf B}$ defines a {\it gauge transformation} of the nonholonomic bracket $\{\cdot, \cdot\}_\nh$ generating the {\it gauge related} bracket $\{\cdot, \cdot\}_\BB$ on $\M$ (see \cite{BGN,GN2008,SeveraWeinstein} for more details). 
Since the 2-form ${\bf B}$ satisfies the dynamical condition ${\bf i}_{X_\nh}{\bf B} = 0$ (see \eqref{Eq:DynamicCondition}), then the bracket $\{\cdot, \cdot\}_\BB$ still describes the nonholonomic dynamics: $X_\nh = \{\cdot, H_\subM\}_\BB$ and we say that ${\bf B}$ defines a {\it dynamical gauge transformation} \cite{BGN}.  
%
From \eqref{Eq:NH-Bracket} and \eqref{Eq:GaugedBracket} we see that the brackets $\{\cdot,\cdot\}_\nh$ and $\{\cdot, \cdot\}_\BB$ share the characteristic distribution $\C$ (hence $\{\cdot,\cdot\}_\BB$ is an almost Poisson bracket as well).

Due to  Prop.~\ref{Prop:B_G-inv} the 2-form ${\bf B}$ is $G$-invariant, then the (dynamically) gauge related bracket $\{\cdot,\cdot\}_\BB$ is also $G$-invariant and it descends to an almost Poisson bracket $\{\cdot,\cdot\}_\red^\BB$ on $\M/G$, such that, at each  $\bar f, \bar g\in C^\infty (\M/G)$,  and for  $m\in \M$
\begin{equation} \label{Eq:BRedBracket}
\{\bar f, \bar g\}_\red^\BB (\rho (m)) = \{\rho^* \bar f, \rho^*\bar g\}_\BB (m).
\end{equation}

Therefore the nonholonomic system can be equivalently determined by the triple $(\M, \{\cdot, \cdot\}_\BB, H_\subM)$ and by $(\M/G, \{\cdot, \cdot\}_\red^\BB, H_\subM)$ to refer to the reduced system. 

Note that the bracket $\{\cdot, \cdot\}_{\BB}$ is, in particular, $G_\subW$-invariant as well. Following \eqref{Eq:BRedBracket}, but using the orbit projection $\rho_{\mbox{\tiny{$G_{\!W}$}}} : \M\to T^*\widetilde Q$, the almost Poisson bracket $\{\cdot, \cdot\}_{\BB}$ descends to an almost Poisson bracket $\{\cdot , \cdot \}_{\widetilde\BB}$ on $T^*\widetilde Q$.

\begin{proposition}\label{Prop:BracketChap}
 The almost Poisson bracket $\{\cdot , \cdot \}_{\widetilde\BB}$ on $T^*\widetilde Q$ is nondegenerate and defined by the 2-form $\widetilde \Omega_\BB=\widetilde \Omega + \widetilde{\bf B}$, i.e., for $f\in C^\infty(T^*\widetilde Q)$,  
 $$
 X_f = \{\cdot,f\}_{\widetilde\BB} \quad \mbox{ if and only if } \quad {\bf i}_{X_f}\widetilde \Omega_\BB= df. 
 $$
 
\end{proposition}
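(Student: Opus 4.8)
The plan is to derive both assertions from the $G_\subW$-Chaplygin structure of the system together with Prop.~\ref{P:B-Chaplygin}$(i)$, which guarantees that ${\bf B}=\rho_\subGW^*\widetilde{\bf B}$ is basic; the underlying point is that a gauge transformation by a basic $2$-form commutes with reduction. First I would record the Chaplygin splitting: since $W$ satisfies the vertical symmetry condition, $TQ=D\oplus W$ with $W$ tangent to the $G_\subW$-orbits, and pulling this back via $\tau_\subM$ using \eqref{Eq:C} (with the dimension count $\dim\C_m=2\,\textup{rank}(D)=\dim(T^*\widetilde Q)$) gives $T\M=\C\oplus\W$, where $\W=\ker T\rho_\subGW$ is the tangent to the $G_\subW$-orbits on $\M$; in particular $T_m\rho_\subGW$ restricts to an isomorphism $\C_m\to T_{\rho_\subGW(m)}(T^*\widetilde Q)$. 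This is precisely the setting in which the reduction of a bracket with characteristic distribution $\C$ produces a nondegenerate bracket: the nondegenerate $2$-section $(\Omega_\subM+{\bf B})|_\C$ (nondegenerate because ${\bf B}$ is semi-basic over $Q$) is $G_\subW$-invariant and $\C$ is a $G_\subW$-invariant complement of $\ker T\rho_\subGW$, so it pushes forward to a nondegenerate $2$-form $\widehat\Omega$ on $T^*\widetilde Q$, and $\{\cdot,\cdot\}_{\widetilde\BB}$ is the bracket it defines. This is the verbatim analogue of the fact, recalled in Section~\ref{Ss:Gauge-Leaves}, that $\{\cdot,\cdot\}_{\widetilde\nh}$ is nondegenerate and given by $\widetilde\Omega$.

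It then remains to identify $\widehat\Omega$ with $\widetilde\Omega_\BB=\widetilde\Omega+\widetilde{\bf B}$. Since $\rho_\subGW$ is a surjective submersion and $\rho_\subGW^*\widehat\Omega$ and $\rho_\subGW^*\widetilde\Omega_\BB$ are both semi-basic with respect to $\rho_\subGW$ while $T\M=\C\oplus\W$, it suffices to compare their restrictions to $\C$. By construction $(\rho_\subGW^*\widehat\Omega)|_\C=(\Omega_\subM+{\bf B})|_\C$; on the other hand $\rho_\subGW^*\widetilde{\bf B}={\bf B}$ by Prop.~\ref{P:B-Chaplygin}$(i)$, and $(\rho_\subGW^*\widetilde\Omega)|_\C=\Omega_\subM|_\C$, which is the bracket-level translation of Koiller's Chaplygin reduction recalled in Section~\ref{Ss:Gauge-Leaves} (evaluate $\widetilde\Omega$ on the $\rho_\subGW$-projections of hamiltonian vector fields, which span $\C$, and use \eqref{Eq:NH-Bracket}). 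Hence $(\rho_\subGW^*\widetilde\Omega_\BB)|_\C=(\Omega_\subM+{\bf B})|_\C=(\rho_\subGW^*\widehat\Omega)|_\C$, so $\widehat\Omega=\widetilde\Omega_\BB$, which is the claim.

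Equivalently, and perhaps more cleanly, I would note that because ${\bf B}=\rho_\subGW^*\widetilde{\bf B}$ is basic the ${\bf B}$-gauge transformation of $\{\cdot,\cdot\}_\nh$ commutes with reduction by $G_\subW$, so $\{\cdot,\cdot\}_{\widetilde\BB}$ equals the $\widetilde{\bf B}$-gauge transform of the nondegenerate bracket $\{\cdot,\cdot\}_{\widetilde\nh}$ (the bracket of $\widetilde\Omega$); and by the gauge-transformation formula the gauge transform by the semi-basic $2$-form $\widetilde{\bf B}$ of the bracket defined by $\widetilde\Omega$ is the bracket defined by $\widetilde\Omega+\widetilde{\bf B}=\widetilde\Omega_\BB$. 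I do not expect a genuine obstacle here: the statement merely packages the $G_\subW$-Chaplygin nondegeneracy (already called ``straightforward'' for $\{\cdot,\cdot\}_{\widetilde\nh}$) together with Prop.~\ref{P:B-Chaplygin}. The one point demanding care is the bookkeeping of semi-basic versus basic along the tower $\tau_\subM\colon\M\to Q$, $\rho_\subGW\colon\M\to T^*\widetilde Q$, $\rho_{\mbox{\tiny{$F$}}}\colon T^*\widetilde Q\to T^*\widetilde Q/F$ — in particular checking $T\M=\C\oplus\W$ and that $T\rho_\subGW$ restricts to an isomorphism on $\C$, which is what makes the pushforward of the $2$-section well defined and nondegenerate.
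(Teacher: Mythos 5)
Your proposal is correct, and your closing ``equivalently'' paragraph is essentially verbatim the paper's proof: the authors simply invoke the fact that a gauge transformation by a basic $2$-form commutes with reduction (citing Prop.~4.8 of \cite{BJac}), so that $\{\cdot,\cdot\}_{\widetilde\BB}$ is the $\widetilde{\bf B}$-gauge transform of the nondegenerate bracket $\{\cdot,\cdot\}_{\widetilde\nh}$ associated to $\widetilde\Omega$, hence is given by $\widetilde\Omega+\widetilde{\bf B}$. Your longer first argument (the splitting $T\M=\C\oplus\W$, pushing forward the nondegenerate $2$-section and comparing restrictions to $\C$) is a sound hands-on unpacking of exactly the lemma the paper cites, so the two routes coincide in substance.
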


\begin{proof}
Since the brackets $\{\cdot, \cdot\}_\nh$ and $\{\cdot, \cdot\}_\BB$ are gauge related by the 2-form ${\bf B}$, which is basic  with respect to the bundle $\rho_{\mbox{\tiny{$G_{\!W}$}}}:\M\to T^*\widetilde Q$, then by \cite[Prop.4.8]{BJac}, the (partially reduced) brackets $\{\cdot, \cdot\}_{\widetilde\nh}$ and $\{\cdot, \cdot\}_{\widetilde\BB}$ on $T^*\widetilde Q$ are gauge related by the 2-form ${\widetilde{\bf B}}$.   Therefore, since $\widetilde \Omega$ is the 2-form associated to $\{\cdot, \cdot\}_{\widetilde\nh}$, then $\widetilde \Omega + \widetilde{\bf B}$ is the corresponding 2-form associated to $\{\cdot, \cdot\}_{\widetilde{\BB}}$. 
\end{proof}

From Prop.~\ref{Prop:BracketChap} we conclude that the brackets $\{\cdot, \cdot\}_{\widetilde\nh}$ and $\{\cdot, \cdot\}_{\widetilde\BB}$ are (dynamically) gauge related and that the triple $(T^*\widetilde Q, \{\cdot, \cdot\}_{\widetilde\BB}, \widetilde H)$ is just equivalent to the triple $(T^*\widetilde Q, \widetilde\Omega_\BB, \widetilde H)$.  
Moreover, from Theorem~\ref{T:MW}$(i)$ we observe that the nonholonomic system $(T^*\widetilde Q, \{\cdot , \cdot \}_{\widetilde\BB}, \widetilde H)$ has a symmetry given by the action of the Lie group $F=G/G_\subW$ and, using the orbit projection $\rho_{\mbox{\tiny{$F$}}}:T^*\widetilde Q\to T^*\widetilde Q/F$,  we can define the reduced bracket  $\{\cdot , \cdot \}_\red^{\BB}$ on $T^*\widetilde Q/F$ analogously as in \eqref{Eq:BRedBracket}.  
It is straightforward to conclude that the $F$-reduction of $(T^*\widetilde Q, \{\cdot , \cdot \}_{\widetilde\BB}, \widetilde H)$ and the $G$-reduction of $(\M, \{\cdot , \cdot \}_{\BB}, H_\subM)$ coincide and that is why we use the same notation: $(\M/G, \{\cdot , \cdot \}^{\BB}_\red, H_\red)$.

The properties of the bracket $\{\cdot , \cdot \}^{\BB}_\red$ on $\M/G$ are studied in \cite{BY} (see also \cite{GNM}) where it is observed that, not only it describes the dynamics: $X_\red = \{\cdot, H_\red\}_\red^\BB$, but also the bracket $\{\cdot, \cdot\}_{\red}^\BB$ is regular with an integrable characteristic distribution and hence it is twisted Poisson. This last fact is easily seen using that the rank of its characteristic distribution is $\textup{dim}(\M/G) -k$ and the $k$ functions $\bar{J}_i$ are {\it Casimirs}.  Precisely, recall that $\{\widetilde J_1,...,\widetilde J_k\}$ are the partially reduced horizontal gauge momenta with $\{\eta_1,...,\eta_k\}$ the partially reduced horizontal gauge symmetries, then Prop.~\ref{Prop:HGM-MomMap} guarantees that $\{\cdot, \widetilde J_i\}_{\widetilde\BB} = (\eta_i)_{\mbox{\tiny{$T^*\widetilde Q$}}}$ and thus $\{\cdot, \bar{J}_i\}_\red^\BB = 0$ for $\bar{J}_i\in C^\infty(\M/G)$ such that $\rho_{\mbox{\tiny{$F$}}}^*\bar{J}_i=\widetilde J_i$.

Next, we see how Theorem \ref{T:MW} characterizes also the almost symplectic structure on the foliation associated to the reduced bracket $\{\cdot, \cdot \}_{\red}^\BB$.

As usual, $\mu = c_i\mu^i$ is a $\mathfrak{f}^*$-valued function on $\widetilde Q$  where $\mu^i\in  \widetilde{\mathfrak{B}}^*_{\mbox{\tiny{HGS}}}$ and $c_i\in \R$ for $i=1,...,k$.

\begin{theorem} \label{T:LeavesOfBracket}
  The leaves of the twisted Poisson bracket $\{\cdot, \cdot\}_{\emph\red}^\BB$ on $\M/G$ are (the connected components of) the almost symplectic manifolds $(\widetilde J^{-1}(\mu)/F, \omega^\BB_\mu)$ obtained in Theorem~\ref{T:MW}, where $\widetilde J^{-1}(\mu)/F$ coincides with the common level sets of the reduced horizontal gauge momenta $\bar J_i$ on $\M/G$, i.e., $\widetilde J^{-1}(\mu)/F \simeq \cap_i \bar{J}_i^{-1}(c_i)$.  
\end{theorem}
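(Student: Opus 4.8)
The plan is to combine the structural facts already recorded for the reduced bracket $\{\cdot,\cdot\}_\red^\BB$ with the momentum--level description in Theorem~\ref{T:MW}. First I would use that, as noted above, $\{\cdot,\cdot\}_\red^\BB$ is regular and twisted Poisson on $\M/G$, with characteristic distribution of rank $\textup{dim}(\M/G)-k$ and with $\bar J_1,\dots,\bar J_k$ as Casimirs. Since any hamiltonian vector field $X_{\bar h}=\{\cdot,\bar h\}_\red^\BB$ annihilates every Casimir, the (integrable) characteristic distribution is contained in $\bigcap_i\textup{Ker}\,d\bar J_i$; as the $d\bar J_i$ are independent, both distributions have rank $\textup{dim}(\M/G)-k$ and hence coincide. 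Therefore the leaves of $\{\cdot,\cdot\}_\red^\BB$ are precisely the connected components of the common level sets $\bigcap_i\bar J_i^{-1}(c_i)$, which by Remark~\ref{R:G-invJ} are exactly the manifolds $\widetilde J^{-1}(\mu)/F$ for $\mu=c_i\mu^i$. This settles the identification of the leaves as sets; what remains is to check that the almost symplectic form that $\{\cdot,\cdot\}_\red^\BB$ induces on each leaf coincides with the form $\omega_\mu^\BB$ from Theorem~\ref{T:MW}.

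For the form, I would work on $(T^*\widetilde Q,\{\cdot,\cdot\}_{\widetilde\BB})$, which by Prop.~\ref{Prop:BracketChap} is the nondegenerate bracket associated to $\widetilde\Omega_\BB$, and recall that its $F$-reduction is $\{\cdot,\cdot\}_\red^\BB$. Given $\bar f,\bar g\in C^\infty(\M/G)$, set $f=\rho_{\mbox{\tiny{$F$}}}^*\bar f$ and $g=\rho_{\mbox{\tiny{$F$}}}^*\bar g$, with $\widetilde\Omega_\BB$-hamiltonian vector fields $X_f,X_g$; these are $F$-invariant and project under $T\rho_{\mbox{\tiny{$F$}}}$ to the hamiltonian vector fields $\bar X_{\bar f},\bar X_{\bar g}$ of $\{\cdot,\cdot\}_\red^\BB$. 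The key point is that $X_f$ is tangent to $\widetilde J^{-1}(\mu)$: indeed $X_f(\widetilde J_i)=\{\widetilde J_i,f\}_{\widetilde\BB}=-(\eta_i)_{T^*\widetilde Q}(f)$ by Prop.~\ref{Prop:HGM-MomMap}, and this vanishes because $f$ is $F$-invariant while $(\eta_i)_{T^*\widetilde Q}$ generates the $F$-action. Moreover, since $T_\alpha\widetilde J^{-1}(\mu)=(T_\alpha\textup{Orb}_F(\alpha))^{\widetilde\Omega_\BB}$ (established in the proof of Theorem~\ref{T:MW}) and $\widetilde\Omega_\BB$ is nondegenerate, the vectors $X_f(\alpha)$ with $f$ running over the $F$-invariant functions span all of $T_\alpha\widetilde J^{-1}(\mu)$; consequently the $\bar X_{\bar f}$ span the tangent spaces of the leaf $L:=\widetilde J^{-1}(\mu)/F$.

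Then I would simply compute, along $\widetilde J^{-1}(\mu)$,
$$
(\iota_\mu^*\widetilde\Omega_\BB)(X_f,X_g)=\widetilde\Omega_\BB(X_f,X_g)=({\bf i}_{X_f}\widetilde\Omega_\BB)(X_g)=df(X_g)=\{f,g\}_{\widetilde\BB}=\{\bar f,\bar g\}_\red^\BB\circ\rho_{\mbox{\tiny{$F$}}},
$$
and combine this, via the right--hand diagram in \eqref{Diag:iotas} (so that $\rho_{\mbox{\tiny{$F$}}}\circ\iota_\mu=\iota_\mu^{\red}\circ\rho_\mu$) and $T\rho_\mu(X_f)=\bar X_{\bar f}|_L$, with the characterization $\omega_L(\bar X_{\bar f},\bar X_{\bar g})=\{\bar f,\bar g\}_\red^\BB|_L$ of the almost symplectic form $\omega_L$ that the twisted Poisson bracket $\{\cdot,\cdot\}_\red^\BB$ induces on $L$. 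This yields $\iota_\mu^*\widetilde\Omega_\BB=\rho_\mu^*\omega_L$; on the other hand Theorem~\ref{T:MW}$(ii)$ gives $\iota_\mu^*\widetilde\Omega_\BB=\rho_\mu^*\omega_\mu^\BB$. Since $\rho_\mu$ is a surjective submersion, $\rho_\mu^*$ is injective on differential forms, so $\omega_L=\omega_\mu^\BB$ and the theorem follows.

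The main obstacle I anticipate is not a long calculation but the careful reconciliation of two a priori different descriptions of the same reduced object --- the leafwise form of the twisted Poisson bracket $\{\cdot,\cdot\}_\red^\BB$ versus the pullback--quotient form $\omega_\mu^\BB$ of Theorem~\ref{T:MW} --- together with the verification that the $\widetilde\Omega_\BB$-hamiltonian vector fields of $F$-invariant functions are simultaneously tangent to $\widetilde J^{-1}(\mu)$ and span its tangent bundle. Once these two points are secured, the matching of the forms is the one--line computation displayed above.
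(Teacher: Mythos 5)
Your proposal is correct and follows essentially the same route as the paper: identify the leaves set-theoretically via the Casimirs $\bar J_i$ and Remark~\ref{R:G-invJ}, then lift hamiltonian vector fields of $F$-invariant functions to $(T^*\widetilde Q,\widetilde\Omega_\BB)$ using Prop.~\ref{Prop:BracketChap}, check tangency to $\widetilde J^{-1}(\mu)$, and match the forms through Theorem~\ref{T:MW}$(ii)$. The only differences are cosmetic --- you evaluate $\iota_\mu^*\widetilde\Omega_\BB$ on pairs of hamiltonian vector fields and add an explicit spanning argument, whereas the paper verifies ${\bf i}_{\overline X}\omega_\mu^\BB = d(\iota_\mu^{\red})^*f$ one vector field at a time.
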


\begin{proof}

Let $f\in C^\infty(T^*\widetilde Q/F)$. We will show that the vector field $\overline{X}: = \{\cdot, f\}_\red^\BB$ defined on $T^*\widetilde Q/F$ satisfies that, for $\bar{\alpha}\in \widetilde J^{-1}(\mu)/F$, $\overline{X}(\overline{\alpha})\in T_{\overline\alpha} (\widetilde J^{-1}(\mu)/F)$ and 
\begin{equation}\label{Eq:ProofLeaves}
{\bf i}_{\overline X({\overline\alpha})} \omega_\mu^\BB = d(\iota_\mu^\red)^* f  (\overline\alpha),
\end{equation}
where, as usual, $\iota_\mu^\red: \widetilde J^{-1}(\mu)/F \to \M/G$ is the natural inclusion.  In fact, first observe that, since $\overline{X}$ belongs to the characteristic distribution of $\{\cdot, \cdot\}_\red^\BB$ then $\overline{X}(\overline J_i) =0$ and hence $\overline{X}(\overline\alpha)\in T_{\overline\alpha}(\bar{J}_i^{-1}(c_i)) = T_{\overline\alpha}(\widetilde{J}^{-1}(\mu)/F)$, using Prop.~\ref{P:LevelSets}. Now, let $\alpha\in \widetilde J^{-1}(\mu)$ and $X(\alpha)\in T_\alpha (\widetilde J^{-1}(\mu))$ such that $\rho_F(\alpha)=\overline{\alpha}$ and $T\rho_{\mbox{\tiny{$F$}}}( X(\alpha))=\overline X(\overline{\alpha})$. 
By the definition of $\overline{X}$ and $X$, we have that $X(\alpha) = \{\cdot, \rho_{\mbox{\tiny{$F$}}}^* f\}_{\widetilde\BB}(\alpha)$ and therefore, by Prop.~\ref{Prop:BracketChap},  ${\bf i}_{X(\alpha)} \widetilde \Omega_\BB = d\rho_{\mbox{\tiny{$F$}}}^* f(\alpha)$ and hence ${\bf i}_{X(\alpha)} \iota_\mu^*\widetilde \Omega_\BB = d\iota_\mu^*\rho_{\mbox{\tiny{$F$}}}^* f(\alpha)$ which, by Theorem \ref{T:MW}, is equivalent to \eqref{Eq:ProofLeaves}.
\end{proof}

\begin{remark}\label{R:Indentification2}
\begin{enumerate}
 \item[$(i)$] As a consequence of the three main Theorems (Thms.~\ref{T:MW}, \ref{T:mu-identification} and \ref{T:LeavesOfBracket}) we conclude that the (connected components of the) almost symplectic leaves of the reduced bracket $\{\cdot, \cdot\}_\red^\BB$ on $\M/G$ are diffeomorphic to $(T^*\overline Q, \Omega_{\mbox{\tiny{$\overline Q$}}} + \widehat{\mathcal{B}}_\mu)$, see Remark~\ref{R:Indentification}.
 \item[$(ii)$] Following the notation of \cite{BY}, we have that the reduction of the manifold $(T^* \widetilde Q, \widetilde \Omega+ \widetilde{B}_1)$ gives the Poisson bracket $\{\cdot, \cdot\}_\red^1$ on $\M/G$ for which the symplectic leaves are diffeomorphic to $(T^* \overline Q, \Omega_{\mbox{\tiny{$\overline Q$}}})$.  Moreover, Theorem~\ref{T:LeavesOfBracket} puts in evidence the gauge relation of $\{\cdot, \cdot\}_\red^1$ and $\{\cdot, \cdot\}_\red^\BB$ since they have the same foliation and the 2-form on each leaf is given by $\Omega_{\mbox{\tiny{$\overline Q$}}}$ and $\Omega_{\mbox{\tiny{$\overline Q$}}} + \widehat{\mathcal{B}}_\mu$, respectively. 
\end{enumerate}
\end{remark}

\section{Examples} \label{S:Examples}

In this section we study four different examples.  All of them admit a vertical complement of the constraints satisfying the vertical symmetry condition and $k$ horizontal gauge momenta  given by the nonholonomic momentum map evaluated in non-constant sections. 

\subsection{Nonholonomic particle} \label{Ex:NHParticle}
The nonholonomic particle is the toy example describing the motion of a particle in $Q = \R^3$ with coordinates $q=(x,y,z)$ determined by the (kinetic) Lagrangian $L = \frac{1}{2}(\dot{x}^2+\dot{y}^2+\dot{z}^2)$ and the nonintegrable distribution $D=\textup{span}\left\{\partial_x+y \partial_z,\partial_y\right\}$. The translational Lie group $G=\mathbb{R}^2$ acting on the first and third variable of $\R^3$ is a symmetry of the nonholonomic system.  The system $(\M, \Omega_\subM|_\C, H_\subM)$ admits a $G$-invariant horizontal gauge momentum $J_\zeta = \tfrac{1}{\sqrt{1+y^2}}p_x$ on $\M$ (where $(p_x,p_y,p_z)$ are the coordinates on $T_q^*Q$ associated to the basis $\{dx,dy,dz-ydx\}$,  see e.g. \cite{Bloch:Book}).

{\bf First step reduction.}  The vertical complement $W=\textup{span}\{\partial_z\}$ satisfies the vertical symmetry condition and therefore, the reduction of the system by the Lie group $G_\subW = \R$ gives the partially reduced nonholonomic system $(T^*\widetilde Q, \widetilde \Omega, \widetilde H)$ where $\widetilde Q=\R^2$, 
$\widetilde\Omega = \Omega_{\mbox{\tiny{$\R^2$}}} - B_{\mbox{\tiny{$\!\langle\! J\mathcal{K}\!\rangle$}}}  = dx\wedge dp_x+dy\wedge dp_y-\displaystyle\tfrac{y}{1+y^2}p_x\,dx\wedge dy$ and $\widetilde H = \tfrac{1}{2}\left(\tfrac{1}{1+y^2}p_x^2+p_y^2\right).$
The action of the Lie group $F= G/G_\subW \simeq \R$ on $\widetilde Q$ (given by the translation on the first variable) is a symmetry of the partially reduced system on $T^*\widetilde Q$. By Prop.~\ref{Prop:tildeJ_i} the partially reduced horizontal gauge momentum is given by $\widetilde J_{\eta_1} = \tfrac{1}{\sqrt{1+y^2}}p_x$ with $\eta_1 = \tfrac{1}{\sqrt{1+y^2}}{\bf 1}$ where ${\bf 1}\in \mathfrak{f}=\R$ such that ${\bf 1}_{\mbox{\tiny{$\widetilde Q$}}} = \partial_x$. Following the computations on Sec.~\ref{Ss:B}, the 2-form defining the gauge transformation is ${\bf B} = 0$ and hence $(\eta_1)_{\mbox{\tiny{$T^*\widetilde Q$}}}$ is the hamiltonian vector field associated to the function $\widetilde J_{\eta_1}$ for the 2-form $\widetilde \Omega$ as Prop. \ref{Prop:HGM-MomMap} shows.

{\bf Momentum map and reduction.}
The canonical momentum bundle map $\widetilde J: T^*(\R^2) \to \mathfrak{f}^*$ is given, at each $(x,y)\in\R^2$,  by 
$$
\langle \widetilde J(x,y,p_x,p_y) , \eta(x,y)\rangle = f(x,y) p_x,
$$
where $\eta(x,y)=f(x,y) {\bf 1}\in \mathfrak{f}$ for $f\in C^\infty(\R^2)$.  The dual element $\mu \in \Gamma(\R^2\times\mathfrak{f}^*)$ associated to $\eta_1 \in \Gamma(\R^2\times\mathfrak{f})$ is given by $\mu = \sqrt{1+y^2}{\bf 1}^*$, where ${\bf 1}^*\in \mathfrak{f}^*$ is the dual element of ${\bf 1}$.  For $\mu_c= c \mu$ with $c\in\R$, $\widetilde J^{-1}(\mu_c)=\{  (x,y,p_x,p_y) \ : \  p_x = c\sqrt{1+y^2}\} = \widetilde J_{\eta_1}^{-1}(c)$, recovering Prop.~\ref{P:LevelSets}.  Following Theorem~\ref{T:MW}, $\iota_{\mu_c}^*\widetilde \Omega = dy\wedge dp_y$ and hence on $\widetilde J^{-1}(\mu_c)/F\simeq T^*(\R)$ the 2-form $\omega_{\mu_c} = dy\wedge dp_y$ is symplectic and coincides with the canonical 2-form on $T^*(\R)$, see Theorem~\ref{T:mu-identification}. Therefore, each symplectic leaf associated to $\{\cdot, \cdot\}_\red$ on $\M/G$ is identified with the canonical symplectic manifold $(T^*(\R), \Omega_{\mbox{\tiny{$\R$}}})$.

\subsection{Snakeboard}\label{Ex:Snakeboard}

The snakeboard describes the dynamics of a skateboard but allowing the axis of the wheels to rotate by the effect of the human rider creating a torque, so that the board spins about a vertical axis, see e.g. \cite{Bloch:Book}.
The system is modelled on the manifold $Q = \SE(2) \times S^1 \times S^1$ with coordinates $q = (\theta, x, y, \psi, \phi)$, where $(\theta, x, y)\in \SE(2)$ represents the orientation and the position of the board, $\psi$ is the angle of
the rotor with respect to the board and $\phi$ is the angle of the front and back wheels with respect to the
board (which in this simplified model they are assumed to be equal).  We denote by $m$ the mass of the board, $r$ the distance from the center of the board to the pivot point of the wheel axes and by $\mathbb{J}$, $\mathbb{J}_0$ the inertia of the rotor and of the board respectively.   
The lagrangian is given by
$$
L(q,\dot{q})=\tfrac{m}{2}(\dot{x}^2+\dot{y}^2+r^2\dot{\theta}^2)+\tfrac{1}{2} \mathbb{J}\dot{\psi}^2+\mathbb{J}\dot{\psi}\dot{\theta}+\mathbb{J}_0 \dot{\phi}^2.  
$$

In this simplified version, see \cite{BKMM}, the constraint 1-forms can be written as  $\varepsilon^x = dx+r\cos \theta  \cot \phi \ d\theta$ and $\varepsilon^y = dy + r\sin \theta  \cot \phi \ d\theta,$ with $\phi\neq 0,\pi$, and hence the constraint distribution $D$ is given by 
$$
D=\text{span} \left\{ Y_\theta := \partial_\theta - r \cos \theta  \cot \phi \ \partial_x - r  \sin \theta \cot \phi \ \partial_y , \, \partial_\psi , \, \partial_\phi  \right\}.$$

The action of the Lie group $G=\SE(2) \times S^1$ on $Q$ given, at each $q\in Q$ and $(\alpha, a ,b,\beta)\in G$, by
$$ 
\Psi_{(\alpha,a,b,\beta)} (\theta,x,y,\psi,\phi) = (\theta+\alpha,x\cos \alpha - y \sin \alpha +a, x\sin \alpha+y\cos\alpha+b,\psi+\beta,\phi),
$$
is free and proper and it defines a symmetry for the nonholonomic system.  Then $V=\textup{span}\{Y_{\theta},\partial_{\psi},\partial_{x},\partial_y\}$.

{\bf The manifold $\M$}. Observe that $W=\text{span} \left\{ \partial_x,\partial_y \right\}$ is a vertical complement of the constraints, and we may consider the adapted basis $\mathfrak{B}_{\mbox{\tiny{$TQ$}}} = \left\{ Y_\theta, \partial_\psi , \partial_\phi, \partial_x,\partial_y \right\}$ of $TQ=D\oplus W$ with dual basis $\mathfrak{B}_{\mbox{\tiny{$T^*\!Q$}}} = \left\{ d\theta, d\psi , d\phi, \varepsilon^x, \varepsilon^y \right\}$.  If we denote by $(p_\theta,p_\psi,p_\phi,p_x,p_y)$ the associated coordinates on $T^*Q$, then the manifold $\M\subset T^*Q$ is given by $p_x= -\cos \theta \, F(\phi) (p_\theta-p_\psi)$ and $p_y = - \sin \theta \, F(\phi) (p_\theta-p_\psi)$, where $F(\phi) = mr\tfrac{\sin\phi \cos\phi }{m r^2 - J \sin^2 \phi }$.

{\bf The vertical symmetry condition and first step reduction}.  The vertical complement $W$ of the constraints $D$ satisfies the vertical symmetry condition, that is, $G_\subW=\R^2$. Then the system is $\R^2$-Chaplygin and the partially reduced nonholonomic system takes place in $T^*\widetilde Q$ for $\widetilde{Q} \simeq \SO(2) \times S^1 \times S^1$ with coordinates $(\theta,\psi,\phi)$ and it is determined by the partially reduced hamiltonian $\widetilde H$ and the 2-form
\begin{equation}\label{Eq:Snake1}
\widetilde \Omega = \Omega_{\mbox{\tiny{$\widetilde{Q}$}}} - B_{\mbox{\tiny{$\!\langle\! J\mathcal{K}\!\rangle$}}} = -d(p_\theta d\theta+p_\psi d\psi + p_\phi d\phi) + r\tfrac{F(\phi)}{\sin^2\phi}(p_\theta-p_\psi) d\theta\wedge d\phi,
\end{equation}
where we recall that $\rho_\subW^*B_{\mbox{\tiny{$\!\langle\! J\mathcal{K}\!\rangle$}}} = \langle J,\mathcal{K}_\subW \rangle$ and $\langle J,\mathcal{K}_\subW \rangle |_\C = (p_x d\varepsilon^x + p_y d\varepsilon^y) |_\C$. 

The remaining symmetry is given by the action of the Lie group $F=G/G_\subW\simeq S^1 \times S^1$ on $\widetilde{Q}$ given, at each $(\alpha,\beta) \in F$ and $(\theta,\psi,\phi)\in \widetilde Q$, by
$\widetilde \Psi_{(\alpha,\beta)} (\theta,\psi,\phi) = (\theta+\alpha,\psi+\beta,\phi),$ with trivial Lie algebra $\mathfrak{f}\simeq \R^2$.

{\bf Horizontal gauge momenta and gauge transformation}. First observe that $S=\textup{span}\{Y_\theta, \partial_\psi\}$.  Following \cite{BalSan}, we have also two ($G$-invariant) horizontal gauge momenta ${J}_1, {J}_2$ given by 
\[ 
{J}_1=\exp \left( r \int \tfrac{F(\phi)}{\sin^2(\phi)} \ d\phi \right) (p_\theta-p_\psi)   \hspace{0.4cm} \text{ and } \hspace{0.4cm} {J}_2=p_\psi.  
\]

By the $G_\subW$-invariance of ${J}_1$ and $J_2$,  we can define the partially reduced horizontal gauge momenta $\widetilde J_1 = \langle \widetilde J, \eta_1\rangle$ and $\widetilde J_2= \langle \widetilde J, \eta_1\rangle$ on $T^*\widetilde Q$ as in \eqref{Eq:TildeJi} 
for $\widetilde J: T^*\widetilde Q\to \mathfrak{f}^*$ the canonical momentum map and $\eta_1, \eta_2 \in \Gamma(\widetilde Q \times \mathfrak{f})$ with $\eta_1= E(\phi)(e_1-e_2)$ and $\eta_2=e_2$, where $E(\phi)=\exp\left( r\int\tfrac{F(\phi)}{\sin^2\phi} \ d\phi \right)$ and $\{e_1,e_2\}$ is the canonical basis of $\mathfrak{f}$, see Lemma \ref{L:lemma1-1} and Sec.~\ref{Ss:CanMomentMap}. 

From Section~\ref{Ss:B} (for details see \cite{BalSan}) we can see that ${\bf B}=0$ and then Prop.~\ref{Prop:HGM-MomMap} is verified directly for the 2-form $\widetilde \Omega$. 

{\bf Momentum map and reduction}.
Let us consider the canonical momentum map $\widetilde J: T^*\widetilde Q\to \mathfrak{f}^*$. Following \eqref{Eq:RedBasisHGS} and \eqref{Eq:RedDualBasisHGS} we have the basis  of sections $\widetilde{\mathfrak{B}}_{\mbox{\tiny{HGS}}}=\{\eta_1, \eta_2\}$ on $\widetilde Q \times \mathfrak{f}\to \widetilde Q$ and  its dual basis  $\widetilde{\mathfrak{B}}^*_{\mbox{\tiny{HGS}}}=\{\mu^1, \mu^2\}$ so that $\mu^1 = \frac{1}{E(\phi)}e^1$ and $\mu^2=e^1+e^2$, where $\{e^1,e^2\}$ is the dual basis of $\mathfrak{f}^* $ associated to $\{e_1,e_2\}$.  
 Therefore for $\mu=c_1\mu^1+c_2\mu^2 \in \Gamma(\widetilde Q \times \mathfrak{f}^*)$ with $c_1,c_2 \in \mathbb{R}$, we obtain that
\begin{equation}\label{Eq:Snake2}
\widetilde{J}^{-1}(\mu)  = \widetilde{J}_1^{-1}(c_1) \cap \widetilde{J}_2^{-1}(c_2)  = \{ (\theta,\phi,\psi,p_\theta,p_\phi,p_\psi) \in T^*\widetilde{Q} :  p_\theta = \tfrac{c_1}{E(\phi)}+c_2 \ \text{ and }  \ p_\psi=c_2 \}.
\end{equation}

From \eqref{Eq:Snake1} and \eqref{Eq:Snake2} we can compute $\iota_\mu^*\widetilde \Omega$ and hence, following Theorem \ref{T:MW} we conclude that the (almost) symplectic form on $\widetilde{J}^{-1}(\mu)/F$ is $\omega_\mu = d\phi\wedge dp_\phi$.  Therefore, in agreement with Theorem~\ref{T:mu-identification} and Theorem~\ref{T:LeavesOfBracket}, we identify 
the leaves of $\{\cdot, \cdot\}_\red$ defined in \eqref{Eq:RedBracket} (and computed explicitly in \cite{BalSan}) with the canonical symplectic manifolds $(T^*S^1, \Omega_{\mbox{\tiny{$S^1$}}})$ (observe that $\textup{dim}(\overline{Q})=1$). 
\subsection{Chaplygin ball}

Next, we study the celebrated example called the Chaplygin ball in the context of \cite{BorMa2001,Duistermaat, GN2008}, see also \cite{Hoch}. Consider a ball of radius $r$ with an inhomogeneous mass distribution (but with the center of mass coinciding with the geometric center) that rolls without sliding on a plane. We denote by $\mathbb{I}$ the inertia tensor that is represented as a diagonal matrix with positive entries given by the principal moments of inertia $\mathbb{I}_1$, $\mathbb{I}_2$, $\mathbb{I}_3$.  The configuration manifold is $Q=\SO(3)\times \R^2$, where the rotational matrix $g\in \SO(3)$ represents the orientation of the ball relating the orthogonal frame attached to the body with the one fixed in space and $(x,y)\in \R^2$ represents the position of the center of mass of the ball.  The Lagrangian is just the kinetic energy 
$$
L((g,x,y), (\vecOm, \dot x, \dot y))= \tfrac{1}{2} \langle \mathbb{I}\vecOm, \vecOm \rangle +   \tfrac{m}{2}(\dot x^2 + \dot y^2),
$$
where $\vecOm = (\Omega_1, \Omega_2, \Omega_3)$ is the angular velocity in body coordinates and $m$ is the total mass of the ball. 

The nonholonomic constraints are written as $\dot x = r \langle \vecbeta, \vecOm\rangle$ and $\dot y = - r\langle \vecalpha, \vecOm \rangle$ defining the constraints 1-forms given by 
$$
\epsilon^x = dx - r\langle \vecbeta, \vecL\rangle \qquad \mbox{and} \qquad \epsilon^y = dy+ r\langle \vecalpha, \vecL \rangle,
$$
where $\vecalpha = (\alpha_1, \alpha_2, \alpha_3)$ and $\vecbeta=(\beta_1, \beta_2, \beta_3)$ are the first and second rows of the matrix $g$, $\vecL = (\lambda_1, \lambda_2, \lambda_3)$ are the left-invariant Maurer Cartan 1-forms on $\SO(3)$ and $\langle \cdot, \cdot \rangle$ is the natural pairing in $\R^3$.  

The system has a symmetry given by the action $\Psi$ of the Lie group $G=\SE(2)$ on $Q$ given, at each $(h,(a,b))\in \SE(2)$ and $(g,(x,y))\in \SO(3)\times \R^2$,  by 
$
\Psi_{(h,(a,b))} (g,(x,y)) = ( \tilde h  g, h(x,y)^t + (a,b)^t ),
$
where $\widetilde h = \left( \! \begin{smallmatrix*}[c]  h & 0 \\ 0 & 1  \end{smallmatrix*} \!\right) \in \SO(3)$ and $(\cdot)^t$ is the transpose of the element $(\cdot)$. 

Let us denote by $\{X_1^L, X_2^L, X_3^L\}$ the left invariant vector fields on $\SO(3)$ (so that at the identity they are align with the canonical basis of $\mathfrak{so}(3)$) dual to $\{\lambda_1, \lambda_2, \lambda_3\}$ and then we observe that the vector fields $X_i := X_i^L +r\beta_i \partial_x - r\alpha_i\partial_y$ for $i=1,2,3$ generate the constraint distribution $D$. Observe also that the vertical distribution $V$ is generated by $\{\langle \vecgamma, {\bf X}\rangle, \partial_x, \partial_y\}$, where $\vecgamma = (\gamma_1, \gamma_2, \gamma_3)$ is the third row of the matrix $g$ and ${\bf X}=(X_1,X_2,X_3)$. Therefore, the distribution $W=\{\partial_x, \partial_y\}$ is a vertical complement of the constraints and hence we may consider the adapted basis of $TQ$ given by $\mathfrak{B}_{\mbox{\tiny{$TQ$}}} =\{ X_1, X_2,X_3,\partial_x,\partial_y\}$ with dual basis $\mathfrak{B}_{\mbox{\tiny{$T^*\!Q$}}} = \{\lambda_1, \lambda_2,\lambda_3, \epsilon^x, \epsilon^y\}$.  

{\bf The manifold $\M$}.  If $(M_1,M_2,M_3,p_x,p_y)$ are the coordinates associated to the basis $\mathfrak{B}_{\mbox{\tiny{$T^*\!Q$}}}$ then the manifold $\M\subset T^*Q$ is given by $p_x = mr\langle \vecbeta, \vecOm \rangle$ and $p_y = - mr \langle \vecalpha, \vecOm \rangle$,
where ${\bf M}=(M_1,M_2,M_3)$ and $\vecOm$ are related by ${\bf M} = \mathbb{I}\vecOm + mr^2 \langle \vecgamma, \vecOm \rangle \vecgamma$.

{\bf The vertical symmetry condition and first step reduction.}   The Lie group $G_\subW = \R^2$ is a closed normal subgroup of $G$ and hence it is a symmetry of the nonholonomic system that makes it into a Chaplygin system:  $TQ = D\oplus W$ for $W =\textup{span}\{\partial_x, \partial_y\}$. 
Since $\widetilde Q \simeq \SO(3)$ then $\M/G_\subW \simeq T^*\SO(3)$ with coordinates $(g,{\bf M})$.  The 2-form $\widetilde \Omega$ on $T^*\SO(3)$, defining the partially reduced dynamics, is given by 
\begin{equation*}
\widetilde \Omega= \Omega_{\mbox{\tiny{$\widetilde Q$}}} - B_{\mbox{\tiny{$\!\langle\! J\mathcal{K}\!\rangle$}}}  = -d(M_i \lambda_i) -  mr^2( \vecOm  -   \langle \vecgamma,\vecOm\rangle \vecgamma) d\vecL.
\end{equation*}
 The remaining Lie group $F= G/G_\subW\simeq S^1$ leaves invariant the system $(T^*\SO(3),\widetilde \Omega,\widetilde H)$.

{\bf Horizontal gauge momentum and gauge transformation.}
Following \cite{Duistermaat} (see also \cite{BorMa2001,GN2008}), this example admits one $G$-invariant horizontal gauge momentum $J_\zeta\in C^\infty(\M)$ defined by the section $\zeta = ( 1,-x,y)$ of $\g_{\mbox{\tiny{$S$}}}$, that is, $J_\zeta = \langle J^\nh, (1,-x,y)\rangle = \langle \vecgamma, {\bf M}\rangle$. 
Moreover, since $J_\zeta$ is basic, it descends to a partially horizontal gauge momentum $\widetilde J_1 = \langle \widetilde J, {\bf 1}\rangle = {\bf i}_{\widetilde Y} \Theta_{\mbox{\tiny{$\widetilde Q$}}}$ where $\widetilde Y = {\bf 1}_{\mbox{\tiny{$\widetilde Q$}}} = \langle \vecgamma, {\bf X}\rangle$ for ${\bf 1}\in \mathfrak{f}\simeq \R$, see Lemma \ref{L:lemma1-1}. 

Since $\textup{rank}(S) =\textup{dim}(\mathfrak{f})=1$, following Sec.~\ref{Ss:B} (see also \cite{BY,GN2008}), the system admits a dynamical gauge transformation by the 2-form  ${\bf B}  = m r^2 \langle \vecOm, d\vecL\rangle$ so that $\widetilde Y ={\bf 1}_{\mbox{\tiny{$\widetilde Q$}}}$ is the hamiltonian vector field associated to $\widetilde J_1$ with respect to the 2-form  
\begin{equation}\label{Bola:omegaBtilde}
\widetilde \Omega_\BB = \Omega_{\mbox{\tiny{$\widetilde Q$}}} - B_{\mbox{\tiny{$\!\langle\! J\mathcal{K}\!\rangle$}}} +\widetilde{\bf B} = -d(M_i \lambda_i) + mr^2\langle \vecgamma,\vecOm\rangle\langle \vecgamma, d\vecL\rangle.
\end{equation}
 Moreover, from Sec.~\ref{Ss:B},  the 2-form ${\bf B}$ is written as ${\bf B} =B_1+\mathcal{B}$ where $B_1 = \langle J,\mathcal{K}_\subW\rangle - J_\zeta \, d\langle \vecgamma, \vecL\rangle$ and $\mathcal{B}=  mr^2\langle \vecgamma,\vecOm\rangle\langle \vecgamma, d\vecL\rangle + J_\zeta \, \Phi_{S^2}$ for 
 $\Phi_{S^2} = d\langle \vecgamma, \vecL\rangle= \gamma_1\, d\gamma_2\wedge d\gamma_3+ \gamma_2\, d\gamma_3\wedge d\gamma_1 + \gamma_3\, d\gamma_1\wedge d\gamma_2$, see \cite{BGN,BY} for details. 

{\bf Momentum map and reduction}. From \eqref{Bola:omegaBtilde} we observe that $\widetilde{\Omega}_\BB$ is the sum of the canonical 2-form on $T^*\widetilde Q$ and a basic 2-form (with respect to the principal bundle $T^*\SO(3) \to T^*\SO(3)/F$), which implies that the canonical momentum map $\widetilde J:T^* \SO(3) \to \mathfrak{f}^*$ is a standard momentum map for $\widetilde \Omega_\BB$. Moreover, since $\widetilde J_1 =\langle \widetilde J,{\bf 1}\rangle$  we can perform a standard (almost) symplectic reduction (as it was done in \cite{BalFern}). 

In what follows we compute the almost symplectic foliation defined in Theorem~\ref{T:MW} in order to illustrate Theorem~\ref{T:mu-identification} and Theorem~\ref{T:LeavesOfBracket} and enlighten the almost symplectic leaves associated to the twisted Poisson bracket described in \cite{BGN,BorMa2001,GN2008}.  For $\gamma_3\neq 0$, consider the basis $\mathfrak{B}_{\mbox{\tiny{$T\widetilde Q$}}} = \{\widetilde Y, \widetilde X_i= X_i^L - \gamma_i \widetilde Y\}$ for $i=1,2$ with the dual basis $\mathfrak{B}_{\mbox{\tiny{$T^*\!\widetilde Q$}}} = \{\epsilon^{\widetilde Y} = \langle \vecgamma, \vecL\rangle, \widetilde X^1 = \gamma_3^{-1} (-\gamma_1\lambda_3+\gamma_3\lambda_1), \widetilde X^2 = \gamma_3^{-1} (-\gamma_2\lambda_3+\gamma_3\lambda_2)\}$. 
If $(\widetilde p, p_1,p_2)$ are the coordinates on $T^*\widetilde Q$ associated with this basis, then 
$$
\widetilde\Omega_\BB = -d(p_1\widetilde X^1 +p_2\widetilde X^2) - d(\widetilde p\, \epsilon^{\widetilde Y}) + mr^2\langle \vecgamma,\vecOm\rangle\langle \vecgamma, d\vecL\rangle.
$$
Then,  using Remark~\ref{R:G-invJ} we have that $\widetilde J^{-1}(\mu)/F = \bar{J}_1^{-1}(c) = \{(\vecgamma, \widetilde p, p_1,p_2) \ : \ \langle \vecgamma, \vecgamma\rangle = 1, \widetilde p=c\}\simeq T^*(S^2)$, 
and hence  
$$
\omega_\mu^\BB = \Omega_{S^2} + (c-mr^2\langle \vecgamma, \vecOm\rangle ) \Phi_{S^2},
$$
where $\Omega_{S^2}$ is the canonical 2-form on $T^*(S^2)$.  Therefore, the almost symplectic foliation of the bracket $\{\cdot, \cdot\}_\red^\BB$ (computed explicitly in \cite{BorMa2001,GN2008}) is identified with the manifolds $(T^*(S^2), \Omega_{S^2} + \widehat{\mathcal{B}}_\mu)$, where  $\widehat{\mathcal{B}}_\mu$ is the 2-form on $T^*(S^2)$ given by $\widehat{\mathcal{B}}_\mu  = (c-mr^2\langle \vecgamma, \vecOm\rangle ) \Phi_{S^2}$ (observe also that $\overline{\mathcal{B}} = (\bar{J}_1 -mr^2 \langle \vecgamma, \vecOm\rangle) \Phi_{S^2}$). In this case, we can compute a conformal factor $f_\mu$ for each leaf $(T^*(S^2), \Omega_{S^2} + \widehat{\mathcal{B}}_\mu)$ in order to obtain the conformal factor for the bracket $\{\cdot, \cdot\}_\red^\BB$ (see e.g., \cite{BalFern}).

\subsection{Solids of revolution on a plane}\label{Ex:Solids}
Let us consider a convex body of revolution, i.e., a body that is geometrically and dynamically symme\-tric under rotations about a given axis, rolling on a plane without sliding.  This example is interesting because the horizontal gauge momenta cannot be explicitly written. However, we will see that the reduction of Theorem \ref{T:MW} gives a symplectic foliation, where the nonholonomic dynamics lives, that is diffeomorphic to  $(T^*S^1,\Omega_{\mbox{\tiny{$S^1$}}})$ as Theorem \ref{T:mu-identification} asserts. 

 For this example we follow \cite{Cushman1998,Book:CDS} and we keep the notation and framework of the previous example. We assume that the body is invariant under rotations around ${\bf e}_3$ and hence the principal moments of inertia  are $I_1=I_2$ and $I_3$.  The total mass of the body is $m$ and the position of the center of mass is represented by the coordinates $\vecx  = (x,y,z) \in \R^3$ while the relative position of the body is given by the rotational matrix $g \in \SO(3)$. The lagrangian $L : T(\SO(3)\times \R^3 ) \to \R$ is of mechanical type and is given by
$$
L((g, \vecx),( \vecOm, \dot\vecx)) = \frac{1}{2} \langle \mathbb{I} \vecOm, \vecOm\rangle + \frac{1}{2} m \langle \dot \vecx, \dot \vecx\rangle - m{\bf g} \langle \vecx, {\bf e}_3 \rangle,
$$
where $\mathbb{I}$ is the inertia matrix with entries $I_1,I_2,I_3$,  $\vecOm = (\Omega_1, \Omega_2, \Omega_3)$ is the angular velocity of the body in body coordinates, ${\bf g}$ is the constant of gravity and $\langle \cdot, \cdot \rangle$ denotes the inner product in $\R^3$.

Let $\vecs$ be the vector from $\vecx$ to a fixed point on the surface $\mathcal{S}$ of the body. If we denote by $\vecgamma = (\gamma_1, \gamma_2, \gamma_3)$ the third row of the matrix $g \in \SO(3)$, then $\vecs$ can be represented by the map $\vecs : S^2 \to \mathcal{S}$ so that $\vecs(\vecgamma) = (\varrho(\gamma_3)\gamma_1 , \varrho(\gamma_3) \gamma_2 , \zeta(\gamma_3 ))$,
where $\varrho= \varrho (\gamma_3)$ and $\zeta=\zeta(\gamma_3)$ are the smooth functions defined in \cite[Chap.6.7]{Book:CDS} that depend on the shape of the body. Throughout this work, we will also denote $\vecs(\vecgamma) = \varrho  \vecgamma -L {\bf e}_3$ where $L=L(\gamma_3) = \varrho  \gamma_3 - \zeta$. Since the body rolls on a plane, the configuration space $Q$ is diffeomorphic to $\SO(3)\times \R^2$ and it is described by
$$ 
Q = \{(g, \vecx) \in \SO(3) \times \R^3 : z = -\langle \vecgamma, \vecs\rangle\}. 
$$

The nonholonomic constraints describing the rolling without sliding are written as $
g^t \dot \vecx= - \vecOm \times \vecs.
$
Using that $(\vecOm, \dot\vecx)$ are the coordinates associated to the basis $\{X_1^L, X_2^L, X_3^L, \partial_x, \partial_y, \partial_z\}$ of $TQ$, for $X_i^L$ the left invariant vector fields on $\SO(3)$, we conclude that the constraint distribution $D$ on $Q$ can be written as $D= \textup{span}\{X_1, X_2, X_3\}$, where the vector fields $X_i$ are defined to be 
$$
X_i := X_i^L + (\vecalpha \times \vecs)_i\partial_{x} + (\vecbeta \times \vecs)_i \partial_{y} + (\vecgamma \times \vecs)_i \partial_{z}, \mbox{ \ \ for } i=1,2,3,
$$
with $\vecalpha$ and $\vecbeta$ the first and second rows of the matrix $g$. Therefore, the constraints 1-forms are
$$
\epsilon^1 = dx - \langle \vecalpha , \vecs \times \vecL \rangle \qquad \mbox{and} \qquad \epsilon^2 = dy - \langle \vecbeta, \vecs \times \vecL \rangle,
$$
where $\vecL = (\lambda_1 , \lambda_2 , \lambda_3 )$ are the (Maurer-Cartan) 1-forms on $\SO(3)$ dual to the left invariant vector fields $\{X_1^L , X_2^L , X_3^L \}$.

For $(g,(x,y))$ coordinates on $Q\simeq \SO(3)\times\R^2$, we define the action of the Lie group $G= S^1\times \textup{SE}(2)$  on $Q$ given, at each $(h_1, (h_2, (a,b)))\in G$, by 
$$
\Psi_{(h_1, (h_2, (a,b)))} (g, (x,y)) =  ( \tilde h_2g\tilde h_1^{-1}, h_1h_2(x,y)^t +(a,b)^t),
$$
where $h_1$ and $h_2 \in \SO(2)$ are orthogonal $2\times 2$ matrices and $\tilde h_i = \left(\begin{array}{cc} h_i & 0 \\ 0 & 1\end{array} \right) \in \SO(3)$. Since this action is not free, from now on, we consider the manifold $Q$ given by the coordinates $(g,(x,y))$ with $\gamma_3 \neq \pm 1$ and hence with this restriction it defines a  symmetry of this nonholonomic system as in \cite{BalSan}.

The Lie algebra associated to $G$ is $\g \simeq \R \times \R \times \R^2$.  
It is straightforward to check that the {\it dimension assumption} is satisfied since $\partial_{x},\partial_y$ are vector fields in $V$ and $D +V=TQ$. Moreover, $S=D\cap V =\textup{span}\{Y_1 := X_3 , Y_2 := \langle \vecgamma, {\bf X}\rangle \},$ with ${\bf X} = (X_1 , X_2 , X_3 )$.  The sections $\xi_1 := (1; 0, (y + \varrho\beta_3 , -x - \varrho \alpha_3 ))$ and $\xi_2 := (0; 1, (y - L\beta_3 , -x + L\alpha_3 ))$ on $Q\times \g \to Q$, verify that 
$(\xi_1)_\subQ = -X_3$ and $(\xi_2)_\subQ = \langle \vecgamma, {\bf X}\rangle$ and hence they are a basis of the bundle $\g_\subS \to Q$.

{\bf The vertical symmetry condition and first step reduction.}
If we choose the vertical complement of the constraints $W=\textup{span}\{\partial_{x}, \partial_{y}\}$ then $W$ is $G$-invariant and generated by the Lie algebra $\mathfrak{w} = \R^2$.  Therefore, $W$ satisfies the vertical symmetry condition and we can perform a reduction by the Lie group $G_\subW=\R^2$ obtaining the partially reduced nonholonomic system on $T^*\SO(3)$ with the 2-form $\widetilde \Omega = \Omega_{\mbox{\tiny{$\SO(3)$}}} - B_{\mbox{\tiny{$\!\langle\! J\mathcal{K}\!\rangle$}}}$. 
The action of the Lie group $F\simeq S^1\times S^1$ on $T^*\SO(3)$ is given, at each $(g,{\bf M})\in T^*\SO(3)$ and $(h_1,h_2)\in S^1\times S^1$, by
$$
\widetilde \Psi_{(h_1,h_2)} (g,{\bf M}) = (\tilde h_2g \tilde h_1^{-1}, \tilde h_1{\bf M}).
$$
This $F$-action defines a symmetry of the partially reduced nonholonomic system $(T^*\SO(3), \widetilde\Omega, \widetilde H)$ for $\gamma_3 \neq \pm 1$.

{\bf Horizontal gauge momenta and gauge transformation.}
Following \cite{BorMa2002b,Book:CDS} (see also \cite{BalSan}), this example admits two $G$-invariant horizontal gauge momenta $J_1, J_2$ on $\M$ defined by two sections $\zeta_1, \zeta_2$ on $Q\times \g \to Q$ given by 
$$
J_i = f_i^1 (q) J_{\xi_1} + f_i^2(q) J_{\xi_2}  \qquad \mbox{and} \qquad \zeta_i = f_i^1 (q) \xi_1 + f_i^2(q) \xi_2
$$
where $f_i^j$ are $G$-invariant functions on $Q$ (i.e., they depend only on the variable $\gamma_3$) and $J_{\xi_i} = \langle J^\nh, \xi_i\rangle$ for $i=1,2$.  The functions $f_i^j$ cannot be explicitly written, instead they are defined as a solution of an ordinary linear system of differential equations \cite{BalSan,Book:CDS}.

The horizontal gauge momenta $J_1,J_2$ descend to partially reduced horizontal gauge momenta $\widetilde J_1, \widetilde J_2$ that are given by $\widetilde J_i = \langle \widetilde J, \eta_i \rangle$ where $\eta_i = \varrho_{\g} (\zeta_i)$ are the corresponding $\mathfrak{f}$-valued functions (as in Lemma~\ref{L:lemma1-1}). That is, $\widetilde{\mathfrak{B}}_{{\mbox{\tiny{HGS}}}}= \{\eta_1, \eta_2\}$ is given by
\begin{equation}\label{Solid:HGS}
\eta_i= \tilde f_i^1{\bf e}_1 + \tilde f_i^2{\bf e}_2, 
\end{equation}
where $\tilde f_i^j$ are the functions on $\SO(3)$ such that $\rho_{\mbox{\tiny{$\widetilde Q$}}}^*\tilde f_i^j = f_i^j$ and ${\bf e}_1 = (1,0) = \varrho_{\g}(\xi_1)$, ${\bf e}_2=(0,1) = \varrho_{\g}(\xi_2)$.

Following Sec.~\ref{Ss:B} (see \cite{BY, GNM}) and Prop.~\ref{Prop:HGM-MomMap}, the partially reduced system admits a dynamical gauge transformation by a 2-form $\widetilde{\bf B}$, given by 
$$
\widetilde{\bf B} = m \varrho \langle \vecgamma , \vecs \rangle \langle \vecOm, d\vecL\rangle,
$$
 so that the infinitesimal generators associated to $\eta_1, \eta_2$ are hamiltonian vector fields of $\widetilde J_1$, $\widetilde J_2$ respectively with respect to the 2-form $\widetilde \Omega_\BB = \widetilde{\Omega} +\widetilde{\bf B}$.

{\bf Momentum map and reduction}. 
On $\widetilde Q=\SO(3)$ we will work with the basis $\mathfrak{B}_{\mbox{\tiny{$T\,\SO(3)$}}}= \{X_0= \gamma_1 X_2 - \gamma_2X_1, \widetilde{Y}_1 := X_3 , \widetilde{Y}_2 := \langle \vecgamma, {\bf X}\rangle \}$, its dual basis $\mathfrak{B}_{\mbox{\tiny{$T^*\!\SO(3)$}}}= \{X^0, \widetilde{Y}^1, \widetilde{Y}^2\}$ and the associated coordinates $(g, p_0,p_1,p_2)$ on $T^*\SO(3)$.

Let us now consider the momentum map $\widetilde J:T^*\SO(3) \to \mathfrak{f}^*$ so that, for each $\mathfrak{f}$-valued function on $\SO(3)$  such that $\eta = h_1{\bf e}_1 + h_2{\bf e}_2$ with $h_i\in C^\infty(\SO(3))$, we have that 
$$
\langle \widetilde J, \eta \rangle =  {\bf i}_{\eta_{\mbox{\tiny{$T^*\!\SO(3)$}}}} \Theta_{\mbox{\tiny{$\SO(3)$}}} = h_1p_1 + h_2p_2.
$$

Now, let $\{\mu^1, \mu^2\}$ be $\mathfrak{f}^*$-valued functions dual to ${\eta_1,\eta_2}$ defined in \eqref{Solid:HGS}. If $\mu = c_1\mu^1 + c_2\mu^2$ for $c_1, c_2 \in \R$ then  $\widetilde J^{-1}(\mu) = \{(g, p_0,p_1,p_2) \ : \  \widetilde f_j^ip_i=c_j\}$
 and hence the manifold $\widetilde J^{-1}(\mu)$ is determined by the coordinates $(g,p_0)$.  The quotient by the action of the Lie group $F$ defines the manifold $\widetilde J^{-1}(\mu)/F$ that is diffeomorphic to $T^*S^1$ and given by the coordinates $(\gamma_3, p_0)$.  Following Theorem \ref{T:MW} and since $\iota_\mu^* \widetilde\Omega_\BB = \iota_\mu^*(\Omega_{\mbox{\tiny{$\widetilde Q$}}} + \widetilde J_i d\widetilde{\mathcal{Y}}^i )$, we obtain that
  $$
 \omega_\mu^\BB = X^0\wedge dp_0,
 $$ 
 which is the canonical 2-form on $T^*S^1$ for $X^0 = \frac{d\gamma_3}{1-\gamma_3^2}$, recalling that $\gamma_3 \neq \pm 1$.

Therefore, as a consequence of Theorem~\ref{T:mu-identification} and Theorem~\ref{T:LeavesOfBracket}, we conclude that the reduced almost Poisson manifold $(\M, \{\cdot, \cdot\}_\red^\BB)$, given in Sec.~\ref{S:NHBracket}, is in fact a Poisson manifold (as it was shown in \cite{BalSolids,GNM,Ramos-04}) with symplectic leaves symplectomorphic to $(T^*S^1, \Omega_{\mbox{\tiny{$S^1$}}})$.

\end{document}